\documentclass[11pt]{article}

\usepackage{colortbl}
\usepackage{booktabs}\usepackage[table]{xcolor}
\usepackage{mathtools} \usepackage{dsfont}
\usepackage{algpseudocode}
\usepackage{mathrsfs,verbatim,rotate}
\usepackage{graphicx}
\usepackage{amssymb,amsfonts, amsmath, amsthm}
\usepackage{graphics}
\usepackage{color}
\usepackage{float}
\usepackage{lscape}
\usepackage{setspace}
\usepackage{subfigure}
\usepackage{natbib}
\usepackage{multirow}
\usepackage{enumerate}
\usepackage{array}

\definecolor{darkgreen}{rgb}{0.0, 0.5, 0.0}
\usepackage[hidelinks, colorlinks = true, linkcolor = darkgreen, urlcolor = blue, citecolor = blue]{hyperref}
\usepackage{prodint}
\usepackage{thmtools}
\usepackage{changepage}
\usepackage[shortlabels]{enumitem}

\usepackage[mathscr]{euscript}
\usepackage[bottom,flushmargin,hang,multiple]{footmisc}
\usepackage[letterpaper, margin=1in]{geometry}

\definecolor{lightgray}{gray}{0.9}

\declaretheorem[name=Theorem]{thm}

\declaretheorem[name=Lemma]{lemma}

\newcommand{\E}{\mathbb{E}}

\newcommand{\miid}{\,|\,}

\makeatletter
\def\namedlabel#1#2{\begingroup
	#2%
	\def\@currentlabel{#2}%
	\phantomsection\label{#1}\endgroup
}
\makeatother

\newcolumntype{x}[1]{%
	>{\centering\hspace{0pt}}p{#1}}%

\AfterEndEnvironment{thm}{\noindent\ignorespaces}
\usepackage{mathbbol}
\usepackage{graphics,amsmath,pstricks,enumitem,float}
\usepackage{amssymb}
\usepackage{amsbsy,amsmath,amsthm,amsfonts, amssymb}
\usepackage{graphicx, rotate, array}  
\usepackage{geometry,natbib,setspace,multirow}
\usepackage{algorithm}
\usepackage{mathalpha}

\textwidth=6.8in
\textheight=9.2in
\parskip=.025in
\parindent=0.5in
\oddsidemargin=-0.2in
\evensidemargin=-.2in
\headheight=0in

\allowdisplaybreaks

\newcounter{algsubstate}

\title{Debiased machine learning for counterfactual survival functionals\\ based on left-truncated right-censored data}
\author{Eric R. Morenz{\small $^{1*}$}, Charles J. Wolock{\small $^{2*}$}, Marco Carone{\small $^{3}$}\\[1em]
	\small $1$: Canada's Drug Agency\\
	\small $2$: Department of Biostatistics, Epidemiology and Informatics, University of Pennsylvania\\
	\small $3$: Department of Biostatistics, University of Washington\\[0.5em]
	\small $*$: Contributed equally to this work.
}
\date{}
\begin{document}
	
	\maketitle
	
	\vspace{1in}
	\begin{abstract}
		Learning causal effects of a binary exposure on time-to-event endpoints can be challenging because survival times may be partially observed due to censoring and systematically biased due to truncation. In this work, we present debiased machine learning-based nonparametric estimators of the joint distribution of a counterfactual survival time and baseline covariates for use when the observed data are subject to covariate-dependent left truncation and right censoring and when baseline covariates suffice to deconfound the relationship between exposure and survival time. Our inferential procedures explicitly allow the integration of flexible machine learning tools for nuisance estimation, and enjoy certain robustness properties. The approach we propose can be directly used to make pointwise or uniform inference on smooth summaries of the joint counterfactual survival time and covariate distribution, and can be valuable even in the absence of interventions, when summaries of a marginal survival distribution are of interest. We showcase how our procedures can be used to learn a variety of inferential targets and illustrate their performance in simulation studies.
	\end{abstract}
	
	\newpage
	\doublespacing
	
	\section{Introduction}
	\label{s:intro}
	In biomedical studies, the outcome of interest is often the time elapsed between an initiating event  and a terminating event. For example, investigators may wish to study the time from some exposure or treatment (e.g., administration of vaccine) until a particular clinical event (e.g., onset of symptomatic disease). In particular, they may be interested in determining the effect of a treatment on the event time. Even in the context of a randomized trial, in which the design ensures that the relationship between treatment and event time is unconfounded, the analysis of time-to-event data remains challenging because event times are typically only partially observed in some study participants. Indeed, some participants may exit the study during the course of follow-up, or may not yet have experienced the event of interest by the end of the study, in which case their event times are right-censored. Right censoring complicates the identification of the time-to-event distribution --- notably, required assumptions about the censoring mechanism may fail to hold even in randomized trials --- and ensuing procedures for assumption-lean statistical inference are also much more involved. The problems that arise due to incomplete observation of terminating events are compounded when the study does not include randomization, in which case appropriate deconfounding, whenever possible, must also be incorporated into statistical procedures.
	
	In many observational studies, in addition to right censoring, the available data are subject to left truncation, wherein only participants for whom the event time is larger than a corresponding truncation time can be recruited into the study. This may occur, for example, due to delayed entry into a prospective study or to the use of a cross-sectional sampling scheme. Unlike censoring, which results in partially observed data but has no bearing on who may be sampled, truncation implies a restriction on the sampling mechanism, and usually renders the sampling population biased relative to the target population. Indeed, truncation induces systematic selection bias into the study design, with an over-representation of participants with a longer event time. Failure to account for left truncation can result in severely biased inferences and misleading scientific conclusions --- \cite{wolfson2001nejm} provides a compelling example of such bias in the medical literature.
	
	While the field of survival analysis is mature, with many decades of rigorous methodological developments pertaining to the analysis of left-truncated right-censored data, most existing works have relied heavily either on semiparametric and parametric modeling assumptions, or on strong uninformativeness assumptions about the censoring and truncation mechanisms. Furthermore, while there has been a growing literature at the intersection of survival analysis and causal inference, the focus has been almost exclusively on data subject to right censoring without truncation --- see, e.g., \cite{westling2021inference} for a sampling of such existing methods. In this work, we contribute to addressing this gap by developing novel nonparametric statistical methods for estimating causal effect summaries with left-truncated right-censored data.
	
	In the developments below, we propose debiased machine learning techniques for nonparametric inference on smooth summaries of a counterfactual time-to-event distribution using left-truncated right-censored data. The class of summaries we consider is broad and includes, in particular, commonly reported estimands, such as survival probabilities, restricted means, and quantiles, as well as more complex functionals. Notably, the methods we develop allow informative censoring and truncation insofar as can be explained by recorded covariates --- in other words, the censoring and truncation mechanisms may be covariate-dependent. They also allow the use of flexible learning algorithms for estimating involved nuisance functions without compromising the calibration of resulting statistical inferences. This is desirable since the use of such algorithms can mitigate the risk of systematic bias possibly resulting from inconsistent estimation of such nuisance functions.
	
	We note that \cite{wang2022doubly} recently made important advances in the development of debiased machine learning methods for use with left-truncated data. However, their work focuses on inference for a marginal (rather than counterfactual) survival function, and their procedures neither facilitate flexible estimation of the censoring mechanism nor generally achieve the efficiency bound in the presence of right censoring. As such, our work extends theirs by both including consideration of summaries of a counterfactual time-to-event distribution and restoring efficiency even in the presence of both left truncation and right censoring. We also note that our work can be seen as a natural generalization of the recent work of \cite{westling2021inference}, which develops flexible techniques for nonparametric efficient inference on a counterfactual survival function using right-censored data without truncation. While traditional risk set-based methods for the analysis of right-censored data can often be effortlessly extended to the analysis of left-truncated right-censored data, this is not necessarily the case for other methods, including those based on influence functions, as in this work.
	
	This article is organized as follows. In Section~\ref{s:id}, we define the survival integral, our estimand of interest, and discuss its identification in contexts in which the time-to-event random variable is observed subject to possibly covariate-dependent left truncation and right censoring. In Section~\ref{s:param}, we derive a linearization of the survival integral parameter viewed as a functional of the observed data distribution. In Section~\ref{s:est}, we use this linearization to construct two distinct cross-fitted inferential procedures that explicitly allow the incorporation of machine learning methods. In Section~\ref{s:inf}, we establish certain large-sample properties of the proposed procedures, including both pointwise and uniform distributional results, and extend these results to a larger class of smooth survival functionals.  In Section~\ref{s:ex}, we discuss several analytic examples of our general results, whereas in Section~\ref{s:sim}, we present results from numerical experiments to illustrate the operating characteristics of our procedures. We conclude with final remarks in Section~\ref{s:concl}. All technical proofs are provided in Part \hyperref[parta]{A} of the Appendix.
	
	\section{Statistical setup and identification}
	\label{s:id}
	\subsection{Notation and examples}
	
	The ideal data unit is $X:=(T,C,W,A,Z)\sim P_{X,0}$, where $Z\in\mathscr{Z}\subseteq\mathbb{R}^p$ denotes a vector of baseline covariates, $A\in\{0,1\}$ is a binary exposure level indicator, $W\in[0,\infty)$ and $C\in(0,\infty)$ are the truncation and censoring times, respectively, and the event time (or survival time) is $T\in (0,\infty)$. Here, $P_{X,0}$ denotes the true (unknown) distribution of $X$ in the \emph{target} population. For a fixed $a_0\in\{0,1\}$ and known kernel function $\varphi:\mathbb{R}\times\mathbb{R}^p\rightarrow\mathbb{R}$, we begin by studying inference on the survival integral \begin{align}
		\upsilon_0:=\iint \varphi(t,z)\,F_{X,0}(dt\,|\,a_0,z)\,H_{X,0}(dz)\ ,\label{eq:target}
	\end{align} where we define $F_{X,0}(t\,|\,a,z):=P_{X,0}\left(T\leq t\,|\, A=a,Z=z\right)$ and $H_{X,0}(z):=P_{X,0}\left(Z\leq z\right)$ pointwise. We note here that the ideal data unit may have been taken to simply be $(T,A,Z)$ since the estimand of interest depends only on the conditional distribution of $T$ given $(A,Z)$ and on the marginal distribution of $Z$, and if neither truncation nor censoring act on the data unit, the value of $(C,W)$ is irrelevant. Nevertheless, for notational convenience in developments below, we define $X$ to also include $(C,W)$.
	
	Survival integrals encompass several estimands of interest at the intersection of survival analysis and causal inference. Under typical causal conditions, including that, within each stratum of $Z$, the counterfactual event time $T(a_0)$ corresponding to the intervention that sets $A=a_0$ is independent of $A$, and that $A=a_0$ occurs with positive probability, $\upsilon_0$ identifies the counterfactual mean value $\mathbb{E}_0\left[\varphi(T(a_0),Z)\right]$ computed under the joint distribution $\mathbb{P}_0$ of $(T(a_0),Z)$. Various choices of $\varphi$ yield different causal estimands of practical interest. As a special case, by considering the exposure $A$ to be degenerate at $a_0$, the survival integral trivially corresponds instead to moments of the joint distribution of $(T,Z)$, estimands that arise in traditional survival analyses. Specific examples of estimands that motivate our work and are later discussed include:
	\begin{enumerate}[(1)]
		\item the marginal survival probability $P_{X,0}\left(T>t\right)$;\label{ch1ex1}\vspace{-.075in}
		\item the Brier score $E_{X,0}\left\{I(T\geq \tau)-b(Z)\right\}^2$ of a given function $b:\mathscr{Z}\rightarrow [0,1]$ for predicting survival at time $\tau$ \citep{brier1950verification, gerds2006consistent};\label{ch1ex2}\vspace{-.075in}
		\item the counterfactual survival probability $\mathbb{P}_0\left\{T(a_0)>t\right\}$ \citep{westling2021inference}.\label{ch1ex3}
	\end{enumerate}
	Later, we build upon our results on survival integrals to develop inferential methods for nonlinear survival functionals. This extension allows us to tackle many more estimands of interest. Examples of such estimands that we study in greater detail include:
	\begin{enumerate}[(1)]\setcounter{enumi}{3}
		\item the median counterfactual event time \citep{diaz2017efficient};\label{ch1ex4}\vspace{-.075in}
		\item a model-agnostic measure of dependence of $T(a_0)$ on $a_0$ \citep{vansteelandt2022assumption}.\label{ch1ex5}
	\end{enumerate}
	Example 1 is the primary target of inference in classical survival analysis, although here we wish to allow possibly covariate-dependent censoring and truncation. Example 2 arises in the evaluation of prediction models in survival analysis, and emphasizes the value of allowing the kernel value $\varphi(t,z)$ to depend on both $t$ and $z$. Examples 3 and  4 are commonly reported summaries of the counterfactual survival distribution. Example 5 is a novel parsimonious measure of the causal effect of $A$ on $T$ inspired by recent work on assumption-lean Cox regression \citep{vansteelandt2022assumption}.
	
	We refer to $X$ as an ideal data unit because statistical inference would be straightforward if it were directly observed. However, in practice, this is rarely the case. In many prospective cohort studies, the sampling distribution of $X$ is a systematically biased version of its target distribution due to left truncation resulting, for example, from delayed entry or cross-sectional sampling. Additionally, only a coarsened version $O$ of $X$ is observed due to right censoring. Thus, in order to derive procedures for statistical inference, we must first represent $\upsilon_0$, which is explicitly defined as a summary of the \emph{ideal data} distribution $P_{X,0}$, as a summary of the \emph{observed data} distribution $P_0$, thereby establishing identification under suitable conditions. We refer to the distribution $P_0$ as defining the \emph{observable} population.

	\subsection{Identification}
	\label{ss:id}
	
	The observed data structure is $O:=(Y,\Delta, W,A,Z)\sim P_0$ with $Y:=\min(T,C)$ and $\Delta:=I(T\leq C)$, and results from left truncation and right censoring of the ideal data unit $X$. Specifically, only individuals with $Y\geq W$ can be sampled --- those who are neither censored nor experience the terminating event before possible recruitment into the study --- and the event time $T$ is subject to right censoring by $C$.  We assume that $C\geq W$ with $P_{X,0}-$probability one, since we are interested primarily in settings in which censoring is a study-induced nuisance and only operates on individuals who can possibly be recruited into the study. As such, the sampling conditions $Y\geq W$ and $T\geq W$ are equivalent. The observed data distribution $P_0$ is obtained from the target population distribution $P_{X,0}$ through the relationship 
	\begin{align*}
		&P_0\left(dy,d\delta,dw,da,dz\right)\\
		&= \frac{I(y\geq w)}{P_{X,0}\left(T\geq W\right)}\left\{\delta \int_{c\geq y}P_{X,0}\left(dy,dc,dw,da,dz\right)+(1-\delta)\int_{t\geq y}P_{X,0}\left(dt,dy,dw,da,dz\right)\right\}\nu(d\delta)
	\end{align*} for $\delta\in\{0,1\}$, where $\nu$ is the counting measure on $\{0,1\}$. As indicated above, even in the absence of censoring, the sampling distribution $P_0$ of the observed data unit does not coincide with the target distribution $P_{X,0}$ because individuals with $T<W$ are systematically excluded. Individuals with larger values of $T$ are therefore over-represented in the observable population relative to the target population. Throughout this article, the observed data consist of $n$ independent draws $O_1,O_2,\ldots,O_n$ from $P_0$.
	
	We now consider the problem of recovering $P_{X,0}$ from $P_0$ on relevant portions of its support, some of which may not be fully recoverable. For example, right censoring often precludes the identification of the right tail of the time-to-event distribution. Nevertheless, $\upsilon_0$ may still be identified. To formalize these issues, we first define for a generic random variable $B$ the lower and upper support bounds\begin{align*}
		\underline{\tau}_B(a,z)\ &:=\ \sup\{u:P_{X,0}\left(B\geq u\,|\,A=a,Z=z\right)=1\}\,;\\
		\overline{\tau}_B(a,z)\ &:=\ \sup\{u:P_{X,0}\left(B\geq u\,|\,A=a,Z=z\right)>0\}\,.
	\end{align*} We also denote by $\overline{\tau}_C(a,z):=\sup\{u:P_{X,0}\left(C\geq u\,|\,A=a,Z=z, T\geq W\right)>0\}$ the upper bound for the support of the censoring distribution, and by $\pi_{X,0}(z):=P_{X,0}(A=a_0\,|\,Z=z)$ the propensity score for each $z\in\mathscr{Z}$. We make the following support recovery conditions for identifiability:
	\begin{enumerate}[({A}1)]
		\item \label{ass:support} for $a\in\{0,1\}$ and $P_{X,0}-$almost every value $z\in\mathscr{Z}$, it holds that:\vspace{-.1in}
		\begin{enumerate}[(i)]\itemsep-0.05in
			\item $\underline{\tau}_W(a,z)\leq \underline{\tau}_T(a,z)\,;$
			\item $\overline{\tau}_W(a,z) + \alpha \leq \min\{\overline{\tau}_T(a,z),\overline{\tau}_C(a,z)\}$ for some $\alpha > 0$\,;
			\item  $t\mapsto \varphi(t,z)\text{ is constant for }t\geq \overline{\tau}_C(a,z)\,;\vspace{.0in}$
		\end{enumerate}
		\item for $P_{X,0}-$almost every value $z\in\mathscr{Z}$, it holds that $\pi_{X,0}(z)>0$.\label{ass:positivity}
	\end{enumerate}
	These conditions can be interpreted heuristically as follows. First, if $\underline{\tau}_W(a,z)> \underline{\tau}_T(a,z)$, then individuals with exposure $A=a$, covariate vector $Z=z$ and event time $T=t$ such that $\underline{\tau}_T(a,z)\leq t<\underline{\tau}_W(a,z)$ are systematically excluded in the observable population. As such, the target conditional time-to-event distribution function $F_{X,0}(t\miid a,z)$ cannot be identified for any $t>0$, and  neither can $\upsilon_0$ if the set of such values has positive $P_{X,0}-$probability. Second, if $\min\{\overline{\tau}_T(a,z),\overline{\tau}_C(a,z)\} < \overline{\tau}_W(a,z)$, then individuals with exposure level $A=a$, covariate vector $Z=z$ and truncation time $W=w$ such that $\min\{\overline{\tau}_T(a,z),\overline{\tau}_C(a,z)\}\leq w<\overline{\tau}_W(a,z)$ are systematically excluded in the observable population. As such, the right tail of the target conditional truncation distribution function cannot be identified. This is problematic because, as we will see below, identification of the marginal covariate distribution $H_{X,0}$ --- and thus of $\upsilon_0$ --- hinges on that of the conditional truncation distribution. Third, for any $z\in\mathscr{Z}$, $F_{X,0}(t\,|\,a,z)$ can only be identified up to $\overline{\tau}_C(a,z)$ since values of $T$ above $\overline{\tau}_C(a,z)$ can never be observed, and so, unless $t\mapsto\varphi(t,z)$ is constant for $t>\overline{\tau}_C(a,z)$ and $P_{X,0}-$almost every $z\in\mathscr{Z}$, $\upsilon_0$ also cannot typically be identified. These facts motivate the need for condition \ref{ass:support}. Finally, it is necessary that $\pi_{X,0}(z)>0$ in order to be able to learn $F_{X,0}(t\,|\,a,z)$ without relying on extrapolating assumptions, since otherwise no inference could ever be made from the subpopulation defined by $(A,Z)=(a,z)$; this motivates condition \ref{ass:positivity}.
	
	Beyond support recovery conditions, identification hinges fundamentally on the vector $Z$ of baseline covariates being sufficiently rich to account for any dependence between $T$ and $(C,W)$. Specifically, we introduce the following additional conditions on the censoring and truncation mechanisms:
	\begin{enumerate}[({B}1)]
		\item $T$ and $W$ are independent given $(A,Z)$ $P_{X,0}-$almost surely;\label{ass:truncation}\vspace{-.075in}
		\item $T$ and $C$ are independent given $(W,A,Z)$ and $T\geq W$ $P_{X,0}-$almost surely.\label{ass:censoring}
	\end{enumerate}
	We note that distributional constraints on $C$ are only imposed in the observable population, that is, the subpopulation of individuals for whom $T\geq W$. In fact, $C$ need not even be defined for individuals with $T<W$ since censoring only ever affects those with $T\geq W$. Under conditions \ref{ass:support}--\ref{ass:positivity} and \ref{ass:truncation}--\ref{ass:censoring}, $P_{X,0}$ may be expressed in terms of $P_0$. For $0<t<\min\{\overline{\tau}_T(z),\overline{\tau}_C(z)\}$, the target conditional distribution function $F_{X,0}(t\,|\,a_0,z)$ is identified via conditional product-integration~\citep{gill1990survey} by
	\begin{align*}
		\tilde F_{0}(t\,|\,a_0,z) := 1-\Prodi_{u\leq t}\left\{1-\frac{F_{1,0}(du\,|\, a_0,z)}{R_0(u\,|\,a_0,z)}\right\},
	\end{align*}where $F_{1,0}(u\,|\,a,z):=P_0\left(Y\leq y,\Delta=1\,|\,A=a,Z=z\right)$ is an observable conditional subdistribution function and $R_0(u\miid a,z) := P_0\left(Y\geq u\geq W\,|\,A=a,Z=z\right)$ is an observable conditional at-risk probability. As indicated above, the target conditional truncation distribution function, defined pointwise as $G_{X,0}(w\,|\,a,z):=P_{X,0}\left(W\leq w\,|\,A=a,Z=z\right)$, is needed to recover the target covariate distribution. For any $w\geq 0$, it can be expressed as $G_{X,0}(w\miid a,z) \propto_w \int_{u\leq w}S_{X,0}(u\miid a,z)^{-1}G_0(du\miid a,z)$, where we denote the observable conditional truncation distribution function by $G_0(w\miid a,z) := P_0\left(W\le w\miid A=a, Z=z\right)$, the survival function corresponding to $F_{X,0}$ by $S_{X,0}:=1-F_{X,0}$, and $\propto_w$ refers to proportionality in $w$ for fixed $a$ and $z$. The identification of $F_{X,0}$ over $[0,\min\{\overline{\tau}_T(a,z),\overline{\tau}_C(a,z)\})$ --- and thus, by condition \ref{ass:support}, over its subset $[0,\overline{\tau}_W(a,z))$ --- then implies that of $G_{X,0}$. The target exposure-covariate distribution function $J_{X,0}(a,z):=P_{X,0}\left(A\leq a,Z\leq z\right)$ can be expressed as a reweighted version of its observable counterpart; indeed, we have that \begin{align}
		J_{X,0}(da,dz)\ &\propto_{a,z}\ \frac{J_0(da,dz)}{\int S_{X,0}(u\miid a,z)\,G_{X,0}(du\miid a,z)}\notag\\
		&\propto_{a,z}\ J_0(da,dz)\int \frac{G_0(du\,|\,a,z)}{1-\tilde F_0(u\,|\,a,z)}\label{eq:truncation} 
	\end{align} with $J_0(a,z):=P_0\left(A\leq a,Z\leq z\right)$ denoting the observable exposure-covariate distribution function; here, $\propto_{a,z}$ refers to proportionality in $(a,z)$. In particular, \eqref{eq:truncation} implies an identification $\tilde{H}_0$ of the target covariate distribution $H_{X,0}$ using that $H_{X,0}(dz)= \int_a J_{X,0}(da,dz)$ for each $z$. These expressions suffice to identify $\upsilon_0$ as the summary $\iint \varphi(t,z)\,\tilde{F}_{0}(dt\,|\,a_0,z)\,\tilde{H}_{0}(dz)$ of the observed data distribution $P_0$. Rather than focusing on the ideal data distribution of the censoring random variable, which as discussed earlier need not even be defined, we note that the observable conditional censoring survival function $Q_0$, defined pointwise as $Q_0(c\,|\,w,a,z):=P_0\left(C>c\,|\,W=w,A=A,Z=z\right)$ can be identified using conditional product-integration as used in $\tilde{F}_0$ but without truncation and reversing the value of $\Delta$ to $1-\Delta$. Additional details on these identification results are provided in Part \hyperref[partb]{B} of the Appendix.

	\section{Study of the target parameter}
	\label{s:param}
	
	The above identification formulas motivate us to study the observed data parameter 
	\begin{equation}\label{eq:id}
		\Psi:P\mapsto\iint \varphi(t,z)\,\tilde{F}_P(dt\,|\,a_0,z)\,\tilde{H}_P(dz)\ .
	\end{equation} Here, $\tilde{F}_P$ is defined pointwise as \begin{equation*}
		\tilde{F}_P(t\,|\,a_0,z):=1-\Prodi_{u\leq t}\left\{1-\frac{F_{1,P}(du\,|\,a_0,z)}{R_P(u\,|\,a_0,z)}\right\},
	\end{equation*}where $F_{1,P}$ and $R_P$ are defined as $F_{1,P}(u\,|\,a,z):=P\left(Y\leq u,\Delta=1\,|\,A=a,Z=z\right)$ and  $R_P(u\,|\,a,z):=P\left(Y\geq u\geq W\,|\,A=a,Z=z\right)$ pointwise, respectively. Further, $\tilde{H}_P$ is defined pointwise as $\tilde{H}_P(dz):=\int_a \bar{\gamma}_P(a,z)\,J_P(da,dz)$ with $J_P(a,z):=P\left(A\leq a,Z\leq z\right)$, $\gamma_P(a,z):=\int \tilde{S}_P(w\,|\,a,z)^{-1}G_P(dw\,|\,a,z)$, $\gamma_P:=\iint \gamma_P(a,z)\,J_P(da,dz)$ and $\bar{\gamma}_P(a,z):=\gamma_P(a,z)/\gamma_P$, where $\tilde{S}(t\,|\,a,w):=1-\tilde{F}(t\,|\,a,w)$ is the survival function corresponding to $\tilde{F}_P$ and $G_P(w\,|\,a,z):=P\left(W\leq w\,|\,A=a,Z=z\right)$ is the observable conditional truncation distribution function. Under conditions \ref{ass:support}--\ref{ass:positivity} and \ref{ass:truncation}--\ref{ass:censoring}, the survival integral $\upsilon_0$ is identified by $\psi_0:=\Psi(P_0)$. Thus, in the remainder of this article, we focus on developing inferential methods for $\psi_0$ and related estimands.

	We wish to employ flexible learning strategies to avoid unnecessarily strong modeling assumptions on the data-generating mechanism $P_0$. As such, in order to carry out valid nonparametric efficient inference, we develop debiased machine learning methods for this problem. As a first step, we derive a linearization of the parameter mapping $\Psi$ around $P_0$ based on the nonparametric efficient influence function of $\Psi$ at $P=P_0$~\citep{pfanzagl1985contributions}. This linearization is critical for guiding the construction of our estimation procedure and elucidating the conditions under which this procedure has desirable statistical properties.

	Before tackling the problem in its generality, it is instructive to first examine the simpler setting in which the support $\mathscr{Z}$ of the covariate vector is finite. In such case, for any fixed $z_0\in\mathscr{Z}$, the inner integral $\int \varphi(t,z)\,F_0(dt\,|\,a_0,z_0)$  can be estimated nonparametrically under the conditions we have introduced so far using the stratum-specific Kaplan-Meier integral $\int \varphi(t,z)\,F_n(dt\,|\,a_0,z_0)$, where $F_n(t\,|\,a_0,z)$ denotes the Kaplan-Meier estimator of $F_0(t\,|\,a_0,z_0)$ computed using only data from stratum $(A,Z)=(a_0,z_0)$. Under certain regularity conditions, this stratum-specific Kaplan-Meier integral can be shown to be regular and asymptotically linear~\citep{reid1981censored, stute1994bias} with influence function given by \[(z,a,w,\delta,y)\mapsto \frac{I(a=a_0,z=z_0)}{P\left(A=a_0,Z=z_0\right)}\phi_{\text{KM},P}(L_{P,\varphi})(z,a,w,\delta,y)\ ,\]where for any $P$ we denote by $L_{P,\varphi}:(y,a,z)\mapsto\int_y^\infty \tilde S_{P}(u\miid a,z)\varphi(du,z)$ and we define pointwise, for any function $m:\mathscr{Y}\times\{0,1\}\times\mathscr{Z}\rightarrow\mathbb{R}$, \[\phi_{\text{KM},P}(m)(z,a,w,\delta,y):=-\frac{\delta m(y,a,z)}{R_P(y\miid a,z)} + \int I_{[w,y]}\,(u)\frac{m(u,a,z)}{R_P(u\miid a,z)}\,\tilde{\Lambda}_P(du\miid a,z)\ \, \]with $\tilde{\Lambda}_P(t\,|\,a,z):=\int_{u\leq t}\tilde{S}_P(u\,|\,a,z)^{-1}\tilde{F}_P(du\,|\,a,z)$ denoting the cumulative hazard function corresponding to $\tilde{F}_P$. In our constructions and theoretical results below, the function $\phi_{\text{KM},P}$ appears prominently, as we will now see.
	
	Our linearization results involve additional notation that we now introduce. We define the observable survival regression $\mu_{P,\varphi}(z) := \int \varphi(t,z)\,\tilde{F}_{P}(dt\miid a_0,z)$, the observable propensity score $\pi_P(z) := P\left(A=a_0\miid Z=z\right)$, and the partial truncation weight function \[\gamma_{P,\natural}(y,a,z) := \int I_{[y,\infty)}(u)\, \tilde S_P(u\miid a,z)^{-1}G_P(du\miid a,z)\ .\] As we establish in the following theorem, the nonparametric linearization of $\Psi$ hinges critically on the nonparametric efficient influence function of $\Psi$ at $P$, which can be written as $\phi_{P}:=\phi_{1,P}+\phi_{2,P}$ with
	\begin{align*}
		\phi_{1,P}&:(z,a,w,y,\delta)\mapsto\frac{I(a=a_0)}{\pi_P(z)}\bar{\gamma}_P(z)\phi_{\text{KM},P}(L_{P,\varphi})(z,a,w,y,\delta)\\
		\phi_{2,P}&:(z,a,w,y,\delta)\mapsto\frac{\mu_{P,\varphi}(z) - \Psi(P)}{\gamma_P}\left\{\frac{1}{\tilde S_{P}(w\miid a,z)} - \phi_{\text{KM},P}(\gamma_{P,\natural})(z, a,w,y,\delta)\right\}\,,
	\end{align*}
	and where $\bar\gamma_P(z) :=\sum_a \bar\gamma_P(a,z)\pi_P(a\miid z)$.
	The nonparametric linearization of $\Psi(P)$ around $P=P_0$ involves a second-order remainder term that can be written as $R(P,P_0):=R_1(P,P_0)+R_2(P,P_0)+R_3(P,P_0) + R_4(P,P_0) $ with
	
	\begin{align*}
		R_1(P,P_0) :=&\, \int L_{P,\varphi}(y,a,z)\left\{\frac{\pi_0(z)\bar\gamma_P(z)\nu_P(y,a,z)}{\pi_P(z)\bar\gamma_0(z)\nu_0(y,a,z)} - 1\right\}\left(\frac{\tilde S_0}{\tilde S_P}-1\right)(dy\miid a_0,z)\,\tilde H_0(dz)\,;\\
		R_2(P,P_0) :=&\, \int\xi_P(z)  \gamma_{P,\natural}(y,a,z)\left\{1- \frac{\nu_P(y,a,z)}{\nu_0(y,a,z)}\right\}\left(\frac{\tilde S_0}{\tilde S_P}-1\right)(dy\miid a,z)\,\tilde H_0(dz)\,;\\
		R_3(P,P_0) :=&\, \int\xi_P(z)  \left\{\frac{1}{\tilde S_0(w\miid a,z)} - \frac{1}{\tilde S_P(w\miid a,z)}\right\}\left(G_P-G_0\right)(dw\miid a,z)\,J_0(da,dz)\\
		&-\int \xi_P(z) \left\{\tilde S_P(y\miid a,z) - \tilde S_0(y\miid a,z)\right\}^2\frac{\tilde G_P(dw\miid a,z)}{\tilde S_0(w\miid a,z)\tilde S_P^2(w\miid a,z)}\,J_0(da,dz)\,;\\
		R_4(P,P_0) :=&\, \left(\frac{\gamma_0 - \gamma_P}{\gamma_0}\right)\left\{R_{4,1}(P,P_0) + R_{4,2}(P,P_0) + R_{4,3}(P,P_0)\right\}\,,
	\end{align*}
	where we define \begin{align*}
		R_{4,1}(P,P_0) :=&\, \iint \xi_P(z) \left\{ \gamma_P(a,z) - \gamma_0(a,z)\right\}  J_0(da,dz)\,;\\
		R_{4,2}(P,P_0) :=&\, \int\sum_{a\in \{0,1\}} \xi_P(z)  \gamma_P(a,z) \left\{\pi_P(a\miid z) - \pi_0(a\miid z)\right\}H_0(dz)\,;\\
		R_{4,3}(P,P_0) :=&\,\int\sum_{a\in \{0,1\}} \xi_P(z) \pi_P(a\miid z) \gamma_P(a,z) \left(H_P - H_0\right)(dz)\,;
	\end{align*} and also write $\xi_P(z) := \{\mu_{P,\varphi}(z) - \Psi(P)\}/\gamma_P$ and $\nu_P(y,a,z) := \tilde S_P(y\miid a,z)/R_P(y\miid a,z)$.
	
	\begin{thm} 
		\label{thm:if} 
		Suppose that conditions \ref{ass:support}--\ref{ass:positivity} and \ref{ass:truncation}--\ref{ass:censoring} hold. Then, the survival integral parameter $P\mapsto \Psi(P)$ is pathwise differentiable in a nonparametric model with efficient influence function $\phi_{P}$, and for each $P$, admits the linearization
		\[\Psi(P) -\Psi(P_0) =\int \phi_P(o)\,(P-P_0)(do)+R(P,P_0)\ .\]
	\end{thm}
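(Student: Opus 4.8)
The plan is to prove the two assertions of the theorem --- pathwise differentiability with efficient influence function $\phi_P$, and the exact second-order expansion --- in two stages, both resting on the same differential calculus for the nuisance maps. For the first assertion, I would fix a $P$ in the model, take a regular one-dimensional submodel $\{P_\epsilon\}_{|\epsilon|<\delta}$ through $P$ with bounded score function $s$, and decompose $s$ along the factorization of the observed-data likelihood into the conditional law of $(Y,\Delta)$ given $(W,A,Z)$, the conditional law of $W$ given $(A,Z)$, the conditional law of $A$ given $Z$, and the marginal law of $Z$. I would then differentiate $\epsilon\mapsto\Psi(P_\epsilon)$ at $\epsilon=0$ via the chain rule applied to the identification formula $\Psi(P)=\iint\varphi\,\tilde F_P(\cdot\miid a_0,z)\,\tilde H_P(dz)$. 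The map $P\mapsto\tilde F_P(\cdot\miid a_0,z)$ factors as $P\mapsto(F_{1,P},R_P)$ followed by conditional product-integration, whose derivative follows from compact differentiability of the product integral together with Duhamel's equation \citep{gill1990survey}; combined with the inverse-propensity and truncation reweighting, this yields the Kaplan--Meier-integral contribution $\phi_{\text{KM},P}(L_{P,\varphi})$ carrying the factor $I(a=a_0)\bar\gamma_P(z)/\pi_P(z)$, in the spirit of \citet{reid1981censored} and \citet{stute1994bias}. The map $P\mapsto\tilde H_P$ is a ratio of reweighted exposure--covariate laws whose weights $\bar\gamma_P(a,z)$ depend on $\tilde S_P$ (hence again on product-integration) and on $G_P$; differentiating it, treating the normalizer $\gamma_P$ by a first-order ratio expansion, produces the $\phi_{2,P}$ contribution with its ``direct'' term $1/\tilde S_P(w\miid a,z)$ and its correction term $\phi_{\text{KM},P}(\gamma_{P,\natural})$. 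Collecting the pieces and writing the derivative as $\int(\cdot)\,s\,dP$ exhibits a gradient; I would then check it has mean zero under $P$ and lies in the observed-data tangent space --- verifying componentwise the martingale-integral structure of $\phi_{\text{KM},P}$, that the propensity contribution is conditionally mean zero given $Z$, and so on --- so that it coincides with the efficient influence function $\phi_P=\phi_{1,P}+\phi_{2,P}$.

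For the exact expansion, since $\int\phi_P\,dP=0$ the displayed identity is equivalent to $R(P,P_0)=\Psi(P)-\Psi(P_0)+\int\phi_P\,dP_0$, which I would take as the definition of $R(P,P_0)$ and expand directly. Using the identification formulas to rewrite $\Psi(P_0)$, $\tilde S_0$, $G_{X,0}$ and $\tilde H_0$ in terms of $P_0$, and expanding $\int\phi_P\,dP_0$ via the definitions of $R_0$, $F_{1,0}$, $G_0$, $J_0$ and $\pi_0$, the main tool is the exact Duhamel identity expressing $\tilde S_P-\tilde S_0$ as an integral against $\tilde\Lambda_P-\tilde\Lambda_0$ with kernel built from $\tilde S_P$ and $\tilde S_0$, together with the analogous exact decompositions of $\gamma_P(a,z)-\gamma_0(a,z)$ and $\tilde H_P-\tilde H_0$ into differences of $\tilde S_P$ versus $\tilde S_0$, $G_P$ versus $G_0$, $J_P$ versus $J_0$, and $\pi_P$ versus $\pi_0$. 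After integration by parts in $t$ and in $w$, the first-order terms cancel against $\int\phi_P\,d(P-P_0)$ and the residual organizes into the four stated blocks: $R_1$ and $R_2$ are the bilinear at-risk/survival cross terms arising from $\phi_{1,P}$ and from the $\gamma_{P,\natural}$ part of $\phi_{2,P}$; $R_3$ comprises the truncation cross term plus the genuinely quadratic remainder involving $\{\tilde S_P-\tilde S_0\}^2$ left over from expanding $1/\tilde S$; and $R_4$ gathers the covariate-distribution, propensity, and normalization cross terms, each carrying the factor $(\gamma_0-\gamma_P)/\gamma_0$ produced by the ratio expansion of $\bar\gamma$.

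I expect the principal obstacle to be bookkeeping rather than any single conceptual difficulty. The product integral defining $\tilde S_P$ enters $\Psi$ in two places --- directly through $\tilde F_P$ and indirectly through $\gamma_P$ and hence $\bar\gamma_P$ --- so its derivative and, in the second stage, its exact Duhamel expansion must be tracked through two distinct kernels simultaneously; and confirming that the expansion is genuinely second order requires showing that every first-order term is matched by a term in $\int\phi_P\,d(P-P_0)$, so that what remains is a sum of products of nuisance discrepancies. Arranging the ratio expansion of $\bar\gamma_P-\bar\gamma_0$ so that its error deposits cleanly into $R_4$, rather than leaving a stray first-order remainder, is the most delicate step.
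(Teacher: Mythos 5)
Your proposal is correct and follows essentially the same route as the paper: the pathwise derivative is obtained by the chain rule on the identification formula using the compact differentiability of the product integral (Duhamel), with the score decomposed so the $\tilde F_P$, $\bar\gamma_P$, and $J_P$ contributions assemble into $\phi_{1,P}+\phi_{2,P}$; and the remainder is verified by expanding $\Psi(P)-\Psi(P_0)+\int\phi_P\,dP_0$ via the exact Duhamel identities for the two kernels $L_{P,\varphi}$ and $\gamma_{P,\natural}$ and matching terms into the four stated blocks, exactly as in the paper's (D)/(L) bookkeeping.
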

	We note that the efficient influence function $\phi_P$ we have provided above agrees with existing results for special cases. In the absence of left truncation, it coincides with results provided in \cite{gerds2017kaplan} for a general survival integral --- this is seen by taking the truncation distribution to be degenerate at zero, in which case $\phi_{2,P}$ simplifies to $(z,a,w,y,\delta)\mapsto \mu_{P,\varphi}(z)-\Psi(P)$ --- and in \cite{westling2021inference} for a counterfactual survival probability, obtained by taking $\varphi(t,z) = I(t\le t_0)$ for a fixed value $t_0$. In the absence of right censoring and any treatment intervention, our result agrees with results presented in~\cite{wang2022doubly} for a marginal survival probability. 
	
	\section{Proposed estimation procedure}
	\label{s:est}
	
	Equation \eqref{eq:id} expresses $\psi_0$ in terms of components of the observed data distribution $P_0$, which are themselves functions of components of the ideal data distribution $P_{X,0}$. An estimator of $\psi_0$ can therefore be obtained by substituting an estimator of relevant components of $P_0$ into $\eqref{eq:id}$, or an estimator of relevant components of $P_{X,0}$, namely $F_{X,0}$ and $H_{X,0}$, into \eqref{eq:target}. Additional parametrizations of $P_0$ --- for example, using a combination of components of $P_0$ and of $P_{X,0}$ --- can also be considered, each leading to a strategy for estimating $\psi_0$ with relative advantages and disadvantages. Here, we consider a particular parametrization that we believe provides a balance between implementability and desirable statistical properties, as we elaborate below. This parametrization is characterized by  the following  combination of components of $P_0$ and $P_{X,0}$:
	
	\begin{enumerate}[1.]\itemsep-0.05in
		\item the observable conditional covariate distribution function $H_0$;
		\item the observable conditional exposure probability $\pi_0$;
		\item the observable conditional truncation distribution function $G_0$;
		\item the observable conditional censoring survival function $Q_0$;
		\item the target conditional time-to-event distribution function $F_{X,0}$ (equivalently, survival function $S_{X,0}$), considered  only on the observable region $[0,\min\{\overline{\tau}_T(a_0,z),\overline{\tau}_C(a_0,z)\})$ for each covariate value $z\in\mathcal{Z}$.
	\end{enumerate} For notational convenience, we denote the vector $(H_0,\pi_0,G_0,Q_0,F_{X,0})$ of nuisance functions as $\eta_0$. We note first that this is indeed a valid parametrization in the sense that two observed data distributions are the same if and only if they agree in this parametrization. We also note that components of this parametrization are variationally-independent in the sense that fixing the value of a subset of components of $\eta_0$ does not constrain the values that the remaining components of $\eta_0$ can take. Additional details on this parametrization are provided in Part \hyperref[partc]{C} of the Appendix. The survival integral value $\psi_0$ can be expressed in terms of $\eta_0$ as \begin{equation}\label{eq:estimationID}\frac{\iiint \varphi(t,z)\,F_{X,0}(dt\miid a_0,z)\int S_{X,0}(u\miid a,z)^{-1}G_{0}(du\miid a,z)\,J_0(da,dz)}{\iiint S_{X,0}(u\miid a,z)^{-1}G_{0}(du\miid a,z)\,J_0(da,dz)}\end{equation} with $J_0(da,dz)=[(1-a)\{1-\pi_0(z)\}+a\pi_0(z)]\,H_0(dz)\,\nu(da)$ itself a function of $\eta_0$. Similarly, the efficient influence function $\phi_0:=\phi_{P_0}$ of $P\mapsto \Psi(P)$ under sampling from $P_0$ can be expressed as a function of $\eta_0$ using the fact that we can write \[ R_{0}(u\miid a,z) :=  S_{X,0}(u\miid a,z)\int_0^u \frac{Q_0(u\miid w,a,z)}{S_{X,0}(w\miid a,z)}\,G_0(dw\miid a,z)\ .\] We write $\psi_0=\psi_{\eta_0}$ and $\phi_0=\phi_{\eta_0}$ to emphasize that $\psi_0$ and $\phi_0$ can be computed based on $\eta_0$. This parametrization is convenient for the purpose of estimation since, on one hand, $H_0$, $\pi_0$ and $G_0$ can be estimated using off-the-shelf regression algorithms based on the observed data, and on the other hand, $Q_0$ and $F_{X,0}$ can be estimated using regression methods for survival data subject to right censoring or right censoring and left truncation, respectively. Furthermore, it facilitates the construct of estimators of $\psi_0$ that enjoy certain robustness properties, as discussed in Section \ref{s:inf}. 
	
	We wish to incorporate flexible learning strategies in our estimation procedure to minimize the risk of systematic bias stemming from the use of misspecified parametric or semiparametric nuisance models. Once an estimator $\eta_n$ of $\eta_0$ is obtained, the naive plug-in estimator $\psi_{\eta_n}$, obtained by replacing $\eta_0$ by $\eta_n$ in the form of $\psi_{\eta_0}$, could be considered. We refer to such an estimator as naive since in general $\eta_n$ need not be tailored to the end goal of estimating $\psi_0$. Furthermore, if $\eta_n$ is estimated flexibly, it is often the case that $\psi_{\eta_n}$ is overly biased and fails to even be $n^{\frac{1}{2}}$--consistent. Debiasing tools are typically used to address this challenge. Here, we employ the one-step debiasing approach based on the efficient influence function \citep{ibragimov1981statistical, pfanzagl1985contributions} as well as the optimal estimating equations framework \citep{van2003unified}.
	
	A standard one-step debiased estimator of $\psi_0$ is given by $\psi_{\eta_n}+\frac{1}{n}\sum_{i=1}^{n}\phi_{\eta_n}(O_i)$. While the simplicity of this estimator is appealing, its asymptotic linearity is only guaranteed to hold under a stringent cap on the flexibility of the procedures used to yield $\eta_n$. To circumvent this constraint, cross-fitting can be incorporated into the construction of the one-step debiased estimator \citep{zheng2011cross, chernozhukov2018double}. In its simplest form, this is achieved by partitioning the sample into two subsamples, using one subsample to obtain $\eta_n$ and the other to build the one-step debiased estimator, repeating this construction with the roles of the subsamples reversed, and finally averaging the two estimators obtained. This procedure can be naturally extended to involve partitioning the sample into $K\geq 2$ subsamples of approximately equal sizes. Specifically, to compute the $K$-fold cross-fitted one-step debiased estimator,  we first randomly partition the index set $\{1,2,\ldots,n\}$ into $K$ subsets, say $\mathcal V_1,\mathcal V_2,\ldots, \mathcal V_K$, of roughly equal sizes $n_1,n_2,\ldots,n_K$. Then, for each $k=1,2,\ldots,K$, an estimate  $\eta_{k,n}=(H_{k,n},\pi_{k,n},G_{k,n},Q_{k,n},S_{k,n})$ of $\eta_0$ is obtained using only observations with indices not in $\mathcal V_k$, and the estimate $\psi^*_{k,n}:=\psi_{\eta_{k,n}}+\frac{1}{n_k}\sum_{i\in\mathcal V_k}\phi_{\eta_{k,n}}(O_i)$ of $\psi_0$ is calculated. Finally, the average $\psi^*_{n}:=\frac{1}{K}\sum_{k=1}^{K}\psi^*_{k,n}$ of fold-specific estimates is taken to be the final estimate of $\psi_0$. For any fixed $K\geq 2$, the cross-fitted one-step debiased estimator is guaranteed to be asymptotically linear without the need to limit the range of algorithms used to estimate $\eta_0$ --- details are provided in Section~\ref{s:inf}.
	
	Alternatively, we consider a second estimator $\psi_n^{**}$ based on solving the efficient influence function estimating equation. To be precise, denoting by $\phi_{\eta,\psi}$ the efficient influence function $\phi_P$ where all nuisances are replaced by corresponding components of $\eta$ but the parameter value $\Psi(P)$ is instead replaced by $\psi$, the fold-specific estimator $\psi_{k,n}^{**}$ is the solution in $\psi$ of the equation \[\sum_{i\in\mathcal{V}_k}\phi_{\eta_{k,n},\psi}(O_i)=0\ ,\]and the estimator $\psi_n^{**}$ is taken to be the average $\frac{1}{K}\sum_{k=1}^{K}\psi_{k,n}^{**}$ of the fold-specific estimators. Because for each fixed realization $o$ of the data unit and each fixed nuisance $\eta$ the mapping $\psi\mapsto\phi_{\eta,\psi}(o)$ is linear in $\psi$, $\psi_n^{**}$ admits a closed-form expression: specifically, we find that $\psi_{k,n}^{**}$ is given explicitly by \[\frac{\sum_{i\in\mathcal{V}_k}\left[\frac{I(A_i=a_0)}{\pi_{k,n}(Z_i)}\gamma_{k,n}(Z_i)\phi_{\text{KM},\eta_{k,n}}(L_{\eta_{k,n},\varphi})(O_i)
		+\mu_{k,n}(Z_i)\left\{\frac{1}{S_{k,n}(W_i\miid A_i,Z_i)}-\phi_{\text{KM},\eta_{k,n}}(\gamma_{k,n,\natural})(O_i)\right\}\right]}{\sum_{i\in\mathcal{V}_k}\left\{\frac{1}{S_{k,n}(W_i\miid A_i,Z_i)}-\phi_{\text{KM},\eta_{k,n}}(\gamma_{k,n,\natural})(O_i)\right\}},\]where we define pointwise $\mu_{k,n}(z):=\int\varphi(t,z)\,\tilde{F}_{k,n}(dt\miid a_0,z)$ with $\tilde{F}_{k,n}:=1-\tilde{S}_{k,n}$. Because $\phi_{\eta,\psi}$ does not have the form $\tilde\phi_{\eta}-\psi$ for any function $\tilde\phi_\eta$ indexed by $\eta$ but not $\psi$, the estimators $\psi_n^*$ and $\psi_n^{**}$ are distinct. As we will see below, these estimators not only differ in their value on given samples but also in at least one key statistical property.
	
	\section{Large-sample inferential results}
	\label{s:inf}
	\subsection{Pointwise statistical inference}
	\label{ss:pinf}
	We study conditions under which the proposed estimators $\psi^*_n$ and $\psi_n^{**}$ are asymptotically linear and nonparametric efficient estimators of the survival integral $\psi_0$. We denote by $\eta_\infty := (H_0, \pi_\infty, G_\infty, Q_\infty, S_{X,\infty})$ the common limit in-probability of the split-specific nuisance estimators $\eta_{1,n},\eta_{2,n}\ldots,\eta_{K,n}$, and by  $\overline \tau(a,z) := \min\{\overline{\tau}_T(a,z),\overline{\tau}_C(a,z)\}$ the maximum possible follow-up time in the subpopulation of individuals with $(A,Z)=(a,z)$. We will refer to the following conditions on the nuisance estimators:
	
	\begin{description}[style=multiline, leftmargin=2.0cm, labelindent=.9cm]
		\item[\namedlabel{ass:pro_conv}{(C1)}]the following consistency conditions hold:
		\begin{enumerate}[label=(\alph*),ref=(C1\alph*)]
			\item $\displaystyle\max_{k}E_0\left|\frac{\bar\gamma_{k,n}(a_0,Z)}{\pi_{k,n}(Z)} - \frac{\bar\gamma_{\infty}(a_0,Z)}{\pi_{\infty}(Z)}\right|^2\stackrel{P}{\longrightarrow}0$; \label{ass:pro_conv_a}
			\item
			$\displaystyle\max_kE_0\left[\sup_{y\in[0,\overline{\tau}(a_0,Z)]}\left|\frac{L_{k,n, \varphi}(y,a_0,Z)}{\tilde S_{k,n}(y\miid a_0, Z)} - \frac{L_{\infty, \varphi}(y,a_0,Z)}{\tilde S_{\infty}(y\miid a_0, Z)}\right|\right]^2\stackrel{P}{\longrightarrow}0$;
			\label{ass:pro_conv_b}
			\item 
			$\displaystyle\max_kE_0\left[\sup_{y\in[0,\overline{\tau}(A,Z)]}\left|\frac{\tilde S_{k,n}(y\miid A, Z)}{ R_{k,n}(y\miid A, Z)} - \frac{\tilde S_{\infty}(y\miid A, Z)}{ R_{\infty}(y\miid A, Z)}\right|\right]^2\stackrel{P}{\longrightarrow}0$;
			\label{ass:pro_conv_c}
			\item
			$\displaystyle\max_kE_0\left[\sup_{y\in[\underline{\tau}_T(A,Z),\overline{\tau}_W(A,Z)]}\left|\frac{\bar{\gamma}_{k,n,\natural}(y, A,Z)}{\tilde S_{k,n}(y\miid A,Z)} - \frac{\bar{\gamma}_{\infty, \natural}(y,A,Z)}{\tilde S_\infty(y\miid A,Z)}\right|\right]^2\stackrel{P}{\longrightarrow}0$;
			\label{ass:pro_conv_d}
			\item $\displaystyle\max_kE_0\left[\sup_{y\in[\underline{\tau}_W(A,Z),\overline{\tau}_W(A,Z)]}\left| \frac{1}{\tilde S_{k,n}(y\miid A,Z)} - \frac{1}{\tilde S_\infty(y\miid A,Z)}\right|\right]^2\stackrel{P}{\longrightarrow}0$;
			\label{ass:pro_conv_e}
			\item 
			$\displaystyle\max_kE_0\left[\sup_{u\in[\underline{\tau}_W(A,Z),\overline{\tau}(A,Z)]}\left|G_{k,n}(u\miid A,Z)-G_{\infty}(u\miid A,Z)\right|\right]^2\stackrel{P}{\longrightarrow}0$;
			\label{ass:pro_conv_f}
		\end{enumerate}
		
		\item[\namedlabel{ass:pro_pos}{(C2)}] there exists some constant $\kappa\in(0,\infty)$ for which the following inequalities hold with $P_0$--probability tending to one:
		\begin{enumerate}[label=(\alph*),ref=(C2\alph*)]
			\item 
			$\displaystyle\frac{\bar\gamma_{k,n}(a_0,Z)}{\pi_{k,n}(Z)}, \frac{\bar\gamma_{\infty}(a_0,Z)}{\pi_{\infty}(Z)} \leq \kappa$;
			\label{ass:pro_pos_a}
			\item 
			$\displaystyle\sup_{y\in[0,\overline{\tau}(a_0,Z)]}\left|\frac{L_{k,n, \varphi}(y,a_0,Z)}{\tilde S_{k,n}(y\miid a_0, Z)}\right|, \sup_{y\in[0,\overline{\tau}(a_0,Z)]}\left|\frac{L_{\infty, \varphi}(y,a_0,Z)}{\tilde S_{\infty}(y\miid a_0, Z)}\right| \leq \kappa$;
			\label{ass:pro_pos_b}
			\item 
			$\displaystyle\sup_{y\in[0,\overline{\tau}(A,Z)]}\left|\frac{\tilde S_{k,n}(y\miid A, Z)}{R_{k,n}(y\miid A, Z)}\right|, \sup_{y\in[0,\overline{\tau}(A,Z)]}\left|\frac{\tilde S_{\infty}(y\miid A, Z)}{R_{\infty}(y\miid A, Z)}\right| \leq \kappa$;
			\label{ass:pro_pos_c}
			\item $\displaystyle\sup_{y\in[\underline{\tau}_T(A,Z),\overline{\tau}_W(A,Z)]}\left|\frac{\bar{\gamma}_{k,n,\natural}(y, A,Z)}{\tilde S_{k,n}(y\miid A,Z)}\right|, \sup_{y\in[\underline{\tau}_T(A,Z),\overline{\tau}_W(A,Z)]}\left|\frac{\bar{\gamma}_{\infty, \natural}(y,A,Z)}{\tilde S_\infty(y\miid A,Z)}\right| \leq \kappa$;
			\label{ass:pro_pos_d}
			\item 
			$\displaystyle\frac{1}{\tilde S_{k,n}(\overline{\tau}_W(A,Z)\miid A,Z)}, \frac{1}{\tilde S_\infty(\overline{\tau}_W(A,Z)\miid A,Z)} \leq \kappa$;
			\label{ass:pro_pos_e}
			\item 
			$\displaystyle|\varphi(Y,Z)|, \int |\varphi(dy,Z)| \leq \kappa$;
			\label{ass:pro_pos_f}
		\end{enumerate}
		
		\item[\namedlabel{ass:consistent}{(C3)}]the limits of the nuisance estimators agree with the true nuisances as follows:
		\begin{enumerate}[label=(\alph*),ref=(C3\alph*)]
			\item $S_{X,\infty}(Y\miid A,Z) = S_{X,0}(Y\miid A,Z)$ and $G_\infty(Y\miid A,Z) = G_0(Y\miid A,Z)$ $P_0$--almost surely;\label{ass:consistent_a}
			\item $\pi_\infty(Z) = \pi_0(Z)$ $P_0$--almost surely;\label{ass:consistent_b}
			\item $Q_\infty(Y\miid A,Z) = Q_0(Y\miid A,Z)$ $P_0$--almost surely;\label{ass:consistent_c}
		\end{enumerate}
		\item[\namedlabel{ass:pro_remainder}{(C4)}] $\max_{k} R(\eta_{k,n}, \eta_0) = o_P(n^{-\frac{1}{2}})$.
	\end{description}
	
	The following theorem describes the large-sample (pointwise) inferential properties of estimators $\psi_n^{*}$ and $\psi_n^{**}$  under appropriate conditions.
	\begin{thm}\label{thm:asymp_point}
		Suppose that conditions~\ref{ass:support}--\ref{ass:positivity} and \ref{ass:pro_conv}--\ref{ass:pro_pos} hold.
		\begin{enumerate}[(i)]
			\item If conditions \ref{ass:consistent_a}--\ref{ass:consistent_b} hold, then $\psi_n^*$  is a consistent estimator of $\psi_0$.\vspace{-.1in}
			\item If condition \ref{ass:consistent_a} holds, then $\psi_n^{**}$ is a consistent estimator of $\psi_0$.\vspace{-.1in}
			\item If conditions~\ref{ass:consistent}--\ref{ass:pro_remainder} hold, then $\psi_n^*$ and $\psi_n^{**}$ are asymptotically linear estimators of $\psi_0$ with common influence function $\phi_0$, that is, \[\psi_n^* \,=\, \psi_n^{**}+o_P(n^{-\frac{1}{2}}) \,=\,  \psi_0+\frac{1}{n}\sum_{i=1}^{n}\phi_0(O_i)+ o_P(n^{-\frac{1}{2}})\ .\] In particular, this implies that  $n^{\frac{1}{2}}\,(\psi_n^* - \psi_0\,)$ and $n^{\frac{1}{2}}\,(\psi_n^{**} - \psi_0)$ each converge in distribution to a normal random variable with mean zero and variance $\sigma_0^2 := var_0\{\phi_{ 0}(O)\}<\infty$.
		\end{enumerate}
	\end{thm}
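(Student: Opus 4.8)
The plan is to establish parts (i)--(ii) (consistency) and part (iii) (asymptotic linearity) by leveraging the linearization from Theorem~\ref{thm:if} together with standard empirical-process arguments for cross-fitted estimators. For part (iii), the starting point is the identity, valid for each fold $k$, that decomposes $\psi^*_{k,n} - \psi_0$ into four terms: the empirical average $\frac{1}{n_k}\sum_{i\in\mathcal V_k}\phi_0(O_i)$, a centered empirical-process term $(\mathbb{P}_{n_k}-P_0)\{\phi_{\eta_{k,n}} - \phi_0\}$, a ``drift'' term $P_0\phi_{\eta_{k,n}}$ arising from plugging the estimated nuisances into the efficient influence function evaluated at the truth, and the second-order remainder $R(\eta_{k,n},\eta_0)$ coming from the expansion $\Psi(\eta_{k,n}) - \Psi(\eta_0) = P_0\phi_{\eta_{k,n}}\text{-type term} + R(\eta_{k,n},\eta_0)$; the key algebraic fact is that the one-step correction exactly cancels the first-order piece, leaving $\psi^*_{k,n} - \psi_0 = \frac{1}{n_k}\sum_{i\in\mathcal V_k}\phi_0(O_i) + (\mathbb{P}_{n_k}-P_0)\{\phi_{\eta_{k,n}} - \phi_0\} + R(\eta_{k,n},\eta_0)$. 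Averaging over $k$ and multiplying by $n^{1/2}$, it suffices to show the last two terms are $o_P(n^{-1/2})$ per fold.

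First I would handle the empirical-process remainder $(\mathbb{P}_{n_k}-P_0)\{\phi_{\eta_{k,n}} - \phi_0\}$. Because $\eta_{k,n}$ is fit on data independent of $\mathcal V_k$, one conditions on the training sample and invokes the standard cross-fitting lemma (as in \citealp{chernozhukov2018double}): conditionally, this term has mean zero and variance bounded by $n_k^{-1}E_0\{\phi_{\eta_{k,n}}(O) - \phi_0(O)\}^2$, so it is $o_P(n^{-1/2})$ provided $\|\phi_{\eta_{k,n}} - \phi_0\|_{L_2(P_0)} \stackrel{P}{\to} 0$. This $L_2$-consistency is exactly what conditions \ref{ass:pro_conv} and \ref{ass:consistent} are designed to deliver: one writes $\phi_{\eta} - \phi_0$ as a sum of terms each involving a difference of one nuisance-derived quantity (e.g., $\bar\gamma/\pi$, $L_\varphi/\tilde S$, $\tilde S / R$, $\bar\gamma_\natural/\tilde S$, $1/\tilde S$, $G$, or $\varphi$), multiplied by factors that are uniformly bounded by \ref{ass:pro_pos}; then one applies the triangle inequality, the uniform boundedness, and the $L_2$-convergence in \ref{ass:pro_conv_a}--\ref{ass:pro_conv_f} together with the consistency-of-limits in \ref{ass:consistent} to conclude. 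The bookkeeping here is tedious but routine --- essentially a term-by-term matching of each summand of $\phi_\eta - \phi_0$ to one of the stated conditions.

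The main obstacle --- and the heart of the argument --- is showing that the drift term equals $R(\eta_{k,n},\eta_0)$ up to the exact cancellation described above, and then that $R(\eta_{k,n},\eta_0) = o_P(n^{-1/2})$. The latter is simply condition \ref{ass:pro_remainder}; the real work is the former, namely verifying that the von Mises expansion of $\Psi$ has precisely the remainder $R = R_1 + R_2 + R_3 + R_4$ given in the statement preceding Theorem~\ref{thm:if}, so that no uncontrolled first-order bias survives. This requires carefully tracking how the plug-in $\psi_{\eta_{k,n}}$ differs from the ``exact-bias-corrected'' quantity $\Psi(\eta_{k,n}) + P_0\phi_{\eta_{k,n}}$ --- here one uses pathwise differentiability from Theorem~\ref{thm:if} and the fact that $\phi_0$ has mean zero under $P_0$ --- and a product-integration perturbation analysis for the Kaplan--Meier-type functional $\tilde F_P$, exploiting Duhamel-type identities to write differences $\tilde S_{k,n} - \tilde S_0$ in terms of hazard differences. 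For part (iii)'s claim that $\psi_n^* = \psi_n^{**} + o_P(n^{-1/2})$, I would use the closed-form expression for $\psi^{**}_{k,n}$, observe that $\psi^{**}_{k,n}$ solves $\mathbb{P}_{n_k}\phi_{\eta_{k,n},\psi} = 0$ while $\psi^*_{k,n}$ satisfies $\mathbb{P}_{n_k}\phi_{\eta_{k,n},\psi^*_{k,n}} = \mathbb{P}_{n_k}\phi_{\eta_{k,n}} + (\psi^*_{k,n} - \psi_{\eta_{k,n}})\cdot(\text{coefficient of }\psi)$; since the linear-in-$\psi$ coefficient converges in probability to $E_0\{1/\tilde S_0(W\mid A,Z) - \phi_{\text{KM},0}(\gamma_{0,\natural})(O)\}/\gamma_0 = 1$ by \ref{ass:pro_conv}, \ref{ass:pro_pos}, \ref{ass:consistent}, the two estimators differ by a factor $(1+o_P(1))$ applied to a quantity already shown to be $\psi_0 + \mathbb{P}_n\phi_0 + o_P(n^{-1/2})$, giving the stated equivalence. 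Finally, parts (i) and (ii) follow from a strict subset of this analysis: for consistency one needs only that the drift plus remainder terms go to zero in probability (not at rate $n^{-1/2}$), and a careful inspection of $R(\eta,\eta_0)$ reveals that $\psi^{**}_n$ requires only \ref{ass:consistent_a} (consistency of $S_{X,\infty}$ and $G_\infty$) because the estimating-equation construction is insensitive to misspecification of $\pi$ and $Q$ at the level of consistency, whereas $\psi^*_n$ additionally needs \ref{ass:consistent_b}; this asymmetry is precisely the ``one key statistical property'' in which the two estimators differ, alluded to at the end of Section~\ref{s:est}. The central limit theorem conclusion is then immediate from asymptotic linearity, the Lindeberg--L\'evy CLT, and finiteness of $\sigma_0^2$, which follows from \ref{ass:pro_pos}.
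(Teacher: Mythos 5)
Your architecture for part (iii) is essentially the paper's: by Theorem \ref{thm:if}, the one-step correction reduces $\psi^*_{k,n}-\psi_0$ to $(\mathbb{P}_{n,k}-P_0)\phi_{\eta_{k,n}}+R(\eta_{k,n},\eta_0)$; the empirical-process term is handled by conditioning on the training folds and Chebyshev once $P_0(\phi_{\eta_{k,n}}-\phi_\infty)^2\to 0$ (the paper carries out the ``routine bookkeeping'' you defer via an eleven-term decomposition of $\phi_{\eta_{k,n}}-\phi_\infty$ controlled by \ref{ass:pro_conv}--\ref{ass:pro_pos} and the Duhamel-type Lemma 1); the remainder is disposed of by \ref{ass:pro_remainder}; and the $\psi^*_n$ versus $\psi^{**}_n$ equivalence uses the same algebraic identity you describe, writing the difference as a product of two vanishing factors. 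Two small corrections: the expansion must be centered at the in-probability limit $\phi_\infty$ rather than at $\phi_0$ so that the same decomposition also serves parts (i)--(ii), where the nuisance limits may be off-target; and the ``real work'' you locate in verifying the von Mises remainder $R_1+\cdots+R_4$ is exactly the content of Theorem \ref{thm:if}, which is already available to be cited rather than re-derived here.

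The genuine gap is part (ii). You assert that consistency of $\psi^{**}_n$ under \ref{ass:consistent_a} alone follows from ``a careful inspection of $R(\eta,\eta_0)$,'' but $R$ governs the bias of the plug-in/one-step construction, not of the estimating-equation estimator; in fact $R_4(\eta_\infty,\eta_0)$ is generically nonzero when $\pi_\infty\neq\pi_0$ (through $R_{4,2}$ and $\gamma_0-\gamma_\infty$), which is precisely why part (i) needs \ref{ass:consistent_b}. What part (ii) requires is a direct proof that the population estimating function is unbiased at the limit nuisances, i.e.\ $P_0\phi_{\eta_\infty,\psi_0}=0$ even when $\pi_\infty\neq\pi_0$ and $Q_\infty\neq Q_0$. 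The paper establishes this by computing $F_{1,\infty}$ and $R_\infty$ under $(S_{X,0},G_0,Q_\infty)$ and showing that $F_{1,\infty}(du\miid a,z)/R_\infty(u\miid a,z)=F_{X,0}(du\miid a,z)/S_{X,0}(u\miid a,z)$, so that $E_0\{\phi_{\text{KM},\eta_\infty}(m)(O)\miid A,Z\}=0$ for the relevant integrands $m$; this annihilates the $\phi_1$-term (whatever $\pi_\infty$ is, since it enters only through a multiplicative function of $(A,Z)$ that pulls out of the conditional expectation) and reduces $P_0\phi_{2,\eta_\infty,\psi_0}$ to a multiple of $E_0[\{\mu_0(Z)-\psi_0\}\gamma_0(A,Z)]=0$. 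Without this computation, or an equivalent one, part (ii) is unproven: ``the estimating-equation construction is insensitive to misspecification of $\pi$ and $Q$ at the level of consistency'' restates the claim rather than arguing for it.
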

	A simple estimator of $\sigma_0^2$ can be constructed as $\sigma_n^2:=\frac{1}{n}\sum_{i=1}^{n}\{\phi_{\eta_n}(O_i)-\bar\phi_n\}^2$ with $\bar\phi_n:=\frac{1}{n}\sum_{i=1}^{n}\phi_n(O_i)$. Alternatively, a cross-fitted counterpart of $\sigma_n^2$ with possibly improved finite-sample performance can be obtained as \[\sigma_{n,*}^2:=\frac{1}{K}\sum_{k=1}^K\frac{1}{n_k}\sum_{i\in\mathcal{V}_k}\{\phi_{\eta_{k,n}}(O_i)-\bar\phi_{k,n}\}^2\] with $\bar\phi_{k,n}:=\frac{1}{n_k}\sum_{i\in\mathcal{V}_k}\phi_{\eta_{k,n}}(O_i)$ for $k=1,2,\ldots,K$. Wald confidence intervals with asymptotic coverage $1-\alpha$ can then be constructed as $(\psi_n - q_{\alpha} \sigma_n n^{-\frac{1}{2}}, \psi_n + q_{\alpha} \sigma_n n^{-\frac{1}{2}})$, where $q_{\alpha}$ denotes the $(1-\frac{\alpha}{2})$--quantile of the standard normal distribution. Here, $\sigma_{n,*}$ can also be used instead of $\sigma_n$.
	
	Beyond providing a template for making inference about $\psi_0$, the result above highlights that $\psi_n^*$ and $\psi_n^{**}$ both enjoy some degree of robustness to inconsistent nuisance estimation. Interestingly though, despite the fact that these two estimators are asymptotically equivalent when all nuisance estimators are consistent for their intended target, they have differing behavior when this is not the case. For example, the one-step estimator $\psi_n^*$ retains its consistency for $\psi_0$ even when the censoring distribution is inconsistently estimated provided the time-to-event distribution, truncation distribution, and treatment propensity score are estimated consistently. In contrast, the estimating equations-based estimator $\psi_n^{**}$ is consistent for $\psi_0$ provided the time-to-event and truncation distributions are estimated consistently, irrespective of how poorly the propensity score and censoring distributions may be estimated. As such, $\psi_n^{**}$ exhibits strictly greater robustness than $\psi_n^*$ in terms of consistency.
	
	We note here that if we had parametrized the problem in terms of $(H_{X,0},\pi_{X,0},G_{X,0},Q_0,S_{X,0})$ --- more in line with the work of \cite{wang2022doubly} --- the resulting estimating equations-based estimator would enjoy double robustness, that is, it would be consistent provided $H_{X,0}$ and either $S_{X,0}$ or $(\pi_{X,0},G_{X,0},Q_0)$ are estimated consistently. At first glance, such property may appear superior to the robustness exhibited by $\psi_n^{**}$, which requires consistent estimation of $(H_0,G_0,S_{X,0})$ but not of $(\pi_0,Q_0)$. However, this may not necessarily be so. While there exist flexible strategies for estimating conditional survival functions, such as $S_{X,0}$ and $Q_0$, based on right-censored and/or left-truncated data --- see, e.g., the recent work of  \cite{wolock2024framework} --- to the best of our knowledge, the same cannot be said of $G_{X,0}$, $H_{X,0}$ and $\pi_{X,0}$. Natural estimators of $G_{X,0}$ often involve inverse-weighting an estimator of $G_0$ using an estimator of $S_{X,0}$. Worse yet, natural estimators of $H_{X,0}$ and $\pi_{X,0}$ often involve inverse-weighting estimators of $H_0$ and $\pi_0$ using estimators of both $G_{X,0}$ and $S_{X,0}$. Since, in this problem, robustness to inconsistent estimation of $H_{X,0}$ does not appear possible, consistent estimation of $(G_{X,0},S_{X,0})$ would therefore seem necessary. As such, it is not clear that double robustness could provide any benefit beyond the robustness displayed by $\psi_n^{**}$, all the while rendering estimation of required nuisance functions more challenging. In contrast, all components of the parametrization $(H_0,\pi_0,G_0,Q_0,S_{X,0})$ we have adopted can be readily estimated using machine learning tools.
	
	The conditions imposed in Theorem \ref{thm:asymp_point} can be scrutinized in the context of each application at hand. Condition \ref{ass:pro_conv} requires the weak consistency of certain transformations of the nuisance estimators to their respective (possibly off-target) limits, often in some uniform sense that depends partly on the kernel function $\varphi$ defining the estimand of interest. Condition \ref{ass:pro_pos} requires that these transformations of nuisance estimators as well as their limits be bounded above, at least in large samples, so that all terms involved in the linearization of the survival integral estimator are controlled. Required support recovery assumptions ensure that these conditions hold for the true nuisance values, whereas condition \ref{ass:pro_pos} requires that the same also be true of the nuisance limits. Condition \ref{ass:consistent} is useful to describe various patterns of consistent or inconsistent estimation of certain nuisance components under which consistency of the survival integral estimator may be preserved, as discussed above. Finally, condition \ref{ass:pro_remainder} is a generic condition on the rate of convergence of nuisance estimators --- whether or not it holds in practice depends on the degree of smoothness or structure that the nuisance functions satisfy and whether the nuisance estimators used are able to leverage that structure effectively to achieve fast enough convergence.

	\subsection{Uniform statistical inference}
	\label{ss:unif} 
	We now study conditions under which we can make inference for a class of survival integrals simultaneously. Suppose that $\{\varphi_s:s\in\mathcal{S}\}$ is a collection of kernel functions from $[0,\infty)\times\mathcal{Z}$ to $\mathbb{R}$ indexed by a set $\mathcal{S}$, and that we are interested in learning about a collection of survival integral values $\{\psi_0(s):s\in\mathcal{S}\}$, where $\psi_0(s)$ is the value of $\Psi(P_0)$ corresponding to kernel function $\varphi=\varphi_s$. In most applications, $\mathcal{S}$ is finite-dimensional but that is not a requirement for the developments below. For example, the set of kernels giving rise to the joint distribution function $\mathbb F_0$ of $(T(a_0),Z)$, namely $(t,z)\mapsto \mathbb{P}_0\left\{T(a_0)\leq t,Z\leq z\right\}$, over $[0,\tau)\times \mathcal{Z}$ is specified by $\varphi_s:(u,v)\mapsto I(u\leq t,v\leq z)$ for $s:=(t,z)$ ranging in $\mathcal{S}:=[0,\tau)\times \mathcal{Z}$. The conditions outlined so far pertain to inference for a fixed index $s$. To ensure valid inference uniformly over a range of values, we require the following additional conditions, where for any given kernel function $\varphi=\varphi_s$ we explicitly define $L_{s,P}$ pointwise as $L_{s,P}(y\,|\,a,z):=\int_y^\infty \tilde{S}_P(t\,|\,a,z)\,\varphi_s(dt,z)$, and denote by $\phi_{s, P}$ 
	the nonparametric efficient influence function of $P\mapsto\Psi(P)$ with kernel $\varphi=\varphi_s$ under sampling from $P$, and by $R_{s}(P, P_0)$ the corresponding linearization remainder. We will make use of the  conditions below:
	\begin{description}[style=multiline, leftmargin=2.0cm, labelindent=.9cm] \itemsep-.075in
		\item[\namedlabel{ass:sup_surv}{(D1)}]the following consistency condition holds:
		\begin{flalign*}
			\text{}\ &\ \max_kE_0\left[\sup_{y\in[0,\overline{\tau}(a_0,Z)]}\sup_{s\in\mathcal{S}}\left|\frac{L_{s,k,n, \varphi}(y,a_0,Z)}{\tilde S_{k,n}(y\miid a_0, Z)} - \frac{L_{s,\infty, \varphi}(y,a_0,Z)}{\tilde S_{\infty}(y\miid a_0, Z)}\right|\right]^2\stackrel{P}{\longrightarrow}0\,;&&
		\end{flalign*}
		\item[\namedlabel{ass:uniform_bounded}{(D2)}]there exists some constant $\kappa\in(0,\infty)$ for which the following inequalities hold with $P_0$--probability tending to one:
		\begin{flalign*}
			\text{}\ &\ \sup_{s\in \mathcal S}\sup_{y\in[0,\overline{\tau}(a_0,Z)]}\left|\frac{L_{s,k,n}(y\,|\,a_0,Z)}{S_{k,n}(y\miid a_0,Z)}\right|,\,\sup_{s\in \mathcal S}\sup_{y\in[0,\overline{\tau}(a_0,Z)]}\left|\frac{L_{s,0}(y,a_0,Z)}{S_0(y\miid a_0,Z)}\right|\leq\kappa\,;&&
		\end{flalign*}
		\item[\namedlabel{ass:uniform_remainder}{(D3)}] $\max_k \sup_{s \in \mathcal S}R_{ s}(\eta_{k,n}, \eta_0)=o_P(n^{-\frac{1}{2}})$;
		\item[\namedlabel{ass:phi_donsker}{(D4)}] the set of functions $\left\{\phi_{s, 0} : s \in \mathcal S\right\}$ forms a $P_0$--Donsker class. 
	\end{description}
	Denoting the estimator of $\psi_0(s)$ by $\psi_n(s)$, either corresponding to $\psi_n^*$ or $\psi_n^{**}$ based upon kernel $\varphi=\varphi_s$, we define the standardized process $\mathbb{B}_n:=\{\mathbb{B}_n(s):s\in\mathcal{S}\}$ pointwise as $\mathbb B_n(s) := n^{\frac{1}{2}} \{\psi_n(s) - \psi_0(s)\}$.  The result below provides conditions under which $\mathbb B_n$ converges weakly to the same Gaussian process as the empirical process $\mathbb B^+_n:=\{\mathbb B^+_n(s):s\in\mathcal{S}\}$ defined pointwise as $\mathbb B_n^+(s):=n^{-\frac{1}{2}}\sum_{i=1}^{n}\phi_{s,0}(O_i)$. Below, $\ell^\infty(\mathcal{S})$ refers to the space of uniformly bounded functions from $\mathcal{S}$ to $\mathbb{R}$.
	\begin{thm}
		\label{thm:unif}
		Suppose that condition~\ref{ass:support} holds for each $\varphi=\varphi_s$ with $s\in\mathcal{S}$, and that conditions \ref{ass:pro_conv_a}--\ref{ass:pro_conv_c}, \ref{ass:pro_pos_a}, \ref{ass:consistent} and~\ref{ass:sup_surv}--\ref{ass:phi_donsker} also hold.  Then, the process $\mathbb B_n$ converges weakly to a tight mean-zero Gaussian process with covariance function $\sigma_0(s,t):=E_0\left[\phi_{s,0}(O)\phi_{t,0}(O)\right]$ in the space $\ell^\infty(\mathcal{S})$ relative to the supremum norm over $\mathcal{S}$.
	\end{thm}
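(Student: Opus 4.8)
The plan is to establish weak convergence of $\mathbb{B}_n$ to the desired Gaussian process by showing that $\mathbb{B}_n$ is asymptotically equivalent, in $\ell^\infty(\mathcal{S})$, to the empirical process $\mathbb{B}_n^+$, and then invoking a functional central limit theorem for $\mathbb{B}_n^+$ under the Donsker condition~\ref{ass:phi_donsker}. Concretely, for each fixed $s$, Theorem~\ref{thm:asymp_point}(iii) already gives the pointwise expansion $\psi_n(s) = \psi_0(s) + \frac{1}{n}\sum_{i=1}^n \phi_{s,0}(O_i) + o_P(n^{-1/2})$; the entire burden of the uniform result is to show that the $o_P(n^{-1/2})$ remainder is in fact $o_P(n^{-1/2})$ \emph{uniformly over $s\in\mathcal{S}$}. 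I would decompose this remainder, following the cross-fitted construction, into (a) a second-order remainder term $\max_k \sup_s R_s(\eta_{k,n},\eta_0)$, which is $o_P(n^{-1/2})$ by condition~\ref{ass:uniform_remainder}; (b) an empirical-process ``drift'' term of the form $\max_k \sup_s |(\mathbb{P}_{n_k} - P_0)\{\phi_{s,\eta_{k,n}} - \phi_{s,0}\}|$, where $\mathbb{P}_{n_k}$ is the empirical measure over fold $\mathcal{V}_k$; and (c) lower-order terms arising from the difference between $\psi_n^{**}$ and $\psi_n^*$ (for the estimating-equations estimator, the denominator normalization), which are handled by the same techniques applied to the closed-form expression for $\psi_{k,n}^{**}$ given in Section~\ref{s:est}.

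For term (b), the key device is that cross-fitting renders $\eta_{k,n}$ independent of the fold-$k$ data, so conditionally on the training data the increment $\phi_{s,\eta_{k,n}} - \phi_{s,0}$ is a fixed (non-random) function, and one can bound the conditional second moment of $n_k^{1/2}(\mathbb{P}_{n_k}-P_0)\{\phi_{s,\eta_{k,n}} - \phi_{s,0}\}$ by $E_0[(\phi_{s,\eta_{k,n}} - \phi_{s,0})^2]$ up to constants. The uniform-over-$s$ control of this $L^2(P_0)$ norm is exactly what conditions~\ref{ass:pro_conv_a}--\ref{ass:pro_conv_c}, \ref{ass:pro_pos_a}, \ref{ass:consistent}, \ref{ass:sup_surv} and~\ref{ass:uniform_bounded} are designed to supply: the only component of $\phi_{s,P}$ that depends on $s$ is through $L_{s,P}$ (and $\mu_{s,P}$), and~\ref{ass:sup_surv} gives uniform-in-$s$ consistency of $L_{s,k,n}/\tilde S_{k,n}$ to its limit while~\ref{ass:uniform_bounded} gives the matching uniform boundedness; the $s$-free pieces of $\phi_{s,P}$ (the $\phi_{\text{KM}}$ kernel applied to $\gamma_{P,\natural}$, the propensity and truncation factors) are controlled by the pointwise conditions~\ref{ass:pro_conv_a}--\ref{ass:pro_conv_c} and~\ref{ass:pro_pos_a} already invoked in Theorem~\ref{thm:asymp_point}. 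Combining these, one argues that $\sup_s E_0[(\phi_{s,\eta_{k,n}} - \phi_{s,0})^2] \to_P 0$, whence a maximal-inequality argument (e.g., over a suitable covering of $\mathcal{S}$, or simply the conditional-Markov bound applied together with the fact that the supremum is over the fixed increment's envelope) yields $\max_k\sup_s |(\mathbb{P}_{n_k}-P_0)\{\phi_{s,\eta_{k,n}} - \phi_{s,0}\}| = o_P(n^{-1/2})$. Assembling (a), (b), (c) gives $\sup_s |\mathbb{B}_n(s) - \mathbb{B}_n^+(s)| = o_P(1)$.

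It remains to show $\mathbb{B}_n^+ \leadsto \mathbb{G}$ in $\ell^\infty(\mathcal{S})$, where $\mathbb{G}$ is the tight mean-zero Gaussian process with covariance $\sigma_0(s,t) = E_0[\phi_{s,0}(O)\phi_{t,0}(O)]$. This is immediate from the classical empirical-process functional CLT (van der Vaart--Wellner): condition~\ref{ass:phi_donsker} asserts that $\{\phi_{s,0}:s\in\mathcal{S}\}$ is $P_0$-Donsker, each $\phi_{s,0}$ has mean zero and finite variance (the latter by~\ref{ass:uniform_bounded} together with the boundedness already embedded in conditions~\ref{ass:pro_pos}), so $\mathbb{B}_n^+$ converges weakly to the claimed tight Gaussian limit. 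Finally, Slutsky's lemma in $\ell^\infty(\mathcal{S})$ combines the asymptotic equivalence with this weak convergence to conclude $\mathbb{B}_n \leadsto \mathbb{G}$. I expect the main obstacle to be step (b): getting the uniform-in-$s$ empirical-process bound to go through cleanly requires carefully isolating the $s$-dependence of $\phi_{s,P}$ into the single factor $L_{s,P}$ and verifying that the product structure of $\phi_{s,P}$ does not introduce additional $s$-dependent terms that are not already covered by~\ref{ass:sup_surv}--\ref{ass:uniform_bounded}; the bookkeeping for the estimating-equations estimator $\psi_n^{**}$, whose closed form involves a random denominator, adds a further layer that must be shown to not disturb the uniform rate.
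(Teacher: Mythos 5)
Your proposal follows essentially the same route as the paper: establish the functional CLT for the oracle empirical process $\mathbb B_n^+$ from the Donsker condition \ref{ass:phi_donsker}, show $\sup_{s\in\mathcal S}|\mathbb B_n(s)-\mathbb B_n^+(s)|=o_P(1)$ using the uniform counterparts \ref{ass:sup_surv}--\ref{ass:uniform_remainder} of the pointwise conditions, and transfer the weak limit (the paper does this via the bounded-Lipschitz characterization of weak convergence rather than citing Slutsky, but it is the same device). If anything, your explicit decomposition of the uniform remainder into the second-order term, the cross-fitted empirical-process drift term, and the estimator-difference term is more detailed than the paper's own argument, which largely asserts that the uniform equivalence is straightforward under the stated conditions.
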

	
	This result can be used to numerically construct confidence sets for $\{\psi_0(s):s\in\mathcal{S}\}$ similarly as described in \cite{westling2021inference}. Asymptotically valid fixed-width bands can be readily obtained using a Wald construction and an estimate of relevant quantiles of $\sup_{s\in\mathcal{S}}|\mathbb B_n(s)|$. Alternatively, variable-width bands could be obtained by instead considering re-scaled process $\bar{\mathbb{B}}_n:=\{\bar{\mathbb{B}}_n(s):s\in\mathcal{S}\}$ with $\bar{\mathbb{B}}_n(s):=\sigma(s,s)^{-\frac{1}{2}}\bar{\mathbb{B}}_n(s)$, as in \cite{westling2021inference}. We note that conditions~\ref{ass:sup_surv}, \ref{ass:uniform_bounded} and \ref{ass:uniform_remainder} are uniform counterparts to conditions \ref{ass:pro_conv_b}, \ref{ass:pro_pos_b} and \ref{ass:pro_remainder}. Condition~\ref{ass:phi_donsker} instead puts a constraint on the complexity of the collection consisting of the nonparametric efficient influence function  under sampling from $P_0$ for each survival integral parameter considered. When $\mathcal{S}$ is finite-dimensional, this is achieved, for example, if $s\mapsto\phi_{s,0}$ satisfies a certain Lipschitz condition (see Example 19.7 of \citealp{van2000asymptotic}). More generally, this condition holds if $o\mapsto \phi_{s,0}(o)$ has uniform sectional variation norm uniformly bounded over $s\in\mathcal{S}$. 
	
	\subsection{Extension to smooth functionals}
	\label{ss:had}
	
	The nonparametric inferential procedures we have described pertain to survival integral estimands, which correspond to linear functionals of the G-computation identification $\tilde{\mathbb{F}}_0$ of the joint distribution distribution $\mathbb{F}_0$ of $(T(a_0),Z)$. For any given $(t_0,z_0)$, this identification of $\mathbb{F}_0(t_0,z_0)$ is given by $\upsilon_0$ with $\varphi:(t,z)\mapsto I(t\leq t_0,z\leq z_0)$. However, in some applications, the relevant estimand may be a nonlinear functional of this same distribution. Fortunately, through the delta method, our results on linear functionals can be directly used to tackle a large class of nonlinear functionals.  We demonstrate how the established results readily permit the study of parameters that can be expressed as sufficiently smooth functionals of $\mathbb F_0$.
	
	We denote by $\mathscr{F}$ the collection of distribution functions on $\mathbb{R}\times\mathcal{Z}$ restricted to the identification subset $\mathcal{S}_0:=\{(t,z): 0\leq t\leq \overline{\tau}_C(a_0,z),z\in\mathcal{Z}\}\subseteq \mathbb{R}\times\mathcal{Z}$. Suppose that $\Theta: \mathscr{F}\mapsto\mathbb{R}$ is a given functional, and that we are interested in nonparametric inference on $\theta_0:=\Theta(\tilde{\mathbb{F}}_0)$. The results obtained so far describe how to construct, whenever possible, a uniformly asymptotically linear and regular estimator $\tilde{\mathbb{F}}_n$ of $\tilde{\mathbb{F}}_0$. Provided $\Theta:\mathscr{F}\rightarrow\mathbb{R}$ is sufficiently smooth, it is reasonable to expect that $\theta_n:=\Theta(\tilde{\mathbb{F}}_n)$ is an asymptotically linear and regular estimator of $\theta_0$. Such a result is formalized in the following theorem.
	
	\begin{thm}
		\label{thm:hada}
		Suppose that the conditions of Theorem \ref{thm:unif} hold for $s=(t_0,z_0)$, $\varphi_s:(t,z)\mapsto I(t\leq t_0,z\leq z_0)$ and $\mathcal{S}=\mathcal{S}_0$. If $\Theta$ is Hadamard differentiable at $\mathbb{F}_0$ relative to the supremum norm, it holds that
		\begin{equation}
			\theta_n-\theta_0\,=\,\frac{1}{n}\sum_{i=1}^n \partial\Theta\big{(}\tilde{\mathbb F}_0; (t_0,z_0)\mapsto\phi_{t_0,z_0,0}(O_i)\big{)} + o_P(n^{-\frac{1}{2}})\ ,
		\end{equation}
		where $\partial\Theta\big{(}\tilde{\mathbb{F}}_0;h\big{)}$ denotes the G\^{a}teaux derivative of $\Theta$ at $\tilde{\mathbb{F}}_0$ in a given direction $h:\mathcal{S}\rightarrow\mathbb{R}$. In particular, this implies that $n^\frac{1}{2}(\theta_n-\theta_0)$ tends to a mean-zero normal random variable with variance given by $\sigma^2_\Theta:=var_0\left[\partial\Theta\big{(}\tilde{\mathbb{F}}_0;(t_0,z_0)\mapsto\phi_{t_0,z_0,0}(O)\big{)}\right]$.
	\end{thm}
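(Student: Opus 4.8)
The plan is to obtain Theorem~\ref{thm:hada} as an instance of the functional delta method, with the uniformly asymptotically linear estimator $\tilde{\mathbb{F}}_n$ of $\tilde{\mathbb{F}}_0$ supplied by Theorem~\ref{thm:unif} playing the role of the input process. As a first step I would record the precise consequence of Theorem~\ref{thm:unif} that is needed: taking $\varphi_s:(t,z)\mapsto I(t\le t_0,z\le z_0)$ for $s=(t_0,z_0)$ and $\mathcal{S}=\mathcal{S}_0$, the estimator $\tilde{\mathbb{F}}_n$, regarded as a random element of $\ell^\infty(\mathcal{S}_0)$, satisfies $n^{1/2}(\tilde{\mathbb{F}}_n-\tilde{\mathbb{F}}_0)=\mathbb{B}_n^+ + o_P(1)$ in $\ell^\infty(\mathcal{S}_0)$ under the supremum norm, where $\mathbb{B}_n^+(s):=n^{-1/2}\sum_{i=1}^n\phi_{s,0}(O_i)$, and $\mathbb{B}_n^+$ converges weakly to a tight mean-zero Gaussian process with covariance $\sigma_0$. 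The $o_P(1)$ statement, rather than the bare weak convergence recorded in Theorem~\ref{thm:unif}, is exactly what the argument behind that theorem produces: conditions~\ref{ass:sup_surv}--\ref{ass:uniform_remainder} and~\ref{ass:consistent}, combined with cross-fitting, make both the second-order remainder $R_s(\eta_{k,n},\eta_0)$ and the empirical-process drift arising from replacing $\phi_{s,0}$ by $\phi_{s,\eta_{k,n}}$ of order $o_P(n^{-1/2})$ uniformly over $s\in\mathcal{S}_0$, and~\ref{ass:phi_donsker} identifies the weak limit of $\mathbb{B}_n^+$. I would also note that, under the standing identification conditions, $\tilde{\mathbb{F}}_0$ and $\mathbb{F}_0$ coincide on $\mathcal{S}_0$, so that Hadamard differentiability of $\Theta$ at $\mathbb{F}_0$ is the same as at $\tilde{\mathbb{F}}_0$.

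Next I would invoke the functional delta method in its refined form, e.g.\ Theorem~20.8 of \citealp{van2000asymptotic}. Since $\Theta$ is Hadamard differentiable at $\mathbb{F}_0$ relative to the supremum norm, the derivative $h\mapsto\partial\Theta(\tilde{\mathbb{F}}_0;h)$ is a continuous linear map defined on all of $\ell^\infty(\mathcal{S}_0)$, and the Gaussian limit above is tight, hence separable-valued; the ``moreover'' clause of that theorem then upgrades weak convergence to the chain-rule identity $n^{1/2}(\theta_n-\theta_0)=\partial\Theta\big(\tilde{\mathbb{F}}_0;\,n^{1/2}(\tilde{\mathbb{F}}_n-\tilde{\mathbb{F}}_0)\big)+o_P(1)$. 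Substituting the expansion from the first step and using that a bounded linear map sends $o_P(1)$ to $o_P(1)$ and commutes with finite sums, the right-hand side equals $\partial\Theta(\tilde{\mathbb{F}}_0;\mathbb{B}_n^+)+o_P(1)=n^{-1/2}\sum_{i=1}^n\partial\Theta\big(\tilde{\mathbb{F}}_0;(t_0,z_0)\mapsto\phi_{t_0,z_0,0}(O_i)\big)+o_P(1)$, which is the claimed asymptotic linear representation.

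The distributional statement then follows from the Lindeberg--L\'evy central limit theorem applied to the i.i.d.\ average of $\partial\Theta\big(\tilde{\mathbb{F}}_0;(t_0,z_0)\mapsto\phi_{t_0,z_0,0}(O_i)\big)$, once one checks that these summands have mean zero and finite variance. Mean zero holds because each $\phi_{s,0}$ is $P_0$--mean-zero and $\partial\Theta(\tilde{\mathbb{F}}_0;\cdot)$ is linear and bounded, so that the expectation passes through the derivative and $E_0[\partial\Theta(\tilde{\mathbb{F}}_0;\phi_{\cdot,0}(O))]=\partial\Theta(\tilde{\mathbb{F}}_0;0)=0$; finite variance follows from boundedness of the derivative together with the conditions bounding the influence functions uniformly in $s$ (conditions~\ref{ass:pro_pos_a} and~\ref{ass:uniform_bounded}), which provide a $P_0$--square-integrable envelope for $\{\phi_{s,0}:s\in\mathcal{S}_0\}$ and hence $|\partial\Theta(\tilde{\mathbb{F}}_0;\phi_{\cdot,0}(O))|\le M\sup_{s\in\mathcal{S}_0}|\phi_{s,0}(O)|$ with $M<\infty$.

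The step I expect to be the main obstacle is the first one: extracting the \emph{uniform asymptotic linearity} of $\tilde{\mathbb{F}}_n$ --- as opposed to mere weak convergence --- from the argument behind Theorem~\ref{thm:unif}, together with the outer-measurability bookkeeping inherent to working in the nonseparable space $\ell^\infty(\mathcal{S}_0)$. A secondary point, which should be flagged but is not a genuine difficulty, is that $\tilde{\mathbb{F}}_n$ need not itself be a distribution function, so $\theta_n=\Theta(\tilde{\mathbb{F}}_n)$ is well-defined only after either extending $\Theta$ to a supremum-norm neighborhood of $\mathbb{F}_0$ or replacing $\tilde{\mathbb{F}}_n$ by a projection onto $\mathscr{F}$, which perturbs it by $O_P(n^{-1/2})$ and is absorbed into the $o_P(1)$ remainder after application of the Hadamard derivative.
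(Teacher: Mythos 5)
Your proposal is correct and follows essentially the same route as the paper: both rest on the uniform asymptotic linearity $\sup_{t_0,z_0}|\tilde{\mathbb{F}}_n(t_0,z_0)-\tilde{\mathbb{F}}_0(t_0,z_0)-\frac{1}{n}\sum_i\phi_{t_0,z_0,0}(O_i)|=o_P(n^{-\frac{1}{2}})$ extracted from Theorem~\ref{thm:unif}, then apply Hadamard differentiability via Theorem 20.8 of \cite{van2000asymptotic}, and finish using linearity and boundedness of $h\mapsto\partial\Theta(\tilde{\mathbb{F}}_0;h)$ to absorb the remainder. Your additional remarks on measurability in $\ell^\infty(\mathcal{S}_0)$ and on extending $\Theta$ off $\mathscr{F}$ are sensible refinements the paper leaves implicit, but they do not change the argument.
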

	
	In Section \ref{s:ex}, we explicitly discuss the implications of this result in the context of Examples 4 and 5, two motivating examples provided in Section \ref{s:id} that feature nonlinear functionals arising in applications. Before proceeding, we use the result above to describe how to study problems in which the estimand can be expressed as the solution of an estimating equation.

	\subsection{Application to estimating equations}
	\label{ss:ee}
	Suppose that the estimand of interest $m_0$ can be expressed as the solution in $m$ of the population estimating equation \[\iint U_m(t,z)\,\tilde{\mathbb{F}}_0(dt,dz)=0\ ,\]where for each $m\in\mathbb{R}$ the function $U_m$ maps from $\mathbb{R}\times\mathcal{Z}$ to $\mathbb{R}$, and $U_m$ has finite uniform sectional variation norm  over $\mathcal{S}_0$ in a neighborhood of $m_0$. Quantiles of the marginal distribution of $T(a_0)$, for example, can be expressed in this manner, as explicitly discussed later. Similarly as before, to ensure identification, it must be the case that, for $P_{X,0}$--almost every $z\in\mathcal{Z}$, the mapping $t\mapsto U_m(t,z)$ is constant for $t>\overline{\tau}_C(a_0,z)$. Under certain regularity conditions, the solution $m_n$ of the empirical version of this estimating equation in $m$, $\iint U_m(t,z)\,\tilde{\mathbb{F}}_n(dt,dz)=0$, is an asymptotically linear and regular estimator of $m_0$ with influence function given by \[\phi_{m,0}:o\mapsto-\left\{\left.\frac{\partial}{\partial m}\iint U_m(t,z)\,\tilde{\mathbb{F}}_0(dt,dz)\right|_{m=m_0}\right\}^{-1}\iint U_{m_0}(t,z)\phi_{dt,dz,0}(o)\ ,\]where $\phi_{dt,dz,0}(o)$ denotes the differential of $(t,z)\mapsto \phi_{t,z,0}(o)$ with $\phi_{t,z,0}$ the influence function of $\tilde{\mathbb{F}}_n(t,z)$.

	\section{Revisiting motivating examples}
	\label{s:ex}
	\subsection*{Example 1: marginal survival probability}
	\label{ss:ex1}
	To begin, we consider perhaps the simplest survival functional, the marginal survival probability $\psi_0:=P_{X,0}\left(T\geq \tau\right)$ for some $\tau>0$. Considered as a function of $\tau$, this estimand consists of the marginal survival function, which describes the entire time-to-event distribution and is commonly reported, for example, in studies of the natural history of disease in a population. A marginal survival probability is typically estimated nonparametrically using the Kaplan-Meier estimator, which handles left truncation in addition to right censoring through a straightforward risk-set adjustment \citep{kaplan1958nonparametric,tsai1987biometrika}. However, this estimator builds upon marginal independence between $T$ and $(W,C)$, and is inconsistent when this condition fails to hold. Here, we  allow dependence between $T$ and $(W,C)$ so long as independence holds within strata defined by a baseline covariate vector $Z$.
	
	In recent decades, several authors have developed methods for estimating a marginal survival probability allowing covariate-dependent right censoring (e.g., \citealp{robins1993information, murray1996nonparametric, zeng2004annals,moore2009increasing}) or covariate-dependent left truncation (e.g., \citealp{chaieb2006estimating, mackenzie2012survival, vakulenko2022nonparametric}). However, to the best of our knowledge, there is currently no method in the literature that accounts for covariate-dependent right censoring and left truncation, facilitates the conduct of valid inference even when flexible learning strategies are used to estimate involved nuisance functions, and also enjoys robustness to inconsistent estimation of certain nuisance functions. The recent work of \cite{wang2022doubly} proposes an approach that comes closest to achieving these desiderata, though their procedure does not appear to generally allow flexible estimation of the conditional censoring distribution and robustness to its possibly inconsistent estimation.
	
	Under conditions~\ref{ass:truncation}--\ref{ass:censoring}, the marginal survival probability $S_{X,0}(\tau)$ at fixed time $\tau$ can be expressed as the survival integral
	$\iint I(t\ge \tau)\,F_{X,0}(dt\miid z)\,H_{X,0}(dz)$ corresponding to kernel function $\varphi_\tau(t,z):=I(t\ge \tau)$. Using Theorem~\ref{thm:if}, and taking $A$ to be degenerate in order to recover the simpler setting in which there is no exposure on which we wish to intervene, the nonparametric efficient influence function of the corresponding parameter under sampling from $P_0$ is given by
	\begin{align*}
		&o\mapsto - S_{X,0}(\tau\miid z)\bar\gamma_0(z)\left[\frac{\delta I(y\le \tau)}{R_0(y\miid z)} - \int_w^{y\wedge \tau}\frac{\Lambda_{X,0}(du\miid z)}{R_0(u\miid  z)}\right]\\
		&\hspace{1in}+ \frac{S_{X,0}(\tau\miid z) - \psi_0}{\gamma_0}\left\{\frac{1}{S_{X,0}(w\miid z)} - \phi_{\text{KM},0}(\gamma_{0,\natural})(z, w, y,\delta)\right\}.
	\end{align*} We can use our uniform results to construct an estimator and make inference on the marginal survival function $\tau\mapsto S_{X,0}(\tau)$ in an interval within which identification is possible. To enforce monotonicity of the resulting estimator, the resulting function-valued estimator can be projected into the space of monotone functions using isotonic regression~\citep{westling2020correcting}. 
	
	\subsection*{Example 2: Brier score}
	\label{ss:ex2}
	The Brier score is defined as the survival integral $\psi_0:=\iint \{I(t\geq \tau)-b(z)\}^2\,F_{X,0}(dt\miid z)\,H_{X,0}(dz)$ corresponding to kernel function $\varphi_\tau(t,z):=\{I(t\geq \tau)-b(z)\}^2$, and has been used to quantify the predictive performance of a given algorithm $z\mapsto b(z)$ for predicting whether the event will occur by some fixed time $\tau$ \citep{brier1950verification}. Here again, we consider $A$ to be degenerate since there is no exposure on which we wish to intervene in this example. The Brier score can be used to compare the performance of several prediction models. In the absence of left truncation, \cite{gerds2006consistent} derived an inferential procedure allowing conditionally-independent right censoring. In the more general case of left-truncated right-censored data, Theorem~\ref{thm:if} implies that the nonparametric influence function for the corresponding parameter under sampling from $P_0$ is given by 
	\begin{align*}
		&o\mapsto  -{ S_{X,0}(\tau\miid z)\bar\gamma_0(z)}\{1 - 2b(z)\}\left\{\frac{\delta I(y\le \tau)}{R_0(y\miid z)} - \int_w^{y\wedge \tau}\frac{\Lambda_{X,0}(du\miid z)}{R_0(u\miid z)}\right\}\\
		&\hspace{1in}+ \frac{\left\{S_{X,0}(\tau\miid z) - b (z)\right\}^2 - \psi_0}{\gamma_0}\left\{\frac{1}{S_{X,0}(w\miid z)} - \phi_{\text{KM},0}(\gamma_{0,\natural})(z, w, y,\delta)\right\}.
	\end{align*}Since it is often used heuristically to compare the performance of competing prediction algorithms, the Brier score is often not itself the target of inference in a given problem. Nevertheless, our results make it straightforward to perform formal tests to rigorously compare candidate prediction algorithms.
	
	\subsection*{Example 3: counterfactual survival probability}
	\label{ss:ex3}
	The counterfactual (or treatment-specific) survival probability $\psi_0:=\mathbb{P}_0\left\{T(a_0)\geq \tau\right\}$, defined as the probability of survival beyond a given time $\tau$ under an intervention that sets treatment (or exposure) to a specified level, is a useful summary to study treatment effects in the context of time-to-event endpoints.  Differences or ratios in counterfactual survival probabilities across treatment levels are commonly used in clinical trials to quantify treatment effects.
	
	In addition to adjustment needed for possibly covariate-dependent right censoring and left truncation, in order to identify the counterfactual survival probability, adjustment for possible confounding between treatment level and survival is also required.
	Under typical causal conditions, including positivity and conditional randomization, the counterfactual survival distribution is identified by the survival integral
	$\iint I(t\ge \tau)\,F_{X,0}(dt\miid a_0, z)\,H_{X,0}(dz)$ corresponding to kernel function $\varphi_\tau(t,z):=I(t\geq \tau)$. This is the same expression as in Example 1 but without degeneracy of $A$. \cite{westling2021inference} derived a nonparametric efficient estimation procedure for a counterfactual survival probability based on right-censored data. In view of Theorem~\ref{thm:if}, the nonparametric efficient influence function of the corresponding parameter under sampling from $P_0$ is given by 
	\begin{align*}
		&o\mapsto- S_{X,0}(\tau\miid a_0,z)\bar\gamma_0(z)\frac{ I(a=a_0)}{\pi_0(z)}\left[\frac{\delta I(y\le \tau)}{R_0(y\miid a_0,z)} - \int_w^{y\wedge \tau}\frac{\Lambda_{X,0}(du\miid a_0,z)}{R_0(u\miid a_0, z)}\right]\\
		&\hspace{1in}+\frac{S_{X,0}(\tau\miid a_0, z) - \psi_0}{\gamma_0}\left\{\frac{1}{S_{X,0}(w\miid a_0, z)} - \phi_{\text{KM},0}(\gamma_{0,\natural})(z, a, w, y,\delta)\right\}.
	\end{align*}When there is no truncation, this influence function simplifies and agrees exactly with that provided in Theorem 2 of \cite{westling2021inference}. When $A$ is degenerate, we recover the expression obtained in Example 1. As in Example 1, our results readily described how to perform uniform inference for the counterfactual survival function over an interval on which identification is possible.
	
	\subsection*{Example 4: median counterfactual event time}
	\label{ss:ex4}
	Due to the presence of right censoring, summaries of the (marginal or counterfactual) time-to-event distribution that depend on its right tail --- for example, the mean event time --- are typically not identified. Because it generally circumvents this right-tail issue, and also because it affords an interpretation that is relevant in many scientific problems, the median event time is often used instead of the mean survival time in survival analysis. Nonparametric inference on the median marginal event time with right-censored data has a long history in survival analysis, dating at least as far back as \cite{reid1981estimating}. Corresponding results for the median counterfactual event time are far more recent, with \cite{diaz2017efficient} and \cite{shepherd2022confounding} studying the problem in the absence of censoring and truncation.
	
	Here, we are specifically interested in nonparametric inference on the median $m_0$ of the counterfactual time-to-event distribution, namely the median corresponding to the marginal distribution of $T(a_0)$, based on left-truncated right-censored data. To the best of our knowledge, this problem has not been studied before. Of course, as discussed in Examples 1 and 3, results obtained for the median counterfactual event time imply results for the median marginal event time.
	While the median counterfactual event time cannot be expressed as a survival integral, it can be framed as the solution of an estimating equation based on a survival integral. Specifically, since $\iint U_m(t,z)\,\tilde{\mathbb{F}}_0(dt,dz)=0$ if and only if $m=m_0$ with $U_m(t,z):=I(t\leq m)-0.5$ under mild conditions, results from the last section can be directly used to characterize the median counterfactual event time parameter. We first note that
	\begin{align*}
		\left.\frac{\partial}{\partial m}\iint U_m(t,z)\,\tilde{\mathbb{F}}_0(dt,dz)\right|_{m=m_0}\ &=\ \left.\frac{\partial}{\partial m}\iint U_m(t,z)\,F_{X,0}(dt\miid a_0,z)\,H_{X,0}(dz)\right|_{m=m_0}\\
		&=\ \int  f_{X,0}(m_0\miid a_0,z)\,H_{X,0}(dz)\ =\ \int  f_{X,0}(m_0\miid a_0,z)\,\tilde{H}_{0}(dz)\ , 
	\end{align*}
	which is simply the G-computation identification of the density function of $T(a_0)$ evaluated at $m_0$. Results from Section \ref{ss:ee} then yield that, under sampling from $P_0$, the nonparametric influence function of the counterfactual median survival time parameter is given by \[o\mapsto \frac{\iint \{0.5-I(t\leq m_0)\}\,\phi_{dt,dz}(o)}{\int f_{X,0}(m_0\miid a_0,z)\tilde H_0(dz)}\ .\]
	
	\subsection*{Example 5: model-agnostic measure of dependence of $T(a_0)$ on $a_0$}
	\label{ss:ex5}
	Our final example serves to illustrate that the class of parameters covered by our theoretical results is large enough to include methodologically challenging estimands of scientific interest. An assumption-lean approach to Cox regression recently proposed in \cite{vansteelandt2022assumption} is based on a nonparametric projection estimand that, in the case of a binary exposure $A$, summarizes  $(t,z)\mapsto \log\Lambda_{X,0}(t\miid 1, z)-\log \Lambda_{X,0}(t\miid 0, z)$, the difference in log conditional cumulative hazard functions. In this spirit, we consider here a novel version of this estimand, defined as the contrast $\mathbb{L}(1)-\mathbb{L}(0)$ with \[\mathbb{L}(a_0):=\int \omega(t)\log\left[-\log\mathbb{P}_0\{T(a_0) > t\}\right]\,dt\] for $a_0\in\{0,1\}$ and some fixed weight function $\omega$ such that $\int \omega(t)\,dt=1$. We have used here that $-\log \mathbb{P}_0\{T(a_0)>t\}$ equals the counterfactual cumulative hazard function at $t$ under continuity of the distribution function. The function $\omega$ is taken to emphasize scientifically relevant values of $t$ and to also restrict attention to values of $t$ at which the distribution functions of $T(0)$ and $T(1)$ are identified. If the structural Cox model $\mathbb{\Lambda}_{1}(t)=\mathbb{\Lambda}_0(t)\exp(\rho)$ holds for some $\rho\in\mathbb{R}$ and all $t\geq 0$, with $\mathbb{\Lambda}_{a_0}(t)$ denoting the cumulative hazard function of the counterfactual survival time $T(a_0)$ evaluated at $t$, the causal contrast $\mathbb{L}(1)-\mathbb{L}(0)$ corresponds to the structural parameter $\rho$.
	
	Notably, this estimand cannot be expressed as a survival integral; nevertheless, it is generally a Hadamard differentiable functional of the counterfactual distribution function. Specifically, writing $\Theta(\tilde{\mathbb{F}}):=\int \omega(t) \log[-\log \{1-\tilde{\mathbb{F}}(t)\}]\,dt$, the corresponding parameter is the difference between the evaluation of $\Theta$ on identifications of the counterfactual distribution function of $T(1)$ and on that of $T(0)$. In view of Theorem~\ref{thm:hada}, it is not difficult to show that, under sampling from $P_0$, this parameter has nonparametric efficient influence function given by
	\begin{align*}
		&o\mapsto \int \omega(t) \left[\frac{\phi_{1,t}(o)}{\{1-\tilde{\mathbb{F}}_1(t)\}\log\{1-\tilde{\mathbb{F}}_1(t)\}}-\frac{\phi_{0,t}(o)}{\{1-\tilde{\mathbb{F}}_0(t)\}\log\{1-\tilde{\mathbb{F}}_0(t)\}}\right]dt\ ,
	\end{align*}
	where $\phi_{a_0,t}$ is the nonparametric efficient influence function of the parameter identifying the cumulative distribution $\mathbb{F}_{a_0}$ of $T(a_0)$ at $t$, presented explicitly in Example~3. While this result applies when $\omega$ is fixed, it is possible to derive extended results for a weight function indexed by $P_0$ itself although those calculations would typically need to be done on a case-by-case basis unless the dependence of the weight function on $P_0$ is smooth enough.
	
	\section{Numerical illustrations}\label{s:sim}
	
	We now present results from a simulation study based on Example 1 described above. Specifically, we make inference about a marginal survival function in the absence of a treatment variable but including covariate-dependent censoring and truncation. We considered settings with high, low and no truncation levels, and high and low censoring levels.
	
	Covariates $Z_1$, $Z_2$, and $Z_3$ are independent random variables distributed uniformly on the set $\{-1,+1\}$. Given covariate vector $Z:=(Z_1,Z_2,Z_3) = z$ with $z:=(z_1,z_2,z_3)$, the study entry time variable $W$ is distributed as $10U$, where $U$ is a Beta random variable with parameters $\alpha=a(z_1)$ and $\beta=b(z_1)$. In the low truncation setting (25\% truncation), we set $a(z_1) = 1$ and $b(z_1) = 1+ 2I(z_1 < 0)$; in the high truncation setting (50\% truncation), we set $a(z_1) = 1 + I(z_1 > 0)$ and $b(z_1) = 1$. Given $W = w$ and $Z = z$, the censoring time is taken  $C=w+D$ with $D$ an independent random variable generated from a Gamma distribution with shape $k$ and scale $\lambda_C = \exp\{-\frac{1}{10}(z_1 + z_2 + z_3)\}$. The parameter $k$ was chosen to yield a low (25\%) or high (50\%) censoring rate depending on the simulation scenario. Given covariate vector $Z = z$, we independently simulated the event time $T$ from a Gamma distribution with shape 6 and scale $\lambda_T = \exp\{\frac{1}{10}(z_1 + z_2 + z_3)\}$.

	The marginal survival function was estimated at times corresponding to the first four quintiles of the population time-to-event distribution. Performance was assessed using the following metrics:\begin{enumerate}[(a)]
		\item empirical bias, scaled by $n^{\frac{1}{2}}$;\vspace{-.15in}
		\item empirical variance, scaled by $n$;\vspace{-.15in}
		\item pointwise confidence interval coverage.
	\end{enumerate} We note here that, in the absence of a treatment variable $A$, estimators $\psi_n^*$ and $\psi_n^{**}$ coincide with each other. As such, all results presented here pertain to this common estimation procedure. Each nuisance function featured in the construction of the proposed estimator were estimated using flexible approaches.  The target conditional time-to-event and censoring survival functions were estimated using global survival stacking \citep{wolock2024framework}, a recent machine learning-based method for estimating conditional survival functions using censored and/or truncated data, with a learner library consisting of the empirical mean, logistic regression, generalized additive models, multivariate adaptive regression splines, random forests, and gradient-boosted trees. For comparison, a correctly specified parametric regression model was also included. The observable conditional truncation distribution was estimated using the stratified empirical distribution function. The five-fold cross-fitted one-step estimator was compared to the Kaplan-Meier estimator \citep{kaplan1958nonparametric}, which assumes independent censoring and truncation.
	
	Across all six scenarios considered, our simulation study reveals similar patterns. The debiased global stacking-based estimator, as implemented in Section~\ref{ss:pinf}, has negligible bias, appropriate coverage, and similar variability as the debiased estimator based on correctly specified parametric nuisance estimators. This is not surprising, since nuisance estimators only play a second-order role in the behavior of debiased estimators --- in fact, as long as nuisance functions are estimated sufficiently well, the debiased estimator with estimated nuisances is asymptotically equivalent to the (oracle) debiased estimator based on the true nuisance functions. As expected, the Kaplan-Meier is biased in all settings, but resulting confidence intervals can be either conservative or anticonservative.

	\begin{figure}
		\begin{center}
			\includegraphics[width=\linewidth]{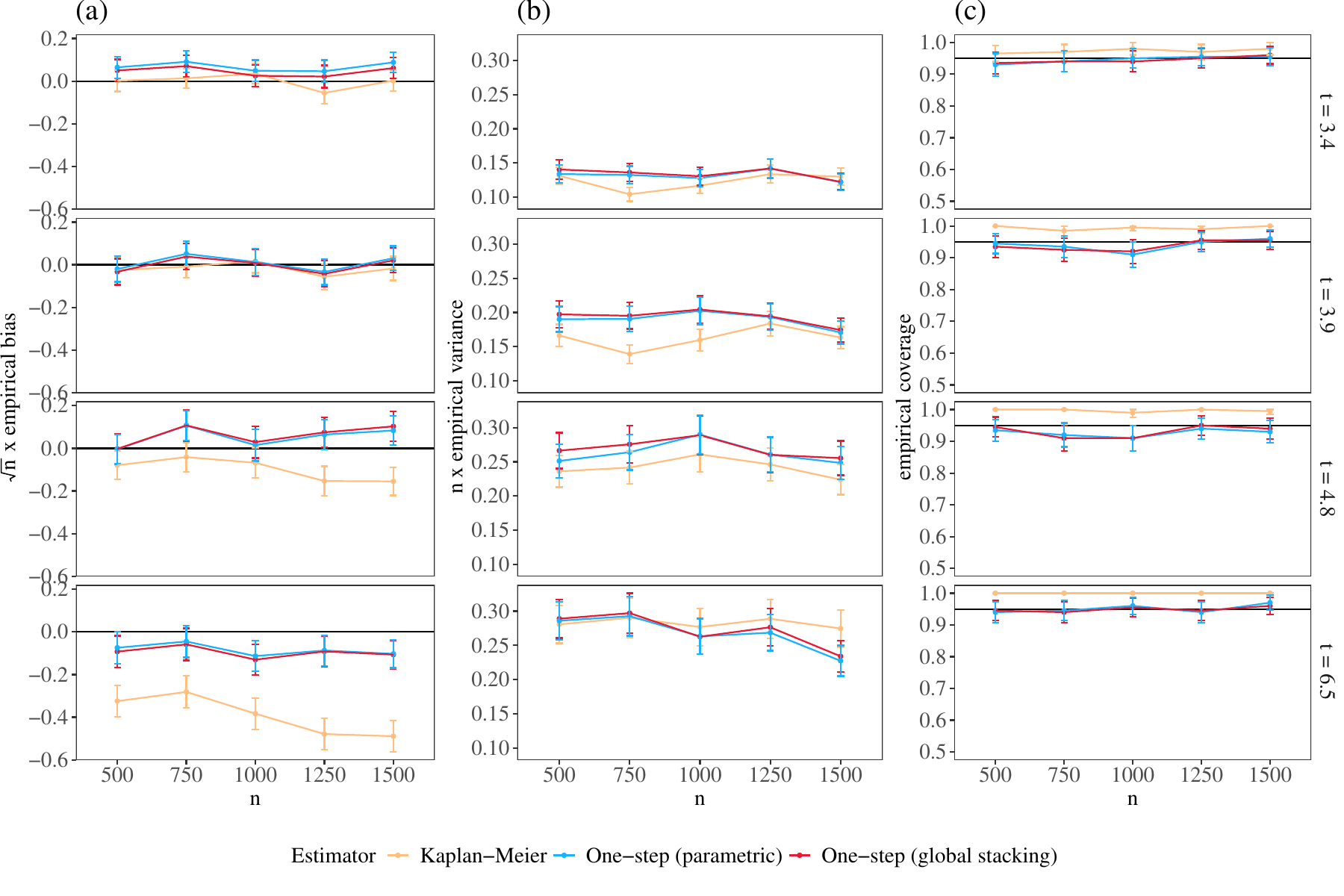}
		\end{center}
		\vspace{-0.5cm}
		\caption{Summary of inferential performance metrics for simulation scenario with 25\% censoring and no truncation. (a) empirical bias scaled by $n^{\frac{1}{2}}$; (b) empirical variance scaled by $n$; (c) empirical coverage of nominal 95\% confidence intervals. Rows correspond to the four times at which the survival function was estimated. The colors denote different estimators, including the Kaplan-Meier estimator and the proposed one-step estimator with nuisance parameters estimated using either correctly specified parametric models or global survival stacking. Vertical bars represent 95\% confidence intervals accounting for Monte Carlo error.}
		\label{fig:C25T0}
	\end{figure}
	
	\begin{figure}
		\begin{center}
			\includegraphics[width=\linewidth]{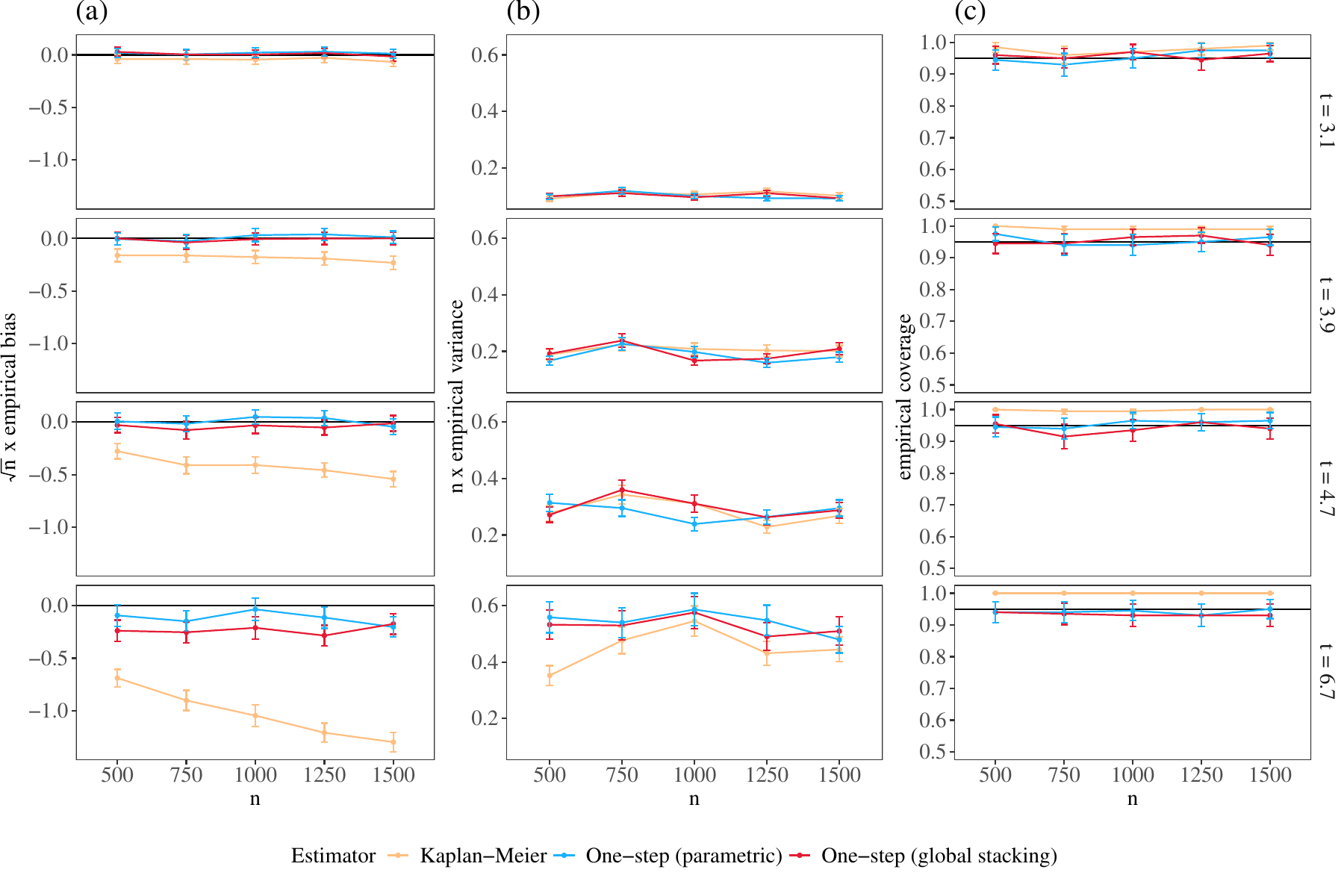}
		\end{center}
		\vspace{-0.5cm}
		\caption{Summary of inferential performance metrics for simulation scenario with 50\% censoring and no truncation. (a) empirical bias scaled by $n^{\frac{1}{2}}$; (b) empirical variance scaled by $n$; (c) empirical coverage of nominal 95\% confidence intervals. Rows correspond to the four times at which the survival function was estimated. The colors denote different estimators, including the Kaplan-Meier estimator and the proposed one-step estimator with nuisance parameters estimated using either correctly specified parametric models or global survival stacking. Vertical bars represent 95\% confidence intervals accounting for Monte Carlo error.}
		\label{fig:C50T0}
	\end{figure}
	
	\begin{figure}
		\begin{center}
			\includegraphics[width=\linewidth]{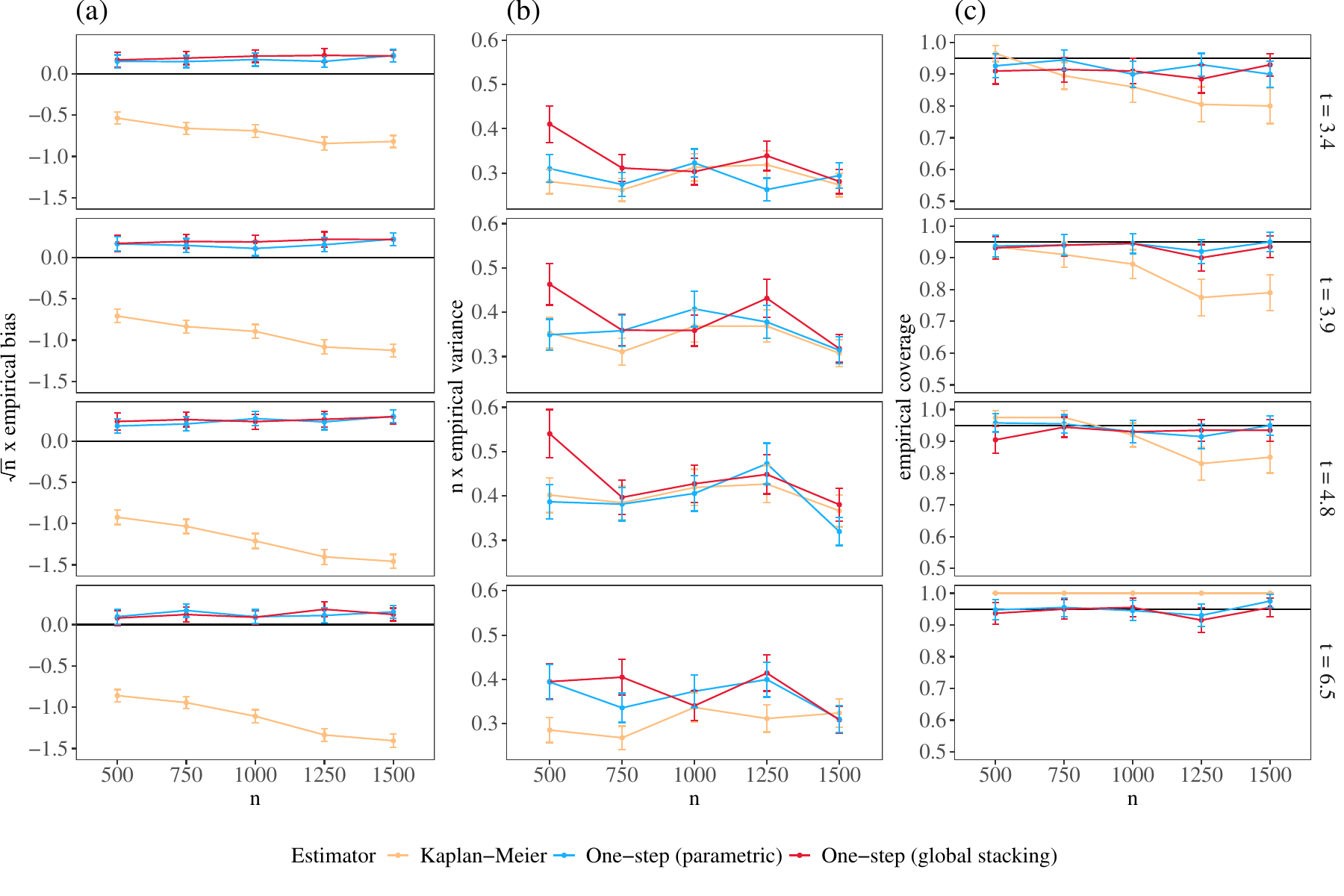}
		\end{center}
		\vspace{-0.5cm}
		\caption{Summary of inferential performance metrics for simulation scenario with 25\% censoring and 25\% truncation. (a) empirical bias scaled by $n^{\frac{1}{2}}$; (b) empirical variance scaled by $n$; (c) empirical coverage of nominal 95\% confidence intervals. Rows correspond to the four times at which the survival function was estimated. The colors denote different estimators, including the Kaplan-Meier estimator and the proposed one-step estimator with nuisance parameters estimated using either correctly specified parametric models or global survival stacking. Vertical bars represent 95\% confidence intervals accounting for Monte Carlo error.}
		\label{fig:C25T25}
	\end{figure}
	
	\begin{figure}
		\begin{center}
			\includegraphics[width=\linewidth]{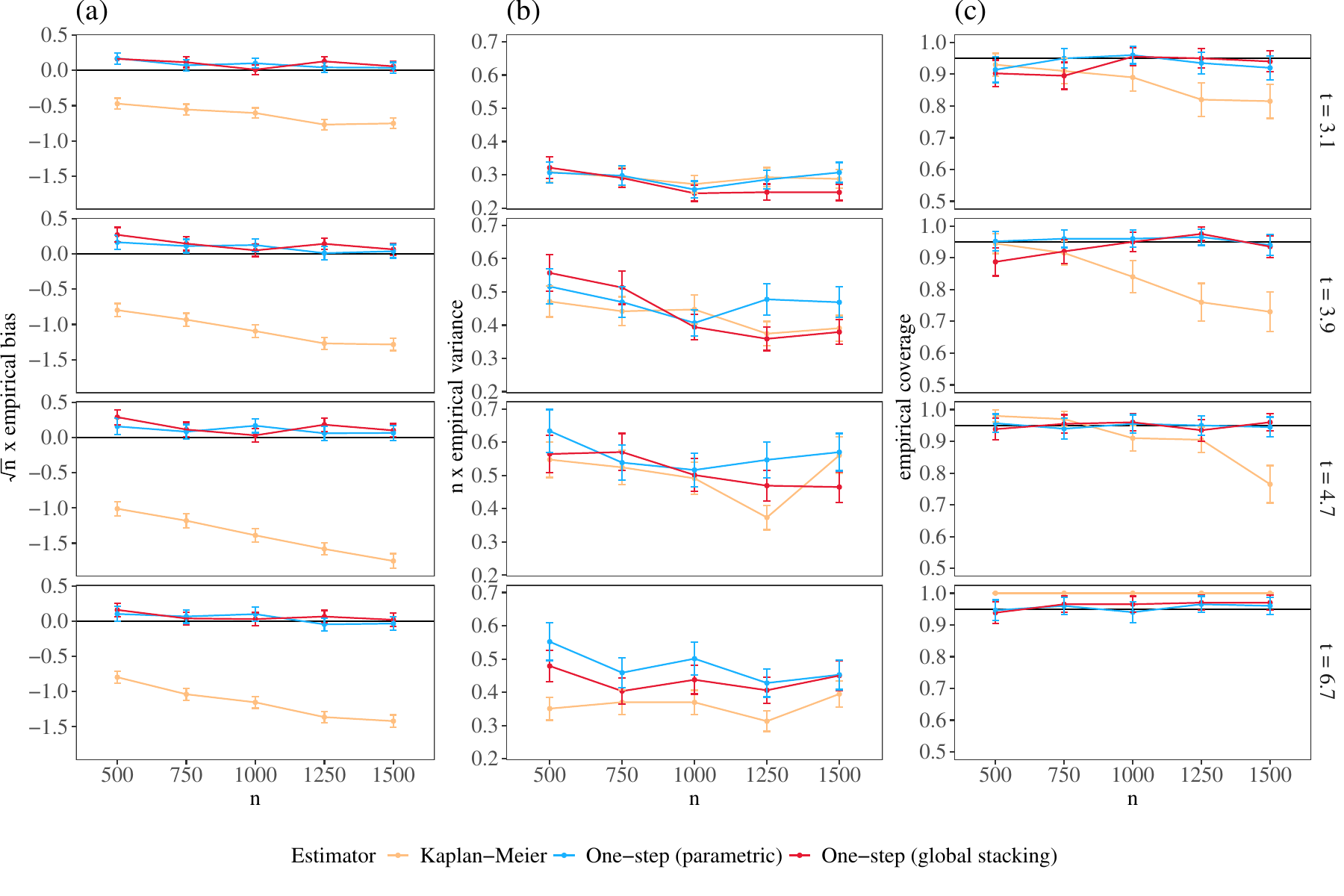}
		\end{center}
		\vspace{-0.5cm}
		\caption{Summary of inferential performance metrics for simulation scenario with 50\% censoring and 25\% truncation. (a) empirical bias scaled by $n^{\frac{1}{2}}$; (b) empirical variance scaled by $n$; (c) empirical coverage of nominal 95\% confidence intervals. Rows correspond to the four times at which the survival function was estimated. The colors denote different estimators, including the Kaplan-Meier estimator and the proposed one-step estimator with nuisance parameters estimated using either correctly specified parametric models or global survival stacking. Vertical bars represent 95\% confidence intervals accounting for Monte Carlo error.}
		\label{fig:C50T25}
	\end{figure}
	
	\begin{figure}
		\begin{center}
			\includegraphics[width=\linewidth]{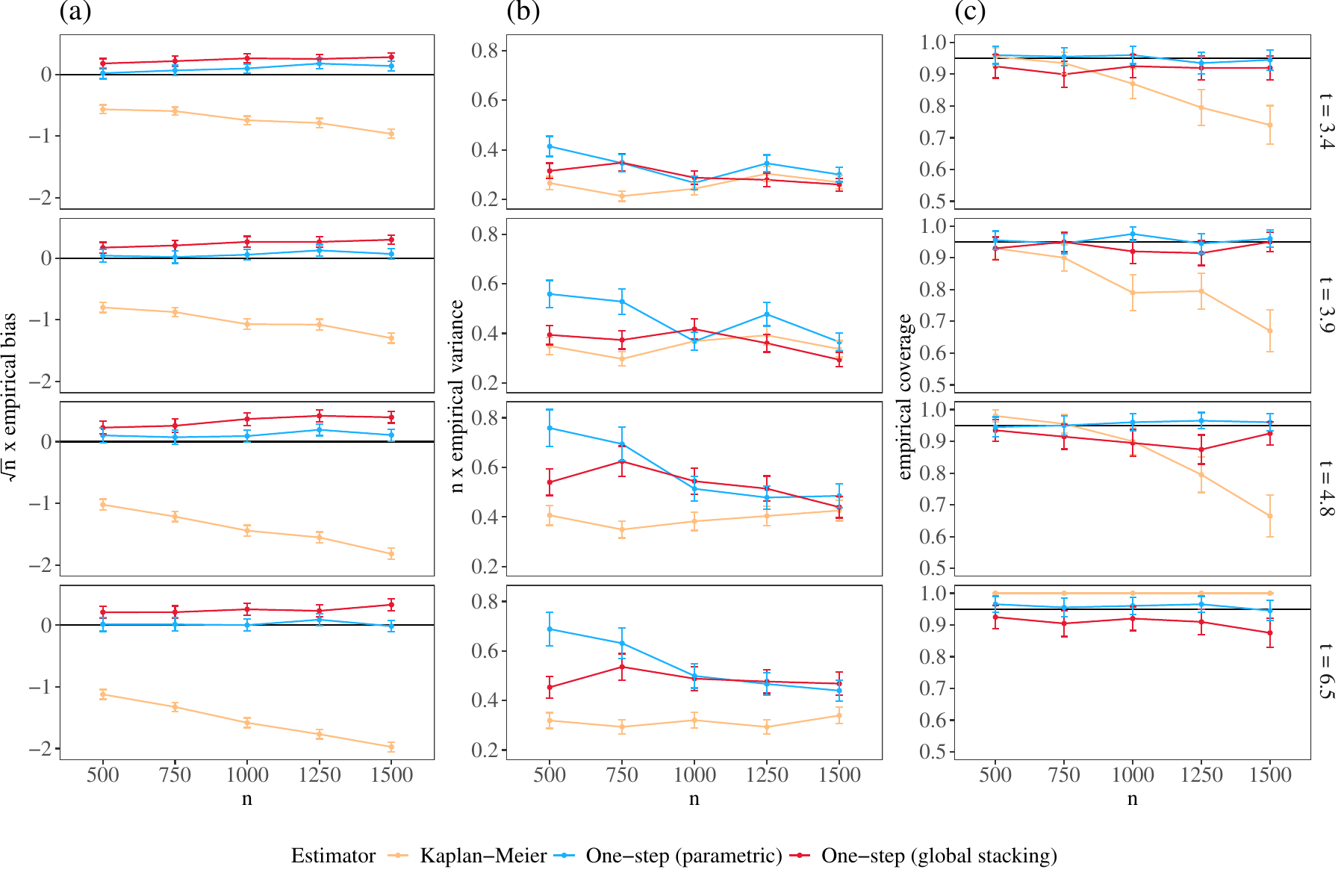}
		\end{center}
		\vspace{-0.5cm}
		\caption{Summary of inferential performance metrics for simulation scenario with 25\% censoring and 50\% truncation. (a) empirical bias scaled by $n^{\frac{1}{2}}$; (b) empirical variance scaled by $n$; (c) empirical coverage of nominal 95\% confidence intervals. Rows correspond to the four times at which the survival function was estimated. The colors denote different estimators, including the Kaplan-Meier estimator and the proposed one-step estimator with nuisance parameters estimated using either correctly specified parametric models or global survival stacking. Vertical bars represent 95\% confidence intervals accounting for Monte Carlo error.}
		\label{fig:C25T50}
	\end{figure}
	
	\begin{figure}
		\begin{center}
			\includegraphics[width=\linewidth]{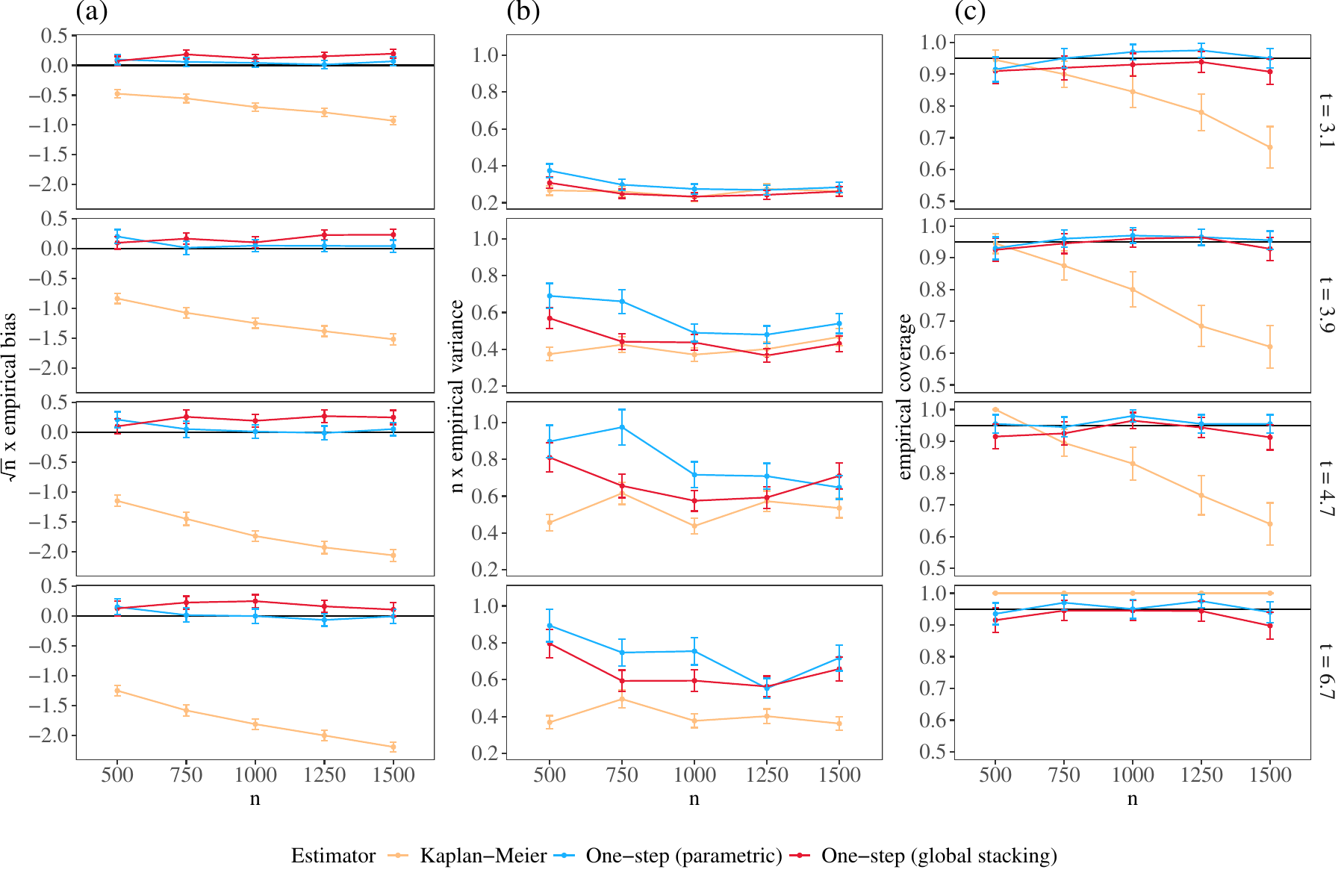}
		\end{center}
		\vspace{-0.5cm}
		\caption{Summary of inferential performance metrics for simulation scenario with 50\% censoring and 50\% truncation. (a) empirical bias scaled by $n^{\frac{1}{2}}$; (b) empirical variance scaled by $n$; (c) empirical coverage of nominal 95\% confidence intervals. Rows correspond to the four times at which the survival function was estimated. The colors denote different estimators, including the Kaplan-Meier estimator and the proposed one-step estimator with nuisance parameters estimated using either correctly specified parametric models or global survival stacking. Vertical bars represent 95\% confidence intervals accounting for Monte Carlo error.}
		\label{fig:C50T50}
	\end{figure}
	
	\clearpage
	
	\section{Concluding remarks}\label{s:concl}
	
	We have developed and studied debiased machine learning methods for making statistical inference on summaries of a counterfactual time-to-event distribution using left-truncated right-censored data. These methods require, as an intermediate step, estimation of various nuisance functions. However, in view of the influence function-based debiasing procedure employed, the use of machine learning techniques is allowed to estimate these nuisances, thereby reducing the risk of inconsistent estimation. In particular, through the use of adaptive ensemble learning, this fact renders more realistic the rate conditions imposed on the nuisance estimators.
	
	Identification of a summary of the time-to-event distribution requires that the set of available covariates be sufficiently rich to explain any dependence between the treatment allocation mechanism and counterfactual outcomes, between the truncation and event times, and between the censoring and event times. While in this work we have focused on baseline covariates exclusively, our methods could be extended to allow covariates recorded at (post-baseline) study entry to possibly inform the relationship between the censoring and event times. Identification also requires that the counterfactual time-to-event distribution itself be identified over a sufficient large portion of its support to allow computation of the summary of interest. Right censoring often precludes identification of the right tail of a time-to-event distribution, rendering unidentifiable summaries that depend on the right tail, such as moments of the time-to-event distribution. Interestingly, in some cases, left truncation can help restore the identifiability of this right tail; this occurs, for example, when left truncation arises due to cross-sectional sampling, and the censoring mechanism only acts on the portion of the event time under follow-up (i.e., from study entry and on) and is independent of the study entry time. Thus, in any given application, it is important to determine the extent to which the time-to-event support may be recovered in a given application, and to consider its implications on which summary can be identified.
	
	The two estimation procedures we derived exhibited some level of robustness to the estimation of involved nuisance functions. Specifically, they allowed a certain degree of inconsistent nuisance estimation under which the summary of interest is still estimated consistently. Of the estimators proposed, we noted that estimating equations-based estimator is qualitatively more robust than the one-step estimator --- this is an interesting example of a setting in which two constructive approaches for nonparametric inference, while equivalent when all nuisances are estimated sufficiently well, differ in behavior when that is not the case. Both procedures considered required consistent estimation of the target conditional time-to-event distribution and the observable conditional truncation distribution; in other words, neither exhibits robustness to inconsistent estimation of these nuisances. In future work, it is important to consider how the use of different parametrizations may lead to different --- and possibly more permissive --- robustness profiles. Additionally, while the robustness discussed here pertains to preservation of consistency, it may be fruitful to also consider how to achieve preservation of asymptotic linearity so that robust confidence intervals and $p$-values may also be constructed, along the lines of \cite{benkeser2017doubly}, for example. 
	
	While the class of summaries we considered in this article is broad, it does not include all functionals for which parametric-rate inference is possible. For example, some summaries that depend inextricably on the counterfactual time-to-event density function fall outside the class considered and appear more difficult to tackle in generality. Similarly, it is challenging to characterize inference for survival integrals for which the kernel $\varphi$ depends on the underlying distribution $P_{X,0}$. However, such survival integrals do arise in contemporary applications, and the developments provided here serve as important building blocks for the study of such integral estimands.
	
	\vspace{1in}
	
	\noindent\textbf{Acknowledgements.} This work was performed as part of the completed doctoral dissertation research of E. Morenz. Financial support was provided by the National Heart, Lung, and Blood Institute through grant R01-HL137808 and by the National Science Foundation Graduate Research Fellowship Program through grant DGE-2140004. The content is solely the responsibility of the authors and does not necessarily represent the official views of the funding agencies.

	\singlespacing
	\bibliographystyle{apa}
	\bibliography{library}

\begin{thebibliography}{}

\bibitem[\protect\astroncite{Benkeser et~al.}{2017}]{benkeser2017doubly}
Benkeser, D., Carone, M., van~der Laan, M., and Gilbert, P. (2017).
\newblock Doubly robust nonparametric inference on the average treatment
  effect.
\newblock {\em Biometrika}, 104(4):863--880.

\bibitem[\protect\astroncite{Brier}{1950}]{brier1950verification}
Brier, G.~W. (1950).
\newblock Verification of forecasts expressed in terms of probability.
\newblock {\em Monthly Weather Review}, 78(1):1--3.

\bibitem[\protect\astroncite{Chaieb et~al.}{2006}]{chaieb2006estimating}
Chaieb, L.~L., Rivest, L.-P., and Abdous, B. (2006).
\newblock Estimating survival under a dependent truncation.
\newblock {\em Biometrika}, 93(3):655--669.

\bibitem[\protect\astroncite{Chernozhukov
  et~al.}{2018}]{chernozhukov2018double}
Chernozhukov, V., Chetverikov, D., Demirer, M., Duflo, E., Hansen, C., Newey,
  W., and Robins, J. (2018).
\newblock Double/debiased machine learning for treatment and structural
  parameters.
\newblock {\em The Econometrics Journal}, 21(1):C1--C68.

\bibitem[\protect\astroncite{D{\'\i}az}{2017}]{diaz2017efficient}
D{\'\i}az, I. (2017).
\newblock Efficient estimation of quantiles in missing data models.
\newblock {\em Journal of Statistical Planning and Inference}, 190:39--51.

\bibitem[\protect\astroncite{Gerds et~al.}{2017}]{gerds2017kaplan}
Gerds, T.~A., Beyersmann, J., Starkopf, L., Frank, S., Laan, M.~J., and
  Schumacher, M. (2017).
\newblock The {K}aplan-{M}eier integral in the presence of covariates: A
  review.
\newblock {\em From Statistics to Mathematical Finance}, pages 25--41.

\bibitem[\protect\astroncite{Gerds and Schumacher}{2006}]{gerds2006consistent}
Gerds, T.~A. and Schumacher, M. (2006).
\newblock Consistent estimation of the expected {B}rier score in general
  survival models with right-censored event times.
\newblock {\em Biometrical Journal}, 48(6):1029--1040.

\bibitem[\protect\astroncite{Gill and Johansen}{1990}]{gill1990survey}
Gill, R.~D. and Johansen, S. (1990).
\newblock A survey of product-integration with a view toward application in
  survival analysis.
\newblock {\em The annals of statistics}, pages 1501--1555.

\bibitem[\protect\astroncite{Ibragimov and
  Has'~Minskii}{1981}]{ibragimov1981statistical}
Ibragimov, I.~A. and Has'~Minskii, R.~Z. (1981).
\newblock {\em Statistical estimation: asymptotic theory}, volume~2.
\newblock Springer Science \& Business Media.

\bibitem[\protect\astroncite{Kaplan and Meier}{1958}]{kaplan1958nonparametric}
Kaplan, E.~L. and Meier, P. (1958).
\newblock Nonparametric estimation from incomplete observations.
\newblock {\em Journal of the American Statistical Association},
  53(282):457--481.

\bibitem[\protect\astroncite{Mackenzie}{2012}]{mackenzie2012survival}
Mackenzie, T. (2012).
\newblock Survival curve estimation with dependent left truncated data using
  {C}ox's model.
\newblock {\em The International Journal of Biostatistics}, 8(1).

\bibitem[\protect\astroncite{Moore and van~der
  Laan}{2009}]{moore2009increasing}
Moore, K.~L. and van~der Laan, M.~J. (2009).
\newblock Increasing power in randomized trials with right censored outcomes
  through covariate adjustment.
\newblock {\em Journal of Biopharmaceutical Statistics}, 19(6):1099--1131.

\bibitem[\protect\astroncite{Murray and
  Tsiatis}{1996}]{murray1996nonparametric}
Murray, S. and Tsiatis, A.~A. (1996).
\newblock Nonparametric survival estimation using prognostic longitudinal
  covariates.
\newblock {\em Biometrics}, pages 137--151.

\bibitem[\protect\astroncite{Pfanzagl}{1982}]{pfanzagl1985contributions}
Pfanzagl, J. (1982).
\newblock {\em Contributions to a general asymptotic statistical theory}.
\newblock Springer.

\bibitem[\protect\astroncite{Reid}{1981a}]{reid1981estimating}
Reid, N. (1981a).
\newblock Estimating the median survival time.
\newblock {\em Biometrika}, 68(3):601--608.

\bibitem[\protect\astroncite{Reid}{1981b}]{reid1981censored}
Reid, N. (1981b).
\newblock Influence function for censored data.
\newblock {\em The Annals of Statistics}, 9(1):78--92.

\bibitem[\protect\astroncite{Robins et~al.}{1993}]{robins1993information}
Robins, J.~M. et~al. (1993).
\newblock Information recovery and bias adjustment in proportional hazards
  regression analysis of randomized trials using surrogate markers.
\newblock In {\em Proceedings of the Biopharmaceutical section, American
  Statistical Association}, volume~24, page~3. San Francisco CA.

\bibitem[\protect\astroncite{Shepherd and
  Moreno-Betancur}{2022}]{shepherd2022confounding}
Shepherd, D.~A. and Moreno-Betancur, M. (2022).
\newblock Confounding-adjustment methods for the difference in medians.
\newblock {\em arXiv preprint arXiv:2207.05940}.

\bibitem[\protect\astroncite{Stute}{1994}]{stute1994bias}
Stute, W. (1994).
\newblock The bias of {K}aplan-{M}eier integrals.
\newblock {\em Scandinavian Journal of Statistics}, pages 475--484.

\bibitem[\protect\astroncite{Tsai et~al.}{1987}]{tsai1987biometrika}
Tsai, W.-Y., Jewell, N.~P., and Wang, M.-C. (1987).
\newblock A note on the product-limit estimator under right censoring and left
  truncation.
\newblock {\em Biometrika}, 74(4):883--886.

\bibitem[\protect\astroncite{Vakulenko-Lagun
  et~al.}{2022}]{vakulenko2022nonparametric}
Vakulenko-Lagun, B., Qian, J., Chiou, S.~H., Wang, N., and Betensky, R.~A.
  (2022).
\newblock Nonparametric estimation of the survival distribution under
  covariate-induced dependent truncation.
\newblock {\em Biometrics}, 78(4):1390--1401.

\bibitem[\protect\astroncite{van~der Laan and Robins}{2003}]{van2003unified}
van~der Laan, M.~J. and Robins, J.~M. (2003).
\newblock {\em Unified Methods for Censored Longitudinal Data and Causality},
  volume~5.
\newblock Springer.

\bibitem[\protect\astroncite{van~der Vaart}{2000}]{van2000asymptotic}
van~der Vaart, A.~W. (2000).
\newblock {\em Asymptotic Statistics}, volume~3.
\newblock Cambridge University Press.

\bibitem[\protect\astroncite{Vansteelandt
  et~al.}{2022}]{vansteelandt2022assumption}
Vansteelandt, S., Dukes, O., Van~Lancker, K., and Martinussen, T. (2022).
\newblock Assumption-lean {C}ox regression.
\newblock {\em Journal of the American Statistical Association}, pages 1--10.

\bibitem[\protect\astroncite{Wang et~al.}{2024}]{wang2022doubly}
Wang, Y., Ying, A., and Xu, R. (2024).
\newblock Doubly robust estimation under covariate-induced dependent left
  truncation.
\newblock {\em Biometrika}, 111(3):789--808.

\bibitem[\protect\astroncite{Westling et~al.}{2024}]{westling2021inference}
Westling, T., Luedtke, A., Gilbert, P.~B., and Carone, M. (2024).
\newblock Inference for treatment-specific survival curves using machine
  learning.
\newblock {\em Journal of the American Statistical Association},
  119(546):1541--1553.

\bibitem[\protect\astroncite{Westling et~al.}{2020}]{westling2020correcting}
Westling, T., van~der Laan, M.~J., and Carone, M. (2020).
\newblock Correcting an estimator of a multivariate monotone function with
  isotonic regression.
\newblock {\em Electronic Journal of Statistics}, 14(2):3032.

\bibitem[\protect\astroncite{Wolfson et~al.}{2001}]{wolfson2001nejm}
Wolfson, C., Wolfson, D.~B., Asgharian, M., M'Lan, C.~E., {\O}stbye, T.,
  Rockwood, K., and Hogan, D. (2001).
\newblock A reevaluation of the duration of survival after the onset of
  dementia.
\newblock {\em New England Journal of Medicine}, 344(15):1111--1116.

\bibitem[\protect\astroncite{Wolock et~al.}{2024}]{wolock2024framework}
Wolock, C.~J., Gilbert, P.~B., Simon, N., and Carone, M. (2024).
\newblock A framework for leveraging machine learning tools to estimate
  personalized survival curves.
\newblock {\em Journal of Computational and Graphical Statistics},
  33(3):1098--1108.

\bibitem[\protect\astroncite{Zeng}{2004}]{zeng2004annals}
Zeng, D. (2004).
\newblock Estimating marginal survival function by adjusting for dependent
  censoring using many covariates.
\newblock {\em Annals of Statistics}, 32(4):1533--1555.

\bibitem[\protect\astroncite{Zheng and Laan}{2011}]{zheng2011cross}
Zheng, W. and Laan, M.~J. (2011).
\newblock Cross-validated targeted minimum-loss-based estimation.
\newblock In {\em Targeted Learning}, pages 459--474. Springer.

\end{thebibliography}
	
	\doublespacing
	
	\newpage
	\section*{Supplementary materials}
	
	\section*{Part A: proofs of theorems}
	\label{parta}
	To begin, we present certain key identities on the linearization of conditional survival integrals that will be used below.
	
	\begin{lemma}
		\label{lem:duham}
		For each $(a,z)\in\{0,1\}\times \mathcal{Z}$ and any $0\leq\alpha<\beta<\infty$, the following identities hold:
		\begin{align*}
			(a)\ &\int \varphi(y, z)(\tilde{F}_P-\tilde{F}_0)(dy \miid a, z) =-\int L_{P,\varphi}(y, a, z) \frac{\tilde{S}_0(y \miid a, z)}{\tilde{S}_P(y \miid a, z)}(\tilde{\Lambda}_P-\tilde{\Lambda}_0)(dy \miid a, z)\,;\\
			(b)\ &\int \frac{(\tilde S_P-\tilde{S}_0)(w\miid a,z)}{\tilde S_P(w\miid a,z)^2}G_P(dw\miid a,z)=-\int \gamma_{P,\natural}(y, a,z) \frac{\tilde S_0(y\miid a,z)}{\tilde S_P(y\miid a,z)}(\tilde \Lambda_P - \tilde \Lambda_0)(dy\miid a,z)\\
			(c)\ &\left|\int_\alpha^\beta \omega(y,z)\left[\frac{L_{P,\varphi}(y,a,z)}{\tilde S_P(y\miid a,z)}\tilde\Lambda_P(dy\miid a,z)-\frac{L_{0,\varphi}(y,a,z)}{\tilde S_0(y\miid a,z)}\tilde\Lambda_0(dy\miid a,z)\right]\right|\\
			&\leq\ 3v_z(\omega,\alpha,\beta)\sup_{y\in[\alpha,\beta]}\left|\frac{L_{P,\varphi}(y,a,z)}{\tilde{S}_P(y\miid a,z)}-\frac{L_{0,\varphi}(y,a,z)}{\tilde{S}_0(y\miid a,z)}\right|\\
			(d)\ &\left|\int_\alpha^\beta \omega(y,z)\left[\frac{\gamma_{P,\natural}(y,a,z)}{\tilde S_P(y\miid a,z)}\tilde\Lambda_P(dy\miid a,z)-\frac{\gamma_{0,\natural}(y,a,z)}{\tilde S_0(y\miid a,z)}\tilde\Lambda_0(dy\miid a,z)\right]\right|\\
			&\leq\ 3v_z(\omega,\alpha,\beta)\sup_{y\in[\alpha,\beta]}\left|\frac{\gamma_{P,\natural}(y,a,z)}{\tilde{S}_P(y\miid a,z)}-\frac{\gamma_{0,\natural}(y,a,z)}{\tilde{S}_0(y\miid a,z)}\right|\\
			&\hspace{0.2in}+\frac{3v_z(\omega,\alpha,\beta)}{S_0(\beta\miid a,z)^3}\sup_{y\in[\alpha,\beta]}\left|G_P(y\miid a,z)-G_0(y\miid a,z)\right|\\
			&\hspace{0.2in}+\frac{2v_z(\omega,\alpha,\beta)}{S_{P}(\beta\miid a,z)S_{0}(\beta\miid a,z)}\sup_{y\in[\alpha,\beta]}\left|\frac{1}{S_{P}(y\miid a,z)}-\frac{1}{S_0(y\miid a,z)}\right|,
		\end{align*}where $v_z(\omega,\alpha,\beta)$ denotes the maximum of the total variation and supremum norm of the function $y\mapsto \omega(y,z)$ over the interval $[\alpha,\beta]$.
	\end{lemma}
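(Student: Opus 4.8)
The plan is to handle (a) and (b) together, via the Duhamel equation for product integrals followed by an interchange of the order of integration, and to handle (c) and (d) together, via an elementary product/quotient rule for Lebesgue--Stieltjes integrals followed by an integration by parts. Throughout I would fix $(a,z)$ and suppress it from the notation, writing $\tilde S_P(y)$ for $\tilde S_P(y\miid a,z)$, $\tilde\Lambda_P$ for $\tilde\Lambda_P(\cdot\miid a,z)$, and $L_P,\gamma_{P,\natural}$ analogously. For (a) and (b) I would start from the Duhamel equation \citep{gill1990survey}: since $\tilde S_P=\Prodi(1-d\tilde\Lambda_P)$ and $\tilde S_0=\Prodi(1-d\tilde\Lambda_0)$ agree at the origin,
\[
(\tilde S_P-\tilde S_0)(y)\ =\ -\,\tilde S_P(y)\int_{(0,y]}\frac{\tilde S_0(u-)}{\tilde S_P(u)}\,(\tilde\Lambda_P-\tilde\Lambda_0)(du)\ .
\]
For (a), an integration by parts transfers $\varphi(\cdot,z)$ off $(\tilde F_P-\tilde F_0)(\cdot\miid a,z)$ and onto $\tilde S_P-\tilde S_0$ --- the boundary contributions being handled using that $\tilde S_P-\tilde S_0$ vanishes at the origin and that $y\mapsto\varphi(y,z)$ is eventually constant by condition \ref{ass:support}(iii) --- after which I would substitute the displayed expression and apply Fubini's theorem; the resulting inner integral $\int_{[u,\infty)}\tilde S_P(y)\,\varphi(dy,z)$ is precisely $L_{P,\varphi}(u,a,z)$, with the left limit $\tilde S_0(u-)$ coinciding with $\tilde S_0(u)$ under continuity. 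Part (b) follows from the same substitution-and-Fubini step applied directly to the left-hand side, without any preliminary integration by parts: inserting the Duhamel expression into $\int(\tilde S_P-\tilde S_0)(w)\,\tilde S_P(w)^{-2}\,G_P(dw)$ and interchanging integrals collapses $\int_{[u,\infty)}\tilde S_P(w)^{-1}\,G_P(dw)$ into $\gamma_{P,\natural}(u,a,z)$.

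For (c), the key step I would establish is the identity
\[
\frac{L_{P,\varphi}(y,a,z)}{\tilde S_P(y\miid a,z)}\,\tilde\Lambda_P(dy\miid a,z)\ =\ d\!\left(\frac{L_{P,\varphi}(y,a,z)}{\tilde S_P(y\miid a,z)}\right)+\varphi(dy,z)\ ,
\]
which follows from the quotient rule together with $L_{P,\varphi}(dy,a,z)=-\tilde S_P(y\miid a,z)\,\varphi(dy,z)$ and $\tilde S_P(dy\miid a,z)=-\tilde S_P(y\miid a,z)\,\tilde\Lambda_P(dy\miid a,z)$, and likewise with $P$ replaced by $0$. Subtracting, the $\varphi(dy,z)$ terms cancel, so the bracketed integrand in (c) is the total differential $\omega(y,z)\,d(g_P-g_0)(y)$ with $g_P:=L_{P,\varphi}/\tilde S_P$; an integration by parts over $[\alpha,\beta]$ then bounds the expression by the two endpoint terms plus the term carrying the total variation of $y\mapsto\omega(y,z)$, each dominated by $v_z(\omega,\alpha,\beta)\sup_{y\in[\alpha,\beta]}|g_P(y)-g_0(y)|$, which gives the factor $3$.

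Part (d) runs the same way with $h_P:=\gamma_{P,\natural}/\tilde S_P$, except that since $\gamma_{P,\natural}(dy,a,z)=-\tilde S_P(y\miid a,z)^{-1}G_P(dy\miid a,z)$ the analogous identity reads $h_P(y)\,\tilde\Lambda_P(dy\miid a,z)=h_P(dy)+G_P(dy\miid a,z)\,\tilde S_P(y\miid a,z)^{-2}$, so the source terms fail to cancel upon subtraction and leave the additional contribution $\int_\alpha^\beta\omega(y,z)\{G_P(dy)/\tilde S_P(y)^2-G_0(dy)/\tilde S_0(y)^2\}$. I would control this by writing it as $\int_\alpha^\beta\omega(y,z)\,(G_P-G_0)(dy)/\tilde S_0(y)^2+\int_\alpha^\beta\omega(y,z)\{\tilde S_P(y)^{-2}-\tilde S_0(y)^{-2}\}\,G_P(dy)$: the first summand is handled by a further integration by parts, so as to replace the signed measure $(G_P-G_0)(\cdot\miid a,z)$ by the uniformly small \emph{function} $G_P-G_0$, using that $\tilde S_0(\cdot\miid a,z)$ is nonincreasing and at most one so that $\tilde S_0(y)^{-2}\le\tilde S_0(\beta)^{-3}$ on $[\alpha,\beta]$, while the second summand is bounded directly after factoring $\tilde S_P^{-2}-\tilde S_0^{-2}=(\tilde S_P^{-1}-\tilde S_0^{-1})(\tilde S_P^{-1}+\tilde S_0^{-1})$ and using $\tilde S_P(\beta)^{-1}+\tilde S_0(\beta)^{-1}\le2\{\tilde S_P(\beta)\tilde S_0(\beta)\}^{-1}$ (again because survival functions do not exceed one) together with $G_P(\cdot\miid a,z)([\alpha,\beta])\le1$. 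Adding the three contributions yields the stated bound with the reported constants.

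The step I expect to require the most care is parts (a) and (b): the Duhamel equation must be applied with the correct left limits, one must check that the product/quotient-rule identity used in (c) and (d) remains valid --- with the appropriate left-continuous versions --- when the estimated survival functions carry atoms, and one must verify that every boundary term arising in the integrations by parts genuinely vanishes or telescopes as claimed, which is precisely where the support conditions \ref{ass:support} enter. The remaining work --- tracking the numerical constants and the powers of $\tilde S_0(\beta)$ and $\tilde S_P(\beta)$ appearing in (d) --- is routine.
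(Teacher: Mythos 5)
Your proposal follows the paper's proof in all essentials: parts (a) and (b) via the Duhamel equation followed by an interchange of integration (the paper applies the differential form of Duhamel and then invokes the integration-by-parts identity $\int_y^\infty\varphi\,d\tilde F_P=\varphi(y,z)\tilde S_P(y)+L_{P,\varphi}(y,a,z)$, whereas you integrate by parts first and then substitute the integral form of Duhamel --- the same ingredients in a permuted order), and parts (c) and (d) via the identity $\frac{L_{P,\varphi}}{\tilde S_P}\,\tilde\Lambda_P(dy)=d\big(\frac{L_{P,\varphi}}{\tilde S_P}\big)+\varphi(dy,z)$ (resp.\ its $\gamma_{P,\natural}$ analogue with source term $G_P(dy)/\tilde S_P^2$) followed by integration by parts, with the residual $G$-term in (d) split and bounded exactly as the paper does. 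This is the same argument; the only caveat, shared equally by the paper's own write-up, is the bookkeeping of the boundary terms in the final integration by parts of part (d), which affects only the numerical constant.
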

	
	\begin{proof}
		The Duhamel equation (Theorem 6 of~\citealp{gill1990survey}) indicates that
		\[
		(S-S_0)(y \miid a, z)= -S(y \miid a, z) \int_0^y \frac{S_0(u \miid a, z)}{S(u \miid a, z)}\left(\Lambda-\Lambda_0\right)(d u \miid a, z)
		\]
		for any two continuous survival functions $S$ and $S_0$ and their corresponding cumulative hazard function $\Lambda$ and $\Lambda_0$. The differential form of this equation is \[(F-F_0)(dy\miid a,z)=S_0(y\miid a,z)(\Lambda-\Lambda_0)(dy\miid a,z)-\int_0^y \frac{S_0(u \miid a, z)}{S(u \miid a, z)}\left(\Lambda-\Lambda_0\right)(d u \miid a, z) F(dy\miid a,z).
		\]
		The latter equation allows us to write
		\begin{align*}
			&\int \varphi(y, z)(\tilde{F}_P-\tilde F_0)(dy \miid a, z)\\
			&=\ \int \varphi(y, z)\tilde S_0(y\miid a,z)(\tilde \Lambda_P-\tilde\Lambda_0)(dy\miid a,z)\\
			&\hspace{0.3in}-\int \varphi(y, z)\int_0^y \frac{\tilde S_0(u \miid a, z)}{\tilde S_P(u \miid a, z)}(\tilde\Lambda_P-\tilde \Lambda_0)(d u \miid a, z) \tilde F_P(dy\miid a,z)\\
			&=\ \int\varphi(y,z)\tilde S_0(y\miid a,z)(\tilde \Lambda_P-\tilde\Lambda_0)(dy\miid a,z)\\
			&\hspace{0.3in}-\iint I(u\leq y) \varphi(y, z) \frac{\tilde S_0(u \miid a, z)}{\tilde S_P(u \miid a, z)} (\tilde\Lambda_P-\tilde\Lambda_0)(d u \miid a, z)\tilde F_P(dy\miid a,z) \\
			&=\ \int \tilde S_0(y\miid a,z)\left\{\varphi(y,z)-\frac{1}{\tilde S_P(y \miid a, z)} \int^\infty_y\varphi(u,z)\tilde F_P(du\miid a,z)\right\}(\tilde \Lambda_P-\tilde\Lambda_0)(dy\miid a,z)\\
			&=\ -\int L_{\varphi, P}(y \miid a, z) \frac{\tilde{S}_0(y \miid a, z)}{\tilde{S}_P(y \miid a, z)}(\tilde{\Lambda}_P-\tilde{\Lambda}_0)(dy \miid a, z)\ ,
		\end{align*}where we have used that \begin{align*}
			\int_y^\infty \varphi(u,z)\tilde{F}_P(du\miid a,z)\ &=\ -\int _y^\infty \varphi(u,z)\tilde{S}_P(du\miid a,z)\\
			&=\ -\left.\varphi(u,z)\tilde S_P(u\miid a,z)\right|_{y}^\infty+\int_y^\infty \tilde{S}_P(u\miid a,z)\varphi(du,z)\\
			&=\ \varphi(y,z)\tilde{S}_P(y\miid a,z)+L_{\varphi,P}(y,a,z)\ .
		\end{align*}
		and this establishes part (a). We again make use of the Duhamel equation and write that
		\begin{align*}
			&\int \frac{(\tilde S_P-\tilde{S}_0)(w\miid a,z)}{\tilde S_P(w\miid a,z)^2}G_P(dw\miid a,z)\\
			&=\ -\int \frac{1}{\tilde{S}_P(w\miid a,z)^2} \tilde{S}_P(w\miid a,z)\int _0^w \frac{\tilde{S}_0(u\miid a,z)}{\tilde{S}_P(u\miid a,z)}(\tilde{\Lambda}_P-\tilde{\Lambda}_0)(du\miid a,z)G_P(dw\miid a,z)\\
			&=\ -\int \left\{\int_u^\infty \frac{G_P(dw\miid a,z)}{\tilde{S}_P(w\miid a,z)}\right\}\frac{\tilde{S}_0(u\miid a,z)}{\tilde{S}_P(u\miid a,z)}(\tilde{\Lambda}_P-\tilde{\Lambda}_0)(du\miid a,z)\\
			&=\ -\int \gamma_{P,\natural}(u,a,z)\frac{\tilde{S}_0(u\miid a,z)}{\tilde{S}_P(u\miid a,z)}(\tilde \Lambda_P - \tilde \Lambda_0)(dy\miid a,z)\ ,
		\end{align*}which establishes part (b). Next, using the fact that \[\frac{L_\varphi(u)}{S(u)}\Lambda(du)=\frac{L_\varphi}{S}(du)+\varphi(du)\]for any survival function $S$, corresponding cumulative hazard function $\Lambda$ and $\varphi$--integrated form $L$, we have that \begin{align*}
			&\int_\alpha^\beta \omega(y,z)\left[\frac{L_{P,\varphi}(y,a,z)}{\tilde S_P(y\miid a,z)}\tilde\Lambda_P(dy\miid a,z)-\frac{L_{0,\varphi}(y,a,z)}{\tilde S_0(y\miid a,z)}\tilde\Lambda_0(dy\miid a,z)\right]\\
			&=\ \int_\alpha^\beta \omega(y,z)\left[\frac{L_{P,\varphi}}{\tilde S_P}(dy\miid a,z)-\frac{L_{0,\varphi}}{\tilde S_0}(dy\miid a,z)\right]\\
			&=\ \left.\omega(y,z)\left[\frac{L_{P,\varphi}(y,a,z)}{\tilde{S}_P(y\miid a,z)}-\frac{L_{0,\varphi}(y,a,z)}{\tilde{S}_0(y\miid a,z)}\right]\right|_{y=\alpha}^\beta+\int_\alpha ^\beta \left[\frac{L_{P,\varphi}(y,a,z)}{\tilde{S}_P(y\miid a,z)}-\frac{L_{0,\varphi}(y,a,z)}{\tilde{S}_0(y\miid a,z)}\right]\omega(dy,z)\ .
		\end{align*}This then implies that \begin{align*}
			&\left|\int_\alpha^\beta \omega(y,z)\left[\frac{L_{P,\varphi}(y,a,z)}{\tilde S_P(y\miid a,z)}\tilde\Lambda_P(dy\miid a,z)-\frac{L_{0,\varphi}(y,a,z)}{\tilde S_0(y\miid a,z)}\tilde\Lambda_0(dy\miid a,z)\right]\right|\\
			&\hspace{1in}\leq\ \left[2\sup_{y\in[\alpha,\beta]}|\omega(y,z)|+\int_\alpha^\beta|\omega(dy,z)|\right] \sup_{y\in[\alpha,\beta]}\left|\frac{L_{P,\varphi}(y,a,z)}{\tilde{S}_P(y\miid a,z)}-\frac{L_{0,\varphi}(y,a,z)}{\tilde{S}_0(y\miid a,z)}\right|.
		\end{align*}which implies the claimed inequality in (c). Finally, using integration by parts, we write \begin{align*}
			&\int_\alpha^\beta \omega(y,z)\left[\frac{\gamma_{P,\natural}(y,a,z)}{\tilde S_P(y\miid a,z)}\tilde\Lambda_P(dy\miid a,z)-\frac{\gamma_{0,\natural}(y,a,z)}{\tilde S_0(y\miid a,z)}\tilde\Lambda_0(dy\miid a,z)\right]\\
			&=\ \left.\omega(y,z)\left[\frac{\gamma_{P,\natural}(y,a,z)}{\tilde{S}_P(y\miid a,z)}-\frac{\gamma_{0,\natural}(y,a,z)}{\tilde{S}_0(y\miid a,z)}\right]\right|_{y=\alpha}^\beta\\
			&\hspace{.3in}+\int_\alpha ^\beta \left[\frac{\gamma_{P,\natural}(y,a,z)}{\tilde{S}_P(y\miid a,z)}-\frac{\gamma_{0,\natural}(y,a,z)}{\tilde{S}_0(y\miid a,z)}\right]\omega(dy,z)+\int_\alpha ^\beta\omega(y,z) \left[\frac{\gamma_{P,\natural}(dy,a,z)}{\tilde{S}_P(y\miid a,z)}-\frac{\gamma_{0,\natural}(dy,a,z)}{\tilde{S}_0(y\miid a,z)}\right]
		\end{align*}and furthermore expand \begin{align*}
			&\int_\alpha ^\beta\omega(y,z) \left[\frac{\gamma_{P,\natural}(dy,a,z)}{\tilde{S}_P(y\miid a,z)}-\frac{\gamma_{0,\natural}(dy,a,z)}{\tilde{S}_0(y\miid a,z)}\right]\ =\ \int_\alpha ^\beta\omega(y,z) \left[\frac{G_{P}(dy\miid a,z)}{\tilde{S}_P(y\miid a,z)^2}-\frac{G_{0}(dy\miid a,z)}{\tilde{S}_0(y\miid a,z)^2}\right]\\
			&=\ \int_\alpha ^\beta\omega(y,z) \left[\frac{1}{\tilde{S}_P(y\miid a,z)^2}-\frac{1}{\tilde{S}_0(y\miid a,z)^2}\right]G_{P}(dy\miid a,z)+\int_\alpha^\beta \frac{\omega(y,z)}{S_0(y\miid a,z)^2}(G_P-G_0)(dy\miid a,z)\\
			&=\ \int_\alpha ^\beta\omega(y,z) \left[\frac{1}{\tilde{S}_P(y\miid a,z)}-\frac{1}{\tilde{S}_0(y\miid a,z)}\right]\left[\frac{1}{\tilde{S}_P(y\miid a,z)}+\frac{1}{\tilde{S}_0(y\miid a,z)}\right]G_{P}(dy\miid a,z)\\
			&\hspace{0.25in}+\int_\alpha^\beta (G_P-G_0)(y\miid a,z)\left[\frac{1}{S_0(y\miid a,z)^2}\omega(dy,z)+\frac{2\omega(y,z)}{S_0(y\miid a,z)^3}F_0(dy\miid a,z)\right].
		\end{align*}This allows us to write \begin{align*}
			&\left|\int_\alpha^\beta \omega(y,z)\left[\frac{\gamma_{P,\natural}(y,a,z)}{\tilde S_P(y\miid a,z)}\tilde\Lambda_P(dy\miid a,z)-\frac{\gamma_{0,\natural}(y,a,z)}{\tilde S_0(y\miid a,z)}\tilde\Lambda_0(dy\miid a,z)\right]\right|\\
			&\leq\ \left[2\sup_{y\in[\alpha,\beta]}|\omega(y,z)|+\int_\alpha^\beta|\omega(dy,z)|\right]\sup_{y\in[\alpha,\beta]}\left|\frac{\gamma_{P,\natural}(y,a,z)}{\tilde S_P(y\miid a,z)}-\frac{\gamma_{0,\natural}(y,a,z)}{\tilde S_0(y\miid a,z)}\right|\\
			&\hspace{.25in}+\sup_{y\in[\alpha,\beta]}|\omega(y,z)|\left[\frac{1}{\tilde{S}_P(\beta\miid a,z)}+\frac{1}{\tilde{S}_0(\beta\miid a,z)}\right]\sup_{y\in[\alpha,\beta]}\left|\frac{1}{\tilde{S}_P(y\miid a,z)}-\frac{1}{\tilde{S}_0(y\miid a,z)}\right|\\
			&\hspace{.25in}+\left[\frac{\int_\alpha^\beta |\omega(dy,z)|}{S_0(\beta\miid a,z)^2}+\frac{2\sup_{y\in[\alpha,\beta]}|\omega(y,z)|}{S_0(\beta\miid a,z)^3}\right]\sup_{y\in[\alpha,\beta]}|G_P(y\miid a,z)-G_0(y\miid a,z)|\ ,
		\end{align*}thus implying the claimed inequality.
	\end{proof}
	
	\subsection*{Proof of Theorem~\ref{thm:if}}
	
	Let $P\in\mathscr{M}$ be given, and take $\{P_{\epsilon} : |\epsilon| \leq \delta\}$ to be a suitably smooth and bounded (i.e., Hellinger-differentiable) path with $P_{\epsilon = 0} = P$ and score  for $\epsilon$ at $\epsilon = 0$ given by $h\in L_2^0(P)$, and let $h=h_1+h_2$ denote the $L_2^0(P)$--unique decomposition of $h$ for which  $o\mapsto h_1(y,\delta,w,a,z)$ and $o\mapsto h_2(a,z)$  are such that, $P$--almost surely, $E_P[h_1(O)\miid A,Z]=0$, $E_P[h_2(O)]=0$, $var_P[h_1(O)\miid A,Z]<\infty$, and $var_P[h_2(O)]<\infty$. We wish to compute the pathwise derivative \begin{align*}
		\left.\frac{\partial}{\partial\epsilon}\Psi(P_\epsilon)\right|_{\epsilon=0}\ &=\ \left.\frac{\partial}{\partial\epsilon}\iint \varphi(t,z)\tilde{F}_{\epsilon}(dt\miid a_0,z)\tilde{H}_{\epsilon}(dz)\right|_{\epsilon=0}\\
		&=\ \left.\frac{\partial}{\partial\epsilon}\iiint \varphi(t,z)\tilde{F}_\epsilon(dt\miid a_0,z)\bar{\gamma}_\epsilon(a,z)J_\epsilon(da,dz)\right|_{\epsilon=0}
	\end{align*}
	where here and below we use the shorthand notation $A_\epsilon$ to refer to $A_{P_\epsilon}$ for any relevant quantity $A_P$ indexed by $P$. Furthermore, under mild regularity conditions allowing interchange of integral and derivative operations, this pathwise derivative can be decomposed as $\text{(1)}+\text{(2)}+\text{(3)}$ with \begin{align*}
		\text{(1)}\ &=\ \iint \left.\frac{\partial}{\partial\epsilon}\int \varphi(t,z)\tilde{F}_\epsilon(dt\miid a_0,z)\right|_{\epsilon=0}\bar{\gamma}_P(a,z)J_P(da,dz)\\
		\text{(2)}\ &=\ \iiint \varphi(t,z)\tilde{F}_P(dt\miid a_0,z)\left.\frac{\partial}{\partial\epsilon}\bar{\gamma}_\epsilon(a,z)\right|_{\epsilon=0}J_P(da,dz)\\
		\text{(3)}\ &=\  \iiint \varphi(t,z)\tilde{F}_P(dt\miid a_0,z)\bar{\gamma}_P(a,z)\left.\frac{\partial}{\partial\epsilon}J_\epsilon(da,dz)\right|_{\epsilon=0}\,.
	\end{align*}Below, we  study each of these summands separately.

	By integration by parts, we first note that $\int \varphi(t,z)\tilde{F}_\epsilon (dt\miid a_0,z)=\varphi(0,z)+\int \tilde{S}_\epsilon(t\miid a_0,z)\varphi(dt,z)$, and so, we can equivalently write \[\text{(1)}=\iiint \left.\frac{\partial}{\partial\epsilon}\tilde{S}_\epsilon (t\miid a_0,z)\right|_{\epsilon=0}\varphi(dt,z)\bar{\gamma}_P(a,z)J_P(da,dz)\ .\] To compute the pathwise derivatives of $\epsilon\mapsto \tilde{S}_\epsilon(t\miid a_0,z)$, we first consider the pathwise derivative of $\epsilon\mapsto \tilde{\Lambda}_\epsilon(t\miid a_0,z)$, where $\tilde{\Lambda}_\epsilon$ is the cumulative hazard function corresponding to $\tilde{S}_\epsilon$, defined as $\tilde{\Lambda}_\epsilon(t\miid a_0,z):=\int_0^t R_{\epsilon}(u\miid a_0,z)^{-1}F_{1,\epsilon}(du\miid a_0,z)$. We can show that \begin{align*}
		&\left.\frac{\partial}{\partial\epsilon}\tilde{\Lambda}_\epsilon(t\miid a_0,z)\right|_{\epsilon=0}\ =\ \int_0^t\frac{\left.\frac{\partial}{\partial\epsilon}F_{1,\epsilon}(du\miid a_0,z)\right|_{\epsilon=0}}{R_P(u\miid a_0,z)}-\int_0^t \frac{\left.\frac{\partial}{\partial\epsilon}R_\epsilon(u\miid a_0,z)\right|_{\epsilon=0}}{R_P(u\miid a_0,z)^2}F_{1,P}(du\miid a_0,z)\\
		&\hspace{0.2in}=\ E_P\left[\left\{\frac{\Delta I(Y\leq t)}{R_P(Y\miid a_0,z)}-\int_0^t \frac{I(W\leq u\leq Y)}{R_P(u\miid a_0,z)}\tilde{\Lambda}_{P}(du\miid a_0,z)\right\}h_1(O)\,\middle|\,A=a_0,Z=z\right]\\
		&\hspace{0.2in}=\ E_P\left[-\phi_{\text{KM},P}((u,a,z)\mapsto I(u\leq t))(O)h_1(O)\,\middle|\,A=a_0,Z=z\right]
	\end{align*} by first showing that \begin{align*}
		&\left.\frac{\partial}{\partial\epsilon}F_{1,\epsilon}(u\miid a_0,z)\right|_{\epsilon=0}\ =\ \iiint \frac{\delta I(y\leq u)}{R_P(y\miid a_0,z)}h_1(y,\delta,w,a_0,z)P(dy,d\delta,dw\miid a_0,z)\end{align*}
	and
	\begin{align*}
		\left.\frac{\partial}{\partial\epsilon}R_{\epsilon}(u\miid a_0,z)\right|_{\epsilon=0}\ &=\ \iiint \int_0^t \frac{I(w\leq u\leq y)}{R_P(u\miid a_0,z)}\tilde{\Lambda}_P(du\miid a_0,z)h_1(y,\delta,w,a_0,z)P(dy,d\delta,dw\miid a_0,z).
	\end{align*}This then implies, using Theorem 8 of \cite{gill1990survey}, we that
	\begin{align*}
		\left.\frac{\partial}{\partial\epsilon}\tilde{S}_\epsilon(t\miid a_0,z)\right|_{\epsilon=0}\ &=\ -\tilde{S}_P(t\miid a_0,z)\left.\frac{\partial}{\partial\epsilon}\tilde{\Lambda}_\epsilon(t\miid a_0,z)\right|_{\epsilon=0}\\
		&=\ E_P\left[-\tilde{S}_P(t\miid a_0,z)\phi_{\text{KM},P}((u,a,z)\mapsto I(u\leq t))(O)h_1(O)\,\middle|\,A=a_0,Z=z\right].
	\end{align*}In particular, this allows us to compute \begin{align*}
		\int \left.\frac{\partial}{\partial\epsilon}\tilde{S}_\epsilon(t\miid a_0,z)\right|_{\epsilon=0} \varphi(dt,z)\ &=\ E_P\left[-\phi_{\text{KM},P}(L_{P,\varphi})(O)h_1(O)\,\middle|\,A=a_0,Z=z\right]\\
		&=\ E_P\left[-\frac{I(A=a_0)}{P(A=a_0\miid Z=z)}\phi_{\text{KM},P}(L_{P,\varphi})(O)h_1(O)\,\middle|\, Z=z\right]
	\end{align*}and therefore, we have that \begin{align*}
		\text{(1)}\ &=\ \iint E_P\left[-\frac{I(A=a_0)}{P(A=a_0\miid Z=z)}\phi_{\text{KM},P}(L_{P,\varphi})(O)h_1(O)\,\middle|\, Z=z\right] \bar\gamma_P(a,z)J_P(da,dz)\\
		&=\ E_P\left[E_P\left[-\frac{I(A=a_0)}{P(A=a_0\miid Z=z)}\phi_{\text{KM},P}(L_{P,\varphi})(O)h_1(O)\,\middle|\, Z\right] \bar\gamma_P(A,Z)\right]\\
		&=\ E_P\left[-\frac{I(A=a_0)}{P(A=a_0\miid Z=z)}\bar\gamma_P(Z)\phi_{\text{KM},P}(L_{P,\varphi})(O)h_1(O)\right],
	\end{align*}where we have defined $\bar\gamma_P:z\mapsto E_P[\bar\gamma_P(A,Z)\miid Z=z]$. From the result $E_P[\phi_{\text{KM},P}(L_{P,\varphi})(O)\miid A=a_0,Z]=0$ $P$--almost surely, we can write \[\text{(1)}=E_P\left[-\frac{I(A=a_0)}{P(A=a_0\miid Z=z)}\bar{\gamma}_P(Z)\phi_{\text{KM},P}(L_{P,\varphi})(O)h(O)\right]\] in view of the fact that $h_2(O)$ is only a function of $(A,Z)$.

	We now turn to computing the pathwise derivative of $\epsilon\mapsto \bar\gamma_\epsilon(a,z)$ at $\epsilon=0$, which is critical for computing $\text{(2)}$. We have that \begin{align*}
		&\left.\frac{\partial}{\partial\epsilon}\bar{\gamma}_\epsilon(a,z)\right|_{\epsilon=0}\ =\ \frac{1}{\gamma_P}\left[\left.\frac{\partial}{\partial\epsilon}\gamma_\epsilon(a,w)\right|_{\epsilon=0}-\bar{\gamma}_P(a,z)\left.\frac{\partial}{\partial\epsilon}\gamma_\epsilon\right|_{\epsilon=0}\right]\\
		&=\ \left.\frac{\partial}{\partial\epsilon}\frac{\gamma_\epsilon(a,w)}{\gamma_P}\right|_{\epsilon=0}-\frac{\bar{\gamma}_P(a,z)}{\gamma_P}\left\{\iint \left.\frac{\partial}{\partial\epsilon}\gamma_\epsilon(\bar{a},\bar{z})\right|_{\epsilon=0}J_P(d\bar{a},d\bar{z})+\iint \gamma_P(\bar{a},\bar{z})\left.\frac{\partial}{\partial\epsilon}J_\epsilon(d\bar{a},d\bar{z})\right|_{\epsilon=0}
		\right\}
	\end{align*}using that $\gamma_\epsilon=\iint \gamma_\epsilon(\bar{a},\bar{z})J_\epsilon(d\bar{a},d\bar{z})$, so that we can decompose $\text{(2)}=\text{(2a)}+\text{(2b)}+\text{(2c)}$ with \begin{align*}
		\text{(2a)}\ &=\ \frac{1}{\gamma_P}\iint \mu_{P,\varphi}(z) \left.\frac{\partial}{\partial\epsilon}\gamma_\epsilon(a,z)\right|_{\epsilon=0}J_P(da,dz)\\
		\text{(2b)}\ &=\ -\Psi(P)\frac{1}{\gamma_P}\iint \left.\frac{\partial}{\partial\epsilon}\gamma_\epsilon(a,z)\right|_{\epsilon=0}J_P(da,dz)\\
		\text{(2c)}\ &=\ -\Psi(P)\frac{1}{\gamma_P}\iint \gamma_P(a,z)h_2(a,z)J_P(da,dz)\ .
	\end{align*} In particular we note that \[\text{(2a)}+\text{(2b)}=\iint \xi_P(z)\left.\frac{\partial}{\partial\epsilon}\gamma_\epsilon(a,z)\right|_{\epsilon=0}J_P(da,dz)\]with $\xi_P(z):=\gamma_P^{-1}\{\mu_{P,\varphi}(z)-\Psi(P)\}$. This expression involves the pathwise derivative of $\epsilon\mapsto \gamma_\epsilon(a,w)$ at $\epsilon=0$, which we can be computed as \begin{align*}
		&\left.\frac{\partial}{\partial\epsilon}\gamma_\epsilon(a,z)\right|_{\epsilon=0}\ =\ \left.\frac{\partial}{\partial\epsilon}\int \frac{G_\epsilon(dw\miid a,z)}{\tilde{S}_\epsilon(w\miid a,z)}\right|_{\epsilon=0}\\
		&=\ \int \frac{1}{\tilde{S}_P(w\miid a,z)}\left.\frac{\partial}{\partial\epsilon}G_\epsilon(dw\miid a,z)\right|_{\epsilon=0}-\int \left.\frac{\partial}{\partial\epsilon}\tilde{S}_\epsilon(w\miid a,z)\right|_{\epsilon=0}\frac{G(dw\miid a,z)}{\tilde{S}_P(w\miid a,z)^2}\\
		&=\ \iiint \frac{1}{\tilde{S}_P(w\miid a,z)}\left.\frac{\partial}{\partial\epsilon}P_\epsilon(dy,d\delta,dw\miid a,z)\right|_{\epsilon=0}+\int \left.\frac{\partial}{\partial\epsilon}\tilde{\Lambda}_\epsilon(w\miid a_0,z)\right|_{\epsilon=0} \frac{G(dw\miid a,z)}{\tilde{S}_P(w\miid a,z)}\\
		&=\ \iiint \frac{1}{\tilde{S}_P(w\miid a,z)}h_1(o)P(dy,d\delta,dw\miid a,z)\\
		&\hspace{.5in}+\int E_P\left[-\phi_{\text{KM},P}((u,a,z)\mapsto I(u\leq w))(O)h_1(O)\,\middle|\,A=a,Z=z\right]\frac{G(dw\miid a,z)}{\tilde{S}_P(w\miid a,z)}\\
		&=\ E_P\left[\left\{\frac{1}{\tilde{S}_P(W\miid a,z)}-\gamma_P(a,z)-\phi_{\text{KM},P}(\gamma_{P,\natural})(O)\right\}h(O)\,\middle|\,A=a,Z=z\right].
	\end{align*}where we re-centered the first summand by $\gamma_P(a,z)=\int \tilde{S}_P(w\miid a,z)^{-1})G_P(dw\miid a,z)$ in the penultimate step, which then allowed $h_1$ to be replaced by $h$ since $h_2(O)$ only depends on $(A,Z)$. We can therefore write that \[
	\text{(2a)}+\text{(2b)}=E_P\left[\xi_P(Z)\left\{\frac{1}{\tilde{S}_P(W\miid A,Z)}-\gamma_P(A,Z)-\phi_{\text{KM},P}(\gamma_{P,\natural})(O)\right\}h(O)\right].
	\]Using the fact that $h_1(O)$ has mean zero conditional on $(A,Z)$ and that $E_P[\bar\gamma_P(A,Z)]=1$, we can write that \[\text{(2c)}=-\Psi(P)\iint \bar\gamma_P(a,z)h_2(a,z)J_P(da,dz)=-E_P\left[\Psi(P)\left\{\bar\gamma_P(A,Z)-1\right\}h(O)\right],\]from which we conclude that \begin{align*}
		\text{(2)}\ &=\ E_P\left[\left[\xi_P(Z)\left\{\frac{1}{\tilde{S}_P(W\miid A,Z)}-\gamma_P(A,Z)-\phi_{\text{KM},P}(\gamma_{P,\natural})(O)\right\}-\Psi(P)\left\{\bar\gamma_P(A,Z)-1\right\}\right]h(O)\right]\\
		&=\ E_P\left[\left[\xi_P(Z)\left\{\frac{1}{\tilde{S}_P(W\miid A,Z)}-\gamma_P(A,Z)-\phi_{\text{KM},P}(\gamma_{P,\natural})(O)\right\}-\Psi(P)\left\{\bar\gamma_P(A,Z)-1\right\}\right]h(O)\right]
	\end{align*}
	
	The final term to compute, $\text{(3)}$, has a simple form. Indeed, using a similar argument as above, it can be written as
	\begin{align*}
		\text{(3)}\ &=\ \iint \mu_{P,\varphi}(z)\bar\gamma_P(a,z)h_2(a,z)J_P(da,dz)\\
		&=\ E_P\left[\mu_{P,\varphi}(Z)\bar\gamma_P(A,Z)h_2(A,Z)\right]\ =\ E_P\left[\left\{\mu_{P,\varphi}(Z)\bar\gamma_P(A,Z)-\Psi(P)\right\}h(O)\right].
	\end{align*}
	Adding the expressions derived for each of $\text{(1)}$, $\text{(2)}$ and $\text{(3)}$, as claimed, we find that the pathwise derivative of $\epsilon\mapsto \Psi(P_\epsilon)$ at $\epsilon=0$ is given by 
	\[E_P\left[\left[-\frac{I(A=a_0)\bar{\gamma}_P(Z)}{P(A=a_0\miid Z=z)}\phi_{\text{KM},P}(L_{P,\varphi})(O)+\xi_P(Z)\left\{\frac{1}{\tilde{S}_P(W\miid A,Z)}-\phi_{\text{KM},P}(\gamma_{P,\natural})(O)\right\}\right]h(O)\right],\]
	which is simply $E_P[\phi_P(O)h(O)]$ with $\phi_P$ as defined in the main text. This establishes the pathwise differentiability of $P\mapsto \Psi(P)$ relative to a nonparametric model as well as the fact that $\phi_P$ is the nonparametric efficient influence function of $\Psi$ at $P$.
	
	We now study the linearization of $\Psi$ around $P_0$ based on $\phi_P$. Specifically, we derive the form of the remainder term $R(P,P_0):=\Psi(P)-\Psi(P_0)-(P-P_0)\phi_P$ from this linearization. We begin by decomposing the difference $\Psi(P) -\Psi(P_0)$ as the sum $\text{(D1)}+\text{(D2)}+\text{(D3)}+\text{(D4)}$ with
	\begin{align*}
		\text{(D1)}\ &=\ \iint \varphi(y,z) (\tilde F_P-F_0)(dy\miid a_0,z) \tilde H_0(dz)\\
		\text{(D2)}\ &=\ \iint \xi_P(z) \left\{\gamma_P(a,z) - \gamma_0(a,z)\right\} J_0(da,dz)\\
		\text{(D3)}\ &=\ \frac{\gamma_0-\gamma_P}{\gamma_0}\iint \xi_P(z) \gamma_0(a,z) J_0(da,dz)\\
		\text{(D4)}\ &=\ \iint \xi_P(z) \gamma_P(a,z) (J_P-J_0)(da,dz)\ .
	\end{align*} 
	First, using part (a) of Lemma~\ref{lem:duham}, we can write
	\begin{align*}
		\text{(D1)}=-\iint L_{P,\varphi}(y, a_0, z) \frac{\tilde{S}_0(y \miid a_0, z)}{\tilde{S}_P(y \miid a_0, z)}(\tilde{\Lambda}_P-\tilde{\Lambda}_0)(dy \miid a_0, z) \tilde{H}_0(d z)\ .
	\end{align*}
	Next, we decompose $\text{(D2)}$ as the sum $\text{(D2a)}+\text{(D2b)}+\text{(D2c)}+\text{(D2d)}$ with
	\begin{align*}
		\text{(D2a)}&:=- \iiint \xi_P(z)\left\{\tilde S_P(w\miid a,z) - \tilde S_0(w\miid a,z)\right\} \frac{G_P(dw\miid a,z)}{\tilde S_P(w\miid a,z)^2}J_0(da,dz)\\
		\text{(D2b)}&:=\iiint \frac{\xi_P(z)}{\tilde S_P(w\miid a,z)} \left(G_P - G_0\right)(dw\miid a,z) J_0(da,dz)\\
		\text{(D2c)}&:=\iiint \xi_P(z)\left\{\frac{1}{\tilde S_0(w\miid a,z)} - \frac{1}{\tilde S_P(w\miid a,z)}\right\}\left(G_P-G_0\right)(dw\miid a,z) J_0(da,dz)\\
		\text{(D2d)}&:=\iiint \left\{\frac{\xi_P(z)}{\tilde S_0(w\miid a,z)} - \frac{\xi_P(z)}{\tilde S_P(w\miid a,z)}\right\}\left\{\tilde S_0(w\miid a,z) -\tilde S_P(w\miid a,z)\right\}\frac{G_P(dw\miid a,z)}{\tilde S_P(w\miid a,z)}J_0(da,dz)
	\end{align*}
	Using part (b) of  Lemma~\ref{lem:duham}, we can rewrite 
	\begin{align*}
		\text{(D2a)} =\iint \xi_P(z) \left\{\int  \gamma_{P,\natural}(y, a,z) \frac{\tilde S_0(y\miid a,z)}{\tilde S(y\miid a,z)}(\tilde \Lambda - \tilde \Lambda_0)(dy\miid a,z)\right\}J_0(da,dz)\ .
	\end{align*}
	We then note that we can decompose $\text{(D3)}$ as the sum $\text{(D3a)}+\text{(D3b)}$ with
	\begin{align*}
		\text{(D3a)}&:=\frac{\gamma_P - \gamma_0}{\gamma_0}\iint \xi_P(z)\left[ \gamma_P(a,z) - \gamma_0(a,z)\right]  J_0(da,dz)\\
		\text{(D3b)}&:=\frac{\gamma_P - \gamma_0}{\gamma_0}\iint \xi_P(z) \gamma_P(a,z)  \left(J_P-J_0\right)(da,dz)
	\end{align*}using the fact that $\iint \xi_P(z)\gamma_P(a,z)J_P(da,dz)=0$.
	
	We now compute the linear term $(P-P_0)\phi_P=-P_0\phi_P$, which can be decomposed as the sum $\text{(L1)}+\text{(L2)}+\text{(L3)}+\text{(L4)}$, where we define
	\begin{align*} 
		\text{(L1)}&:=-\iint \frac{\pi_0(z)}{\pi(z)}\frac{\bar\gamma_P(a_0,z)}{\bar\gamma_0(a_0,z)}\frac{\tilde R_0(y,a_0,z)}{\tilde R_P(y,a_0,z)} L_{P,\varphi}(y,a_0,z)(\tilde \Lambda_P-\tilde\Lambda_0)(dy\miid a_0,z)\tilde H_0(dz)\\
		\text{(L2)}&:=\iiint \xi_P(z)\frac{\tilde R_0(y,a,z)}{\tilde R_P(y,a,z)} \gamma_{P,\natural}(y,a,z)(\tilde \Lambda_P-\tilde\Lambda_0)(dy\miid a,z)J_0(da,dz)\\
		\text{(L3)}&:=\iiint \frac{\xi_P(z)}{\tilde S_P(w\miid a,z)} (G_P-G_0)(dw\miid a,z) J_0(da,dz)\\
		\text{(L4)}&:=\iint \xi_P(z) \gamma_P(a,z) (J_P-J_0)(da,dz)\ .
	\end{align*}
	We now scrutinize the terms appearing in $R(P,P_0)=\text{(D1)}+\text{(D2)}+\text{(D3)}+\text{(D4)}-\text{(L1)}-\text{(L2)}-\text{(L3)}-\text{(L4)}$. First, we observe that $\text{(D4)} -\text{(L4)} = 0$ and $\text{(D2b)}-\text{(L3)} = 0$. Next, we use $\nu_{P}(y,a,z):=\frac{\tilde S_P(y\miid a,z)}{\tilde R_P(y,a,z)}$ for simplicity and we note that\begin{align*}
		&\text{(D1)}-\text{(L1)}\\
		&=\ \iint L_{P,\varphi}(y,a_0,z)\left\{\frac{\pi_0(z)}{\pi(z)}\frac{\bar\gamma_P(a_0,z)}{\bar\gamma_0(a_0,z)}\frac{\tilde R_0(y,a_0,z)}{\tilde R_P(y,a_0,z)}-\frac{\tilde{S}_0(y\miid a_0,z)}{\tilde{S}_P(y\miid a_0,z)}\right\}(\tilde\Lambda_P-\tilde\Lambda_0)(dy\miid a_0,z)\tilde H_0(dz)\\
		&=\ \iint L_{P,\varphi}(y,a_0,z)\left\{\frac{\pi_0(z)}{\pi(z)}\frac{\bar\gamma_P(a_0,z)}{\bar\gamma_0(a_0,z)}\frac{\nu_P(y,a_0,z)}{\nu_0(y,a_0,z)}-1\right\}\frac{\tilde{S}_0(y\miid a_0,z)}{\tilde{S}_P(y\miid a_0,z)}(\tilde\Lambda_P-\tilde\Lambda_0)(dy\miid a_0,z)\tilde H_0(dz)\\
		&=\ \iint L_{P,\varphi}(y,a_0,z)\left\{\frac{\pi_0(z)}{\pi(z)}\frac{\bar\gamma_P(a_0,z)}{\bar\gamma_0(a_0,z)}\frac{\nu_P(y,a_0,z)}{\nu_0(y,a_0,z)}-1\right\}\left(\frac{\tilde S_P}{\tilde S_0}-1\right)(dy\miid a_0,z)\tilde H_0(dz)\\
		&=\,R_1(P,P_0)\, ,
	\end{align*}where we used the fact that \[
	\frac{\tilde S_0(y\miid a,z)}{\tilde S_P(y\miid a,z)}(\tilde \Lambda_P -\tilde \Lambda_0)(dy\miid a,z) = \left(\frac{\tilde S_0}{\tilde S_P}-1\right)(dy\miid a,z)\ ,
	\]which is a consequence of the Duhamel equation in Theorem 6 of \cite{gill1990survey}. Using the same argument, we note that\begin{align*}
		&\text{(D2a)}-\text{(L2)}\\
		&=\ \iiint \xi_P(z)\gamma_{P,\natural}(y,a,z)\left\{\frac{\tilde{S}_0(y\miid a,z)}{\tilde{S}_P(y\miid a,z)}-\frac{\tilde{R}_0(y\miid a,z)}{\tilde{R}_P(y\miid a,z)}\right\}(\tilde\Lambda_P-\tilde\Lambda_0)(dy\miid a,z)J_0(da,dz)\\
		&=\ \iiint \xi_P(z)\gamma_{P,\natural}(y,a,z)\left\{1-\frac{\nu_P(y\miid a,z)}{\nu_0(y\miid a,z)}\right\}\frac{\tilde{S}_0(y\miid a,z)}{\tilde{S}_P(y\miid a,z)}(\tilde\Lambda_P-\tilde\Lambda_0)(dy\miid a,z)J_0(da,dz)\\
		&=\ \iiint \xi_P(z)\gamma_{P,\natural}(y,a,z)\left\{1-\frac{\nu_P(y\miid a,z)}{\nu_0(y\miid a,z)}\right\}\left(\frac{\tilde{S}_P}{\tilde{S}_0}-1\right)(dy\miid a,z)J_0(da,dz)\, =\, R_2(P,P_0)\, .
	\end{align*}We also note that $\text{(D3)}=\text{(D3a)}+\text{(D3b)}=R_4(P,P_0)$ and that $\text{(D2c)}+\text{(D2d)}=R_3(P,P_0)$. As such, we find that the remainder from the linear approximation of $\Psi(P)-\Psi(P_0)$ by $(P-P_0)\phi_P$ is given by $\{\text{(D1)}+\text{(D1)}+\text{(D1)}+\text{(D1)}\}-\{\text{(L1)}+\text{(L2)}+\text{(L3)}+\text{(L4)}\}$, and this quantity coincides precisely with the form of $R(P,P_0)$ given in the theorem.
	
	\subsection*{Proof of Theorem~\ref{thm:asymp_point}}
	
	We begin by studying $\psi_n^*$.  By Theorem~\ref{thm:if}, we note that $\psi_{\eta_{n,k}} - \psi_0= -P_0\phi_{\eta_{n,k}}+R(\eta_{n,k},\eta_0)$ for each $k=1,2,\ldots,K$, and so, denoting $\bar{\phi}_\infty:=\phi_\infty-P_0\phi_\infty$, we can write \begin{align*}
		\psi_n^*-\psi_0\ &=\ \frac{1}{K}\sum_{k=1}^{K}\left(\psi_{\eta_{n,k}}+\mathbb{P}_{n,k}\phi_{\eta_{n,k}}-\psi_0\right)\\
		&=\  \frac{1}{K}\sum_{k=1}^{K}\left\{(\mathbb{P}_{n,k}-P_0)\phi_\infty+(\mathbb{P}_{n,k}-P_0)(\phi_{\eta_{n,k}}-\phi_\infty)+R(\eta_{n,k},\eta_0)\right\}\\
		&=\ P_n\bar{\phi}_\infty+r_{an}+r_{bn}+r_{cn}\ ,
	\end{align*}where we have defined $r_{an}:=\frac{1}{K}\sum_{k=1}^{K}(\mathbb{P}_{n,k}-\mathbb{P}_n)\phi_\infty$, $r_{bn}:=\frac{1}{K}\sum_{k=1}^{K}(\mathbb{P}_{n,k}-P_0)(\phi_{\eta_{n,k}}-\phi_\infty)$ and $r_{cn}:=\frac{1}{K}\sum_{k=1}^{K}R(\eta_{k,n},\eta_0)$.
	
	We first note that $P_n\bar{\phi}_\infty=o_P(n^{-\frac{1}{2}})$ under Condition \ref{ass:pro_pos} in view of the fact that $n^{\frac{1}{2}}P_n\bar{\phi}_\infty$ tends to a normal random variable with mean zero and variance $P_0\bar{\phi}_\infty^2<\infty$.
	
	Next, we show that $r_{an}=o_P(n^{-\frac{1}{2}})$. To see this, we note that since we can always find $n_1,n_2,\ldots,n_K$ such that $|n_kK-n|\leq K$ for each $k=1,2,\ldots,n$, we have that \begin{align*}
		|r_{an}|\ =\ \left|\sum_{k=1}^{K}\sum_{i\in V_k}\left(\frac{1}{Kn_k}-\frac{1}{n}\right)\phi_\infty(O_i)\right|\ &\leq\ \max_k \left|\frac{n}{Kn_k}-1\right|\frac{1}{n}\sum_{i=1}^{n}|\phi_\infty(O_i)|\\
		&\leq\ \left(\frac{K}{n-K}\right)\frac{1}{n}\sum_{i=1}^{n}|\phi_\infty(O_i)|\ =\ o_P(n^{-1})\\
		&=\ o_P(n^{-\frac{1}{2}})
	\end{align*}in view of the fact that $var_0\{\phi_\infty(O)\}<\infty$. Next, we show that $r_{bn}=o_P(n^{-\frac{1}{2}})$ under Conditions~\ref{ass:pro_conv}--\ref{ass:pro_pos}. Denoting by $\mathcal{D}_k:=\cup_{j\notin k}\mathcal{V}_j$ the portion of the dataset used to construct $\eta_{n,k}$ and writing $A_{n,k}:=n^{\frac{1}{2}}(\mathbb{P}_{n,k}-P_0)(\phi_{\eta_{n,k}}-\phi_\infty)$, by Chebyshev's inequality, for any $\varepsilon>0$, we have that \begin{align*}
		P_0\left(\left|A_{n,k}\right|>\varepsilon\,|\,\mathcal{D}_k\right)\ \leq\ \frac{var_0\left(A_{n,k}\,|\,\mathcal{D}_{n,k}\right)}{\varepsilon^2}\ \leq\ \frac{n}{n_k}\cdot\frac{P_0\left(\phi_{n,k}-\phi_\infty\right)^2}{\varepsilon^2}\ \longrightarrow\ 0
	\end{align*}provided $P_0\left(\phi_{n,k}-\phi_\infty\right)^2$ tends to zero in probability and using that $n/n_k\rightarrow K<\infty$. By the Bounded Convergence Theorem, it follows that $A_{n,k}$ tends to zero in probability since we can write  \[P_0\left(|A_{n,k}|>\varepsilon\right)=E_0\left[P_0\left(\left|A_{n,k}\right|>\varepsilon\,|\,\mathcal{D}_k\right)\right]\longrightarrow 0\]  for each $\varepsilon>0$. This implies that $r_{bn}=n^{-\frac{1}{2}}\frac{1}{K}\sum_{k=1}^{K}A_{n,k}=o_P(n^{-\frac{1}{2}})$, as claimed, provided we can show that $P_0\left(\phi_{n,k}-\phi_\infty\right)^2=o_P(1)$. To do so, we first observe that we can express $\phi_{n,k}-\phi_\infty$ as the sum $U_{1,n,k}+U_{2,n,k}+\ldots+U_{11,n,k}$, where we define
	\begin{align*}
		U_{1,n,k}&:o\mapsto I(a=a_0) \left[\frac{\bar\gamma_{n,k}(a,z)}{\pi_{n,k}(z)} - \frac{\bar\gamma_{\infty}(a,z)}{\pi_{\infty}(z)}\right]  \phi_{\text{KM},\infty}(L_{\infty, \varphi})(z,a,w,y,\delta)\\
		U_{2,n,k}&:o\mapsto \frac{I(a=a_0,\delta=1)\bar\gamma_{n,k}(a,z)\tilde S_{\infty}(y\miid a,z)}{\pi_{n,k}(z) R_{\infty} (y\miid a,z)}\left[\frac{L_{n,k, \varphi}(y,a,z)}{\tilde S_{n,k}(y\miid a,z)} -\frac{L_{\infty, \varphi}(y,a,z)}{\tilde S_{\infty}(y\miid a,z)}\right]\\
		U_{3,n,k}&:o\mapsto \frac{I(a=a_0, \delta=1)\bar\gamma_{n,k}(a,z)L_{n,k, \varphi}(y,a,z)}{\pi_{n,k}(z)S_{n,k} (y\miid a,z)}\left[\frac{S_{n,k} (y\miid a,z)}{\tilde R_{n,k} (y\miid a,z)}-\frac{S_{\infty} (y\miid a,z)}{\tilde R_{\infty} (y\miid a,z)}\right]\\
		U_{4,n,k}&:o\mapsto - \frac{I(a=a_0)\bar\gamma_{n,k}(a,z)}{\pi_{n,k}(z)} \int_w^y \left[\frac{\tilde S_{n,k}(u\miid a,z)}{ R_{n,k}(u\miid a,z)} -\frac{\tilde S_{\infty}(u\miid a,z)}{R_\infty(u\miid a,z)} \right] \frac{L_{\infty, \varphi}(u,a,z)}{\tilde S_\infty(u\miid a,z)}\tilde\Lambda_\infty(du\miid a,z)\\
		U_{5,n,k}&:o\mapsto - \frac{I(a=a_0)\bar\gamma_{n,k}(a,z)}{\pi_{n,k}(z)} \\
		&\hspace{.5in} \int_w^y \frac{\tilde S_{n,k}(u\miid a,z)}{R_{n,k}(u\miid a,z)}\left[\frac{L_{n,k, \varphi}(u,a,z)}{\tilde S_{n,k}(u\miid a,z)}\tilde \Lambda_{n,k}(du\miid a,z) - \frac{L_{\infty, \varphi}(u,a,z)}{\tilde S_\infty(u\miid a,z)}\tilde \Lambda_\infty(du\miid a,z)\right]\\
		U_{6,n,k}&:o\mapsto \frac{\mu_{n,k}(z) - \mu_{\infty}(z)}{\gamma_\infty}\left[\frac{1}{S_\infty(w\miid a,z)}-\phi_{\text{KM}, \infty}(\gamma_{\infty, \natural})(z,a,w,y,\delta)\right]\\
		U_{7,n,k}&:o\mapsto \mu_{n,k}(z)\left[\frac{1}{\tilde S_{n,k}(w\miid a,z)\gamma_{n,k}} - \frac{1}{\tilde S_\infty(w\miid a,z)\gamma_\infty}\right]\\
		U_{8,n,k}&:o\mapsto\frac{\delta \mu_{n,k}(z)\tilde S_{\infty}(y\miid a,z)}{R_{\infty}(y\miid a,z)} \left[\frac{\gamma_{n,k, \natural}(y,a,z)}{\gamma_{n,k}\tilde S_{n,k}(y\miid a,z)} - \frac{\gamma_{\infty, \natural}(y,a,z)}{\gamma_{\infty}\tilde S_{\infty}(y\miid a,z)}\right]\\
		U_{9,n,k}&:o\mapsto - \frac{\delta \mu_{n,k}(z)\gamma_{n,k, \natural}(y,a,z)}{\gamma_{n,k}\tilde S_{n,k}(y\miid a,z)}\left[\frac{\tilde S_{n,k}(y\miid a,z)}{R_{n,k}(y\miid a,z)}-\frac{\tilde S_{\infty}(y\miid a,z)}{R_{\infty}(y\miid a,z)}\right]\\
		U_{10,n,k}&:o\mapsto \mu_{n,k}(z) \int_w^y \left[\frac{\tilde S_{n,k}(u\miid a,z)}{R_{n,k}(u\miid a,z)}-\frac{\tilde S_{\infty}(u\miid a,z)}{R_{\infty}(u\miid a,z)}\right]\frac{\gamma_{\infty, \natural}(u\miid a,z)}{\gamma_\infty \tilde S_\infty(u\miid a,z)}\tilde \Lambda_\infty(du\miid a,z)\\
		U_{11,n,k}&:o\mapsto\mu_{n,k}(z) \int_w^y \frac{\tilde S_{n,k}(u\miid a,z)}{R_{n,k}(u\miid a,z)}\left[\frac{\gamma_{n,k, \natural}(u,a,z)}{\gamma_{n,k}\tilde S_{n,k}(u\miid a,z)}\tilde \Lambda_{n,k}(du\miid a,z) - \frac{\gamma_{\infty, \natural}(u,a,z)}{\gamma_{\infty}\tilde S_\infty(u\miid a,z)}\tilde \Lambda_\infty(du\miid a,z)\right]
	\end{align*}with $\mu_{n,k}(z):=\int \varphi(u,z)(F_{n,k}-F_0)(du\miid a_0,z)$. By the triangle inequality, we then have that \[P_0\left(\phi_{n,k}-\phi_\infty\right)^2\leq\{(P_0U^2_{1,n,k})^\frac{1}{2}+(P_0U^2_{2,n,k})^\frac{1}{2}+\ldots+(P_0U^2_{11,n,k})^\frac{1}{2}\}^2,\]and so, we can focus on bounding each $P_0U^2_{j,n,k}$ separately. We define $\overline \tau(z) := \min\{\overline\tau_C(z), \overline\tau_T(z)\}$ and the (random) bounding terms $M_{1,n,k},M_{2,n,k},\ldots,M_{6,n,k}$ given by
	\begin{align*}
		M_{1,n,k}^2&:=E_0\left|\frac{\bar\gamma_{n,k}(a_0,Z)}{\pi_{n,k}(Z)} - \frac{\bar\gamma_{\infty}(a_0,Z)}{\pi_{\infty}(Z)}\right|^2\\
		M_{2,n,k}^2&:=E_0\left[\sup_{y\in[0,\overline{\tau}(a_0,Z)]}\left|\frac{L_{n,k, \varphi}(y,a_0,Z)}{\tilde S_{n,k}(y\miid a_0, Z)} - \frac{L_{\infty, \varphi}(y,a_0,Z)}{\tilde S_{\infty}(y\miid a_0, Z)}\right|\right]^2\\
		M_{3,n,k}^2&:=E_0\left[\sup_{y\in[0,\overline{\tau}(A,Z)]}\left|\frac{\tilde S_{n,k}(y\miid A, Z)}{ R_{n,k}(y\miid A, Z)} - \frac{\tilde S_{\infty}(y\miid A, Z)}{ R_{\infty}(y\miid A, Z)}\right|\right]^2\\
		M_{4,n,k}^2&:=E_0\left[\sup_{y\in[\underline{\tau}_T(A,Z),\overline{\tau}_W(A,Z)]}\left|\frac{\bar{\gamma}_{n,k,\natural}(y, A,Z)}{\tilde S_{n,k}(y\miid A,Z)} - \frac{\bar{\gamma}_{\infty, \natural}(y,A,Z)}{\tilde S_\infty(y\miid A,Z)}\right|\right]^2\\
		M_{5,n,k}^2&:=E_0\left[\sup_{y\in[\underline{\tau}_W(A,Z),\overline{\tau}_W(A,Z)]}\left| \frac{1}{\tilde S_{n,k}(y\miid A,Z)} - \frac{1}{\tilde S_\infty(y\miid A,Z)}\right|\right]^2\\
		M_{6,n,k}^2&:=E_0\left[\sup_{u\in[\underline{\tau}_W(A,Z),\overline{\tau}(A,Z)]}\left|G_{n,k}(u\miid A,Z)-G_{\infty}(u\miid A,Z)\right|\right]^2,
	\end{align*}where $E_0$ is a $P_0$--expectation  over the random data unit $(W,A,Z)$ drawn independently of $\eta_{n,k}$. Below, we restrict our attention to the portion of the sample space on which $\tilde F_0(t\miid a_0,z)$ is identified --- this is the relevant event to focus on since it has $P_0$--probability tending to one by Condition~\ref{ass:support}. Before proceeding, we note that $
	|\phi_{\text{KM},\infty}(L_{\infty,\varphi})(z,a_0,w,y,\delta)|\leq V_{1,\infty}(y,z)+V_{2,\infty}(y,w,z)$, where we write 
	\begin{align*}
		V_{1,\infty}(y,z)\ :=&\ \ \left|\frac{L_{\infty,\varphi}(y,a_0,z)}{R_\infty(y\miid a_0,z)}\right|\ =\ \left|\frac{L_{\infty,\varphi}(y,a_0,z)}{S_\infty(y\miid a_0,z)}\right|\left|\frac{S_\infty(y\miid a_0,z)}{R_\infty(y\miid a_0,z)}\right|\\
		V_{2,\infty}(y,w,z)\ :=&\ \ \left|\int_w^y \frac{L_{\infty,\varphi}(u\miid a_0,z)}{R_\infty(u\miid a_0,z)}\Lambda_\infty(du\miid a_0,z)\right|\\
		=&\ \ \left|\int_w^y \frac{L_{\infty,\varphi}(u\miid a_0,z)}{R_\infty(u\miid a_0,z)}S_\infty(u\miid a_0,z)\frac{\Lambda_\infty(du\miid a_0,z)}{S_\infty(u\miid a_0,z)}\right|\\
		=&\ \ \left|\int_w^y L_{\infty,\varphi}(u\miid a_0,z)\frac{S_\infty(u\miid a_0,z)}{R_\infty(u\miid a_0,z)}\frac{1}{S_\infty}(du\miid a_0,z)\right|\\
		\leq&\ \ \sup_{u\in[w,y]}\left|\frac{S_\infty(u\miid a_0,z)}{R_\infty(u\miid a_0,z)}\right|\int_w^y |L_{\infty,\varphi}(u\miid a_0,z)|\frac{1}{S_\infty}(du\miid a_0,z)\ .
	\end{align*}Now, for each $z$ over $[0,\overline{\tau}(z)]$, $u\mapsto \varphi(u,z)$ is assumed to have finite variation, and so, we can write $\varphi(\cdot,z)=\varphi_1(\cdot,z)-\varphi_2(\cdot,z)$ for non-decreasing functions $\varphi_1(\cdot,z)$ and $\varphi_2(\cdot,z)$ with finite variation. This implies that we can write $L_{\infty,\varphi}=L_{\infty,\varphi_1}-L_{\infty,\varphi_2}$ with $L_{\infty,\varphi_1},L_{\infty,\varphi_2}\geq 0$, which implies that $|L_{\infty,\varphi}(u\miid a_0,z)|\leq L_{\infty,\varphi_1}(u\miid a_0,z)+L_{\infty,\varphi_2}(u\miid a_0,z)$ and \begin{align*}
		&\int_w^y |L_{\infty,\varphi}(u\miid a_0,z)|\frac{1}{S_\infty}(du\miid a_0,z)\\
		&\hspace{.3in}\leq\ \int_w^y L_{\infty,\varphi_1}(u\miid a_0,z)\frac{1}{S_\infty}(du\miid a_0,z)+\int_w^y L_{\infty,\varphi_2}(u\miid a_0,z)\frac{1}{S_\infty}(du\miid a_0,z)\ .
	\end{align*}By integration by parts, we have that \begin{align*}
		&\int_w^y L_{\infty,\varphi_j}(u\miid a_0,z)\frac{1}{S_\infty}(du\miid a_0,z)\\
		&\hspace{.3in}=\ \frac{L_{\infty,\varphi_j}(y\miid a_0,z)}{S_\infty(y\miid a_0,z)}-\frac{L_{\infty,\varphi_j}(w\miid a_0,z)}{S_\infty(w\miid a_0,z)}-\int_w^y \frac{1}{S_\infty(u\miid a_0,z)}L_{\infty,\varphi_j}(du\miid a_0,z)\\
		&\hspace{.3in}\leq\ 2\sup_{u\in[w,y]}\left|\frac{L_{\infty,\varphi_j}(u\miid a_0,z)}{S_\infty(u\miid a_0,z)}\right|+\int_w^y \varphi_j(du,z)\ \leq\ 2\sup_{u\in[w,y]}\left|\frac{L_{\infty,\varphi}(u\miid a_0,z)}{S_\infty(u\miid a_0,z)}\right|+\|\varphi_j(\cdot,z)\|_{v,[0,\overline{\tau}(z)]}\ ,
	\end{align*}which then implies that \[\int_w^y |L_{\infty,\varphi}(u\miid a_0,z)|\frac{1}{S_\infty}(du\miid a_0,z)\ \leq\ 4\sup_{u\in[w,y]}\left|\frac{L_{\infty,\varphi}(u\miid a_0,z)}{S_\infty(u\miid a_0,z)}\right|+\|\varphi(\cdot,z)\|_{v,[0,\overline{\tau}(z)]}\]and therefore  that $|\phi_{\text{KM},\infty}(L_{\infty,\varphi})(z,a_0,w,y,\delta)|$ is bounded above by  \[ \sup_{u\in[w,y]}\left|\frac{S_\infty(u\miid a_0,z)}{R_\infty(u\miid a_0,z)}\right|\left[5\sup_{u\in[w,y]}\left|\frac{L_{\infty,\varphi}(u\miid a_0,z)}{S_\infty(u\miid a_0,z)}\right|+\|\varphi(\cdot,z)\|_{v,[0,\overline{\tau}(z)]}\right].\]
	In view of the above, we have that \begin{align*}
		P_0U_{1,n,k}^2\,&\leq\,E_0\left[\left|\frac{\bar\gamma_{n,k}(a_0,Z)}{\pi_{n,k}(Z)} - \frac{\bar\gamma_{\infty}(a_0,Z)}{\pi_{\infty}(Z)}\right|^2 \phi_{\text{KM},\infty}(L_{\infty,\varphi})(Z,a_0,W,Y,\Delta)^2\right]\,\leq\,36\,\kappa^4M_{1,n,k}^2\ .
	\end{align*}We can write that \begin{align*}
		P_0U_{2,n,k}^2\,&\leq\,E_0\left[\left|\frac{L_{n,k, \varphi}(Y,a_0,Z)}{\tilde S_{n,k}(Y\miid a_0,Z)} -\frac{L_{\infty, \varphi}(Y,a_0,Z)}{\tilde S_{\infty}(Y\miid a_0,Z)}\right| \frac{\bar\gamma_{n,k}(a_0,Z)}{\pi_{n,k}(Z) }\frac{\tilde S_{\infty}(Y\miid a_0,Z)}{R_{\infty} (Y\miid a_0,Z)}\right]^2\,\leq\,\kappa^4M_{2,n,k}^2\ .
	\end{align*}We can also write that \begin{align*}
		P_0U_{3,n,k}^2\,&\leq\,E_0\left[\left|\frac{\tilde S_{n,k}(Y\miid a_0,Z)}{R_{n,k}(Y\miid a_0,Z)} -\frac{\tilde S_{\infty}(Y\miid a_0,Z)}{R_{\infty}(Y\miid a_0,Z)}\right| \frac{\bar\gamma_{n,k}(a_0,Z)}{\pi_{n,k}(Z) }\frac{L_{n,k, \varphi}(Y,a_0,Z)}{\tilde S_{n,k}(Y\miid a_0,Z)}\right]^2\,\leq\,\kappa^4M_{3,n,k}^2\ .
	\end{align*}We have that \begin{align*}
		&P_0U_{4,n,k}^2\,=\,E_0\left[\frac{\bar\gamma_{n,k}(a_0,Z)}{\pi_{n,k}(Z)}\int_W^Y\left[\frac{\tilde S_{n,k}(u\miid a_0,Z)}{ R_{n,k}(u\miid a_0,Z)} -\frac{\tilde S_{\infty}(u\miid a_0,Z)}{R_\infty(u\miid a_0,Z)}\right]L_{\infty,\varphi}(u,a_0,Z)\frac{1}{S_\infty}(du\miid a_0,Z)\right]^2\\
		&\leq\, E_0\left[\frac{\bar\gamma_{n,k}(a_0,Z)}{\pi_{n,k}(Z)}\sup_{u\in[W,Y]}\left|\frac{\tilde S_{n,k}(u\miid a_0,Z)}{ R_{n,k}(u\miid a_0,Z)} -\frac{\tilde S_{\infty}(u\miid a_0,Z)}{R_\infty(u\miid a_0,Z)}\right|\int_W^Y |L_{\infty,\varphi}(u\miid a_0,Z)|\frac{1}{S_\infty}(du\miid a_0,Z)\right]^2\\
		&\leq\, \frac{25\kappa^4}{\gamma_{n,k}^2} M^2_{3,n,k}\ .
	\end{align*}Next, using part (c) of Lemma 1, and writing for convenience \[\Theta_{n,k}(du):=\frac{L_{n,k, \varphi}(u,a_0,Z)}{\tilde S_{n,k}(u\miid a_0,Z)}\tilde \Lambda_{n,k}(du\miid a_0,Z)\text{\ \ and\ \ }\Theta_\infty(du):=\frac{L_{\infty, \varphi}(u,a_0,Z)}{\tilde S_\infty(u\miid a_0,Z)}\tilde \Lambda_\infty(du\miid a_0,Z)\ ,\]we have that  \[
	P_0U^2_{5,n,k}\,\leq\,E_0\left[\frac{\bar\gamma_{n,k}(a_0,Z)}{\pi_{n,k}(Z)} \int_W^Y \frac{\tilde S_{n,k}(u\miid a_0,Z)}{R_{n,k}(u\miid a_0,Z)}\left(\Theta_{n,k}-\Theta_\infty\right)(du)\right]^2.\]
	To make further progress on this bound, we note that \begin{align*}
		&\left|\int_w^y \frac{\tilde S_{n,k}(u\miid a_0,z)}{R_{n,k}(u\miid a_0,z)}(\Theta_{n,k}-\Theta_\infty)(du)\right|\\
		&\leq\, \left|\int_w^y \left[\frac{\tilde S_{n,k}(u\miid a_0,z)}{R_{n,k}(u\miid a_0,z)}-\frac{\tilde S_{\infty}(u\miid a_0,z)}{R_{\infty}(u\miid a_0,z)}\right](\Theta_{n,k}-\Theta_\infty)(du)\right|+\left|\int_w^y \frac{\tilde S_{\infty}(u\miid a_0,z)}{R_{\infty}(u\miid a_0,z)}(\Theta_{n,k}-\Theta_\infty)(du)\right|\\
		&\leq\, \sup_{u\in[w,y]}\left|\frac{\tilde S_{n,k}(u\miid a_0,z)}{R_{n,k}(u\miid a_0,z)}-\frac{\tilde S_{\infty}(u\miid a_0,z)}{R_{\infty}(u\miid a_0,z)}\right|\int_w^y|(\Theta_{n,k}-\Theta_\infty)(du)|\\
		&\hspace{.5in}+\left|\int_w^y \frac{\tilde S_{\infty}(u\miid a_0,z)}{R_{\infty}(u\miid a_0,z)}(\Theta_{n,k}-\Theta_\infty)(du)\right|.
	\end{align*}Using arguments used above, we observe that \begin{align*}
		&\int_w^y|(\Theta_{n,k}-\Theta_\infty)(du)|\,\leq\,\int_w^y|\Theta_{n,k}(du)|+\int_w^y |\Theta_\infty(du)|\\
		&\hspace{0.5in}=\, \int_w^y \frac{|L_{n,k,\varphi}(u\miid a_0,z)|}{\tilde{S}_{n,k}(u\miid a_0,z)}\tilde\Lambda_{n,k}(du\miid a_0,z)+\int_w^y \frac{|L_{\infty,\varphi}(u\miid a_0,z)|}{\tilde{S}_{\infty}(u\miid a_0,z)}\tilde\Lambda_{\infty}(du\miid a_0,z)\\
		&\hspace{0.5in}\leq\,4\sup_{u\in[w,y]}\left|\frac{L_{\infty,\varphi}(u\miid a_0,z)}{S_\infty(u\miid a_0,z)}\right|+4\sup_{u\in[w,y]}\left|\frac{L_{n,k,\varphi}(u\miid a_0,z)}{S_{n,k}(u\miid a_0,z)}\right|+2\|\varphi(\cdot,z)\|_{v,[0,\overline{\tau}(z)]}\,\leq\,10\kappa
	\end{align*}and also, in view of part (c) of Lemma 1, that \begin{align*}
		\left|\int_W^Y \frac{\tilde S_{\infty}(u\miid a_0,Z)}{R_{\infty}(u\miid a_0,Z)}(\Theta_{n,k}-\Theta_\infty)(du)\right|\,\leq\,3\kappa \sup_{u\in[W,Y]}\left|\frac{L_{n,k,\varphi}(u,a_0,Z)}{\tilde{S}_{n,k}(u\miid a_0,Z)}-\frac{L_{\infty,\varphi}(u,a_0,Z)}{\tilde{S}_{\infty}(u\miid a_0,Z)}\right|
	\end{align*}holds $P_0$--almost surely. As a consequence, we find that \begin{align*}
		P_0U^2_{5,n,k}\,&\leq\,E_0\left[\frac{\bar\gamma_{n,k}(a_0,Z)}{\pi_{n,k}(Z)} \int_W^Y \frac{\tilde S_{n,k}(u\miid a_0,Z)}{R_{n,k}(u\miid a_0,Z)}\left(\Theta_{n,k}-\Theta_\infty\right)(du)\right]^2\\
		&\leq\,\kappa^4(10 M_{3,n,k}+3 M_{2,n,k})^2
	\end{align*}
	Before studying the remaining terms, we note that \begin{align*}
		&\mu_{n,k}(z)-\mu_{\infty}(z)\,=\, \int \varphi(y,z)(F_{n,k}-F_0)(dy\miid a_0,z)-\int \varphi(y,z)(F_{\infty}-F_0)(dy\miid a_0,z)\\
		&\hspace{.3in}=\, -\left[\int\frac{ L_{n,k,\varphi}(y, a_0, z) }{S_{n,k}(y \miid a_0, z)}{S}_0(y \miid a_0, z)(\Lambda_{n,k}-\Lambda_0)(dy \miid a_0, z) \right.\\
		&\hspace{1in}-\left.\int \frac{L_{\infty,\varphi}(y, a_0, z)}{S_{\infty}(y \miid a_0, z)}{S}_0(y \miid a_0, z) (\Lambda_{\infty}-\Lambda_0)(dy \miid a_0, z) \right]\\
		&\hspace{.3in}=\int\left[\frac{ L_{n,k,\varphi}(y, a_0, z) }{S_{n,k}(y \miid a_0, z)} - \frac{L_{\infty,\varphi}(y, a_0, z)}{S_{\infty}(y \miid a_0, z)}\right]{S}_0(y \miid a_0, z)\Lambda_0(dy \miid a_0, z) \\
		&\hspace{1in}-\int {S}_0(y \miid a_0, z) \left[\frac{ L_{n,k,\varphi}(y, a_0, z) }{S_{n,k}(y \miid a_0, z)}\Lambda_{n,k}(dy \miid a_0, z)-\frac{L_{\infty,\varphi}(y, a_0, z)}{S_{\infty}(y \miid a_0, z)}\Lambda_{\infty}(dy \miid a_0, z)\right],
	\end{align*}and so, in view of part (c) of Lemma~\ref{lem:duham}, we find that \begin{align*}
		|\mu_{n,k}(z)-\mu_{\infty}(z)|\ \leq\ 4\sup_{y\in[0,\bar{\tau}(z)]}\left|\frac{ L_{n,k,\varphi}(y, a_0, z) }{S_{n,k}(y \miid a_0, z)}-\frac{L_{\infty,\varphi}(y, a_0, z)}{S_{\infty}(y \miid a_0, z)}\right|.
	\end{align*}
	Similarly as above, before proceeding, we note that $
	|\phi_{\text{KM},\infty}(L_{\infty,\varphi})(z,a,w,y,\delta)|$ is bounded above by $ V_{1,\infty}(y,a,z)+V_{2,\infty}(y,w,a,z)$, where we write 
	\begin{align*}
		V_{1,\infty}(y,a,z)\ :=&\ \ \left|\frac{\gamma_{\infty,\natural}(y,a,z)}{R_\infty(y\miid a,z)}\right|\ =\ \left|\frac{\gamma_{\infty,\natural}(y,a,z)}{S_\infty(y\miid a,z)}\right|\left|\frac{S_\infty(y\miid a,z)}{R_\infty(y\miid a,z)}\right|\\
		V_{2,\infty}(y,w,a,z)\ :=&\ \ \left|\int_w^y \frac{\gamma_{\infty,\natural}(u\miid a,z)}{R_\infty(u\miid a,z)}\Lambda_\infty(du\miid a,z)\right|\\
		=&\ \ \left|\int_w^y \frac{\gamma_{\infty,\natural}(u\miid a,z)}{R_\infty(u\miid a,z)}S_\infty(u\miid a,z)\frac{\Lambda_\infty(du\miid a,z)}{S_\infty(u\miid a,z)}\right|\\
		=&\ \ \left|\int_w^y \gamma_{\infty,\natural}(u\miid a,z)\frac{S_\infty(u\miid a,z)}{R_\infty(u\miid a,z)}\frac{1}{S_\infty}(du\miid a,z)\right|\\
		\leq&\ \ \sup_{u\in[w,y]}\left|\frac{S_\infty(u\miid a,z)}{R_\infty(u\miid a,z)}\right|\int_w^y \gamma_{\infty,\natural}(u\miid a,z)\frac{1}{S_\infty}(du\miid a,z)\ ,
	\end{align*} and since we have that \begin{align*}
		0\ \leq\ \int_w^y\gamma_{\infty,\natural}(u\miid a,z)\frac{1}{S_\infty}(du\miid a,z)\ &=\ \left.\frac{\gamma_{\infty,\natural}(u\miid a,z)}{S_\infty(u\miid a,z)}\right|_{u=w}^{y}-\int_w^y \frac{1}{S_\infty(u\miid a,z)}\gamma_{\infty,\natural}(du\miid a,z)\\
		&=\ \left.\frac{\gamma_{\infty,\natural}(u\miid a,z)}{S_\infty(u\miid a,z)}\right|_{u=w}^{y}+\int_w^y \left[\frac{1}{S_\infty(u\miid a,z)}\right]^2G_\infty(du\miid a,z)\\
		& \leq\ 2\sup_{u\in[w,y]}\left|\frac{\gamma_{\infty,\natural}(u\miid a,z)}{S_\infty(u\miid a,z)}\right|+\left[\frac{1}{S_\infty(\overline{\tau}_W(z)\miid a,z)}\right]^2,
	\end{align*}it follows that $|\phi_{\text{KM},\infty}(L_{\infty,\varphi})(Z,A,W,Y,\Delta)|\leq \kappa(2\kappa+\kappa^2)<\infty$ $P_0$--almost surely. Using this fact, we then can write that \begin{align*}
		P_0U^2_{6,n,k}\,&=\,\gamma_0^{-2}E_0\left[[\mu_{n,k}(Z)-\mu_{\infty}(Z)]\left[\frac{1}{S_\infty(W\miid A,Z)}-\phi_{\text{KM},\infty}(\gamma_{\infty,\natural})(Z,A,W,Y,\Delta)\right]\right]^2\\
		&\leq\, \left[\frac{\kappa+\kappa(2\kappa+\kappa^2)}{\gamma_0}\right]^2E_0\left[\mu_{n,k}(Z)-\mu_\infty(Z)\right]^2\,\leq\, 16\left[\frac{\kappa+\kappa(2\kappa+\kappa^2)}{\gamma_0}\right]^2M^2_{2,n,k}\ .
	\end{align*}
	Next, we have that
	\begin{align*}
		&P_0U_{7,n,k}^2\,=\,E_0\left[\mu_{n,k}(Z)\left|\frac{1}{\tilde S_{n,k}(W\miid A,Z)\gamma_{n,k}}-\frac{1}{\tilde S_{\infty}(W\miid A,Z)\gamma_{\infty}}\right| \right]^2\\
		&\leq\, E_0\left[\mu_{n,k}(Z)\left|\frac{1}{\tilde S_{n,k}(W\miid A,Z)}\left(\frac{1}{\gamma_{n,k}}-\frac{1}{\gamma_\infty}\right)+\frac{1}{\gamma_\infty}\left[\frac{1}{\tilde S_{\infty}(W\miid A,Z)}-\frac{1}{\tilde S_{\infty}(W\miid A,Z)}\right]\right| \right]^2\\
		&\leq\ \Bigg{[}\left|\frac{1}{\gamma_{n,k}}-\frac{1}{\gamma_\infty}\right|\Bigg{[}E_0\left|\frac{\mu_{n,k}(Z)}{\tilde{S}_{n,k}(W\miid A,Z)}\right]^2\Bigg{]}^\frac{1}{2}  
		+\Bigg{[}E_0\left|\frac{\mu_{n,k}(Z)}{\gamma_\infty}\Bigg{[}\frac{1}{\tilde S_{\infty}(W\miid A,Z)}-\frac{1}{\tilde S_{\infty}(W\miid A,Z)}\Bigg{]}\right|^2\Bigg{]}^\frac{1}{2}\Bigg{]}^2\\
		&\leq\ \left[2\kappa^2\left|\frac{1}{\gamma_{n,k}}-\frac{1}{\gamma_\infty}\right|+2\kappa\frac{1}{\gamma_\infty}M_{5,n,k}\right]^2.
	\end{align*} Using a similar expansion, it can be shown that \begin{align*}
		&P_0U_{8,n,k}^2\leq 2\kappa^2 \left[2\kappa^2\left|\frac{1}{\gamma_{n,k}}-\frac{1}{\gamma_\infty}\right|+2\kappa\frac{1}{\gamma_\infty}M_{4,n,k}\right]^2
	\end{align*}We have that \begin{align*}
		P_0U^2_{9,n,k}\,=\,E_0\left[|\mu_{n,k}(Z)|\left|\frac{\bar{\gamma}_{n,k,\natural}(Y,A,Z)}{\tilde{S}_{n,k}(Y\miid A,Z)}\right|\left|\frac{\tilde S_{n,k}(Y\miid A,Z)}{R_{n,k}(Y\miid A,Z)} -\frac{\tilde S_{\infty}(Y\miid A,Z)}{R_{\infty}(Y\miid A,Z)}\right|\right]^2\,\leq\, 4\kappa^4 M_{3,n,k}^2\ .
	\end{align*}Next, we note that \begin{align*}
		& \left|\int_w^y \left[\frac{\tilde S_{n,k}(u\miid a,z)}{R_{n,k}(u\miid a,z)}-\frac{\tilde S_{\infty}(u\miid a,z)}{R_{\infty}(u\miid a,z)}\right]\frac{\gamma_{\infty, \natural}(u\miid a,z)}{\gamma_\infty \tilde S_\infty(u\miid a,z)}\tilde \Lambda_\infty(du\miid a,z)\right|\\
		&\leq\,\gamma_\infty^{-1}\sup_{u\in[w,y]}\left|\frac{\tilde S_{n,k}(u\miid a,z)}{R_{n,k}(u\miid a,z)}-\frac{\tilde S_{\infty}(u\miid a,z)}{R_{\infty}(u\miid a,z)}\right|\int_w^y \gamma_{\infty,\natural}(u\miid a,z)\frac{1}{S_\infty}(du\miid a,z)\\
		&\leq\,\gamma_\infty^{-1}\sup_{u\in[w,y]}\left|\frac{\tilde S_{n,k}(u\miid a,z)}{R_{n,k}(u\miid a,z)}-\frac{\tilde S_{\infty}(u\miid a,z)}{R_{\infty}(u\miid a,z)}\right|\left[2\sup_{u\in[w,y]}\frac{\gamma_{\infty,\natural}(u\miid a,z)}{S_\infty(u\miid a,z)}+\left[\frac{1}{S_\infty(\overline{\tau}_W(z)\miid a,z)}\right]^2\right]
	\end{align*} using the fact that \begin{align*}
		0\,\leq\,\int_w^y\gamma_{\infty,\natural}(u\miid a,z)\frac{1}{S_\infty}(du\miid a,z)\, &=\,\left.\frac{\gamma_{\infty,\natural}(u\miid a,z)}{S_\infty(u\miid a,z)}\right|_{u=w}^y-\int_w^y\left[\frac{1}{S_\infty(u\miid a,z)}\right]^2G_\infty(du\miid a,z)\\
		&\leq\,2\sup_{u\in[w,y]}\frac{\gamma_{\infty,\natural}(u\miid a,z)}{S_\infty(u\miid a,z)}+\left[\frac{1}{S_\infty(\overline{\tau}_W(z)\miid a,z)}\right]^2.
	\end{align*}Thus, using that $|\mu_\infty(z)|\leq \|u\mapsto\varphi(u,z)\|_{\infty}$, we find that \begin{align*}
		P_0U^2_{10,n,k}\,&\leq\,\left[\frac{\|(u,z)\mapsto\varphi(u,z)\|_\infty(2\kappa+\kappa^2)}{\gamma_\infty}\right]^2M_{3,n,k}^2\ .
	\end{align*}To study the final term, similarly as in the study of $P_0U^2_{5,n,k}$, we first note that \begin{align*}
		&\left|\int_w^y \frac{\tilde S_{n,k}(u\miid a,z)}{R_{n,k}(u\miid a,z)}(\Theta_{n,k}-\Theta_\infty)(du)\right|\\
		&\leq\, \left|\int_w^y \left[\frac{\tilde S_{n,k}(u\miid a_0,z)}{R_{n,k}(u\miid a,z)}-\frac{\tilde S_{\infty}(u\miid a,z)}{R_{\infty}(u\miid a,z)}\right](\Theta_{n,k}-\Theta_\infty)(du)\right|+\left|\int_w^y \frac{\tilde S_{\infty}(u\miid a,z)}{R_{\infty}(u\miid a,z)}(\Theta_{n,k}-\Theta_\infty)(du)\right|\\
		&\leq\, \sup_{u\in[w,y]}\left|\frac{\tilde S_{n,k}(u\miid a,z)}{R_{n,k}(u\miid a,z)}-\frac{\tilde S_{\infty}(u\miid a,z)}{R_{\infty}(u\miid a,z)}\right|\int_w^y|(\Theta_{n,k}-\Theta_\infty)(du)|+\left|\int_w^y \frac{\tilde S_{\infty}(u\miid a,z)}{R_{\infty}(u\miid a,z)}(\Theta_{n,k}-\Theta_\infty)(du)\right|,
	\end{align*}where we have now redefined $\Theta_{n,k}$ and $\Theta_\infty$ via the differentials \[\Theta_{n,k}(du):=\frac{\gamma_{n,k, \natural}(u,a,z)}{\tilde S_{n,k}(u\miid a,z)}\tilde \Lambda_{n,k}(du\miid a,z)\text{\ \ and\ \ }\Theta_\infty(du):=\frac{\gamma_{\infty, \natural}(u,a,z)}{\tilde S_\infty(u\miid a,z)}\tilde \Lambda_\infty(du\miid a,Z)\ .\]Using similar arguments as used above, we observe that \begin{align*}
		&\int_w^y|(\Theta_{n,k}-\Theta_\infty)(du)|\,\leq\,\int_w^y|\Theta_{n,k}(du)|+\int_w^y |\Theta_\infty(du)|\\
		&=\, \int_w^y \frac{\gamma_{n,k, \natural}(u,a,z)}{\tilde{S}_{n,k}(u\miid a,z)}\tilde\Lambda_{n,k}(du\miid a,z)+\int_w^y \frac{\gamma_{\infty, \natural}(u,a,z)}{\tilde{S}_{\infty}(u\miid a,z)}\tilde\Lambda_{\infty}(du\miid a,z)\\
		&\leq\,2\sup_{u\in[w,y]}\left|\frac{\gamma_{n,k,\natural}(u\miid a,z)}{S_{n,k}(u\miid a,z)}\right|+2\sup_{u\in[w,y]}\left|\frac{\gamma_{\infty,\natural}(u\miid a,z)}{S_{\infty}(u\miid a,z)}\right|+\left[\frac{1}{S_{\infty}(\overline{\tau}_W(z)\miid a,z)}\right]^2+\left[\frac{1}{S_{n,k}(\overline{\tau}_W(z)\miid a,z)}\right]^2\\
		&\leq\,4\kappa+2\kappa^2
	\end{align*}and also, in view of part (d) of Lemma~\ref{lem:duham}, that $\left|\int_W^Y \frac{\tilde S_{\infty}(u\miid A,Z)}{R_{\infty}(u\miid A,Z)}(\Theta_{n,k}-\Theta_\infty)(du\miid A,Z)\right|$ is $P_0$--almost surely bounded above by\begin{align*}
		&3\kappa\sup_{u\in[W,Y]}\left|\frac{\gamma_{n,k,\natural}(u\miid A,Z)}{S_{n,k}(u\miid A,Z)}-\frac{\gamma_{\infty,\natural}(u\miid A,Z)}{S_{\infty}(u\miid A,Z)}\right|+3\kappa^4\sup_{u\in[W,Y]}\left|G_{n,k}(u\miid A,Z)-G_\infty(u\miid A,Z)\right|\\
		&\hspace{3.4in}+2\kappa^4\sup_{u\in[W,Y]}\left|\frac{1}{S_{n,k}(u\miid A,Z)}-\frac{1}{S_\infty(u\miid A,Z)}\right|.
	\end{align*}As a consequence, we find that \begin{align*}
		P_0U^2_{11,n,k}\,&\leq\,E_0\left[\mu_{n,k}(Z) \int_W^Y \frac{\tilde S_{n,k}(u\miid A,Z)}{R_{n,k}(u\miid A,Z)}\left[\Theta_{n,k}(du\miid A,Z) - \Theta_\infty(du\miid A,Z)\right]\right]^2\\
		&\leq\,4\kappa^2\left[(4\kappa+2\kappa^2)M_{3,n,k}+3\kappa M_{4,n,k}+3\kappa^4M_{6,n,k}+2\kappa^4M_{5,n,k}\right]^2.
	\end{align*}In view of all the inequalities derived, we see that if $M_{1,n,k},M_{2,n,k},\ldots,M_{6,n,k}$ tend to zero in probability, then so does each $P_0U^2_{j,n,k}$ for $j=1,2,\ldots,11$ and thus $P_0(\phi_{n,k}-\phi_\infty)^2$ itself tends to zero in probability, and thus, that $r_{bn}=o_P(n^{-\frac{1}{2}})$.
	
	Next, we study $r_{cn}$. We argue first that $r_{cn}=o_P(1)$ under Conditions \ref{ass:pro_conv}--\ref{ass:pro_pos} and \ref{ass:consistent_a}--\ref{ass:consistent_b}. We observe that $R(\eta_\infty,\eta_0)=0$ under Conditions \ref{ass:consistent_a}--\ref{ass:consistent_b}, and so, by Conditions \ref{ass:pro_conv}--\ref{ass:pro_pos}, it follows that $R(\eta_{k,n},\eta_\infty)=o_P(1)$. This then readily implies that  $r_{cn}=o_P(1)$, establishing part (i) of the theorem. If, instead, condition \ref{ass:pro_remainder} holds, a condition that necessarily requires all parts of Condition \ref{ass:consistent} to hold, it follows that $r_{cn}=o_P(n^{-\frac{1}{2}})$, thereby establishing the asymptotic linearity of $\psi_n^*$ with influence function $\phi_0$. This partially establishes part (iii) of the theorem.
	
	We now establish part (ii). To do so, we show that $P_0 \phi_{\eta_\infty, \psi_0} =0$ under Condition \ref{ass:consistent_a}, that is, provided $S_{X,\infty}=S_{X,0}$, $G_{\infty}=G_0$ and $\pi_\infty=\pi_0$, even if possibly $Q_\infty\neq Q_0$. Suppose that Condition \ref{ass:consistent_a} holds. We compute $P_0\phi_{\eta_\infty,\psi_0}=P_0\phi_{1,\eta_\infty,\psi_0}+P_0\phi_{2,\eta_\infty,\psi_0}$ and study these two summands separately. We begin by noting that \begin{align*}
		P_0\phi_{1,\eta_\infty,\psi_0}\ &=\ E_0\left\{\frac{I(A=a_0)}{\pi_\infty(Z)}\bar{\gamma}_\infty(Z)\phi_{\text{KM},\eta_\infty}(L_{\eta_\infty,\varphi})(O)\right\}\\
		&=\ E_0\left[\frac{I(A=a_0)}{\pi_\infty(Z)}\bar{\gamma}_\infty(Z)E_0\{\phi_{\text{KM},\eta_\infty}(L_{\eta_\infty,\varphi})(O)\,|\,A,Z\}\right]
	\end{align*} and similarly, using that $\mu_\infty=\mu_0$ under Condition \ref{ass:consistent_a}, 
	\[
	\gamma_\infty P_0\phi_{2,\eta_\infty,\psi_0}\,=\,E_0\left\{\frac{\mu_0(Z)-\psi_0}{\tilde{S}_0(W\,|\,A,Z)}\right\}-E_0\left[\left\{\mu_0(Z)-\psi_0\right\}E_0\left\{\phi_{\text{KM},\eta_\infty}(\gamma_{\eta_\infty,\natural})(O)\,|\,A,Z\right\}\right].
	\]
	We can compute the resulting observable conditional subdistribution function as 
	\begin{align*}
		F_{1,\infty}(u\miid a,z)\ &:=\ P_\infty(Y\le u, \Delta=1\miid A=a, Z=z) \\
		&=\ \frac{\int I_{[0,u]}(t)\left\{\int I_{[0,t]}(w) Q_\infty(t\,|\,w,a,z)G_{X,0}(dw\,|\,a,z)\right\}F_{X,0}(dt\,|\,a,z)}{\int G_{X,0}(t\,|\,a,z)F_{X,0}(dt\,|\,a,z)}\, ,
	\end{align*}
	and that the observable conditional at-risk probability function as 
	\begin{align*}
		R_\infty(u\miid a,z) :=&\ P_\infty(W\le u\le Y\miid A=a,Z=z)\\
		=&\, \frac{\int I_{[0,u]}(w)Q_\infty(u\,|\,w,a,z)G_{X,0}(dw\,|\,a,z)}{\int G_{X,0}(t\,|\,a,z)F_{X,0}(dt\,|\,a,z)}S_{X,0}(u\miid a,z)\, ,
	\end{align*}
	which implies that $F_{1,\infty}(du\miid a,z)/R_\infty(u\miid a,z) = F_{X,0}(du\miid a,z)/S_{X,0}(u\miid a,z)$. Similarly, we have that $\gamma_{\eta_\infty, \natural} = \gamma_{0,\natural}$, so that we find that
	\begin{align*}
		&E_0\left\{\phi_{\text{KM},\eta_\infty}(L_{\eta_\infty, \varphi})(O)\miid A=a, Z=z\right\}\\
		&\hspace{1in}=\ \int\frac{L_{\eta_\infty, \varphi}(u,a,z)R_0(u\miid a,z)}{R_\infty(u\miid a,z)}\left\{\frac{F_{1, \infty}(du\miid a,z)}{R_\infty(u\miid a,z)} - \frac{F_{1, 0}(du\miid a,z)}{R_0(u\miid a,z)}\right\}=\,0\\
		&E_0\left\{\phi_{\text{KM},\eta_\infty}(\gamma_{\eta_\infty, \natural})(O)\miid A=a, Z=z\right\}\\
		&\hspace{1in}=\ \int\frac{\gamma_{\eta_\infty, \natural}(u,a,z)R_0(u\miid a,z)}{R_\infty(u\miid a,z)}\left\{\frac{F_{1, \infty}(du\miid a,z)}{R_\infty(u\miid a,z)} - \frac{F_{1, 0}(du\miid a,z)}{R_0(u\miid a,z)}\right\}=\,0\ .
	\end{align*}
	It follows then that 
	\[P_0 \phi_{\eta_\infty, \psi_0}=\frac{1}{\gamma_\infty}E_0\left\{\frac{\mu_0(Z) -\psi_0}{\tilde{S}_0(W\,|\,A,Z)}\right\}=\frac{\gamma_0}{\gamma_\infty}E_0\left[\{\mu_0(Z) -\psi_0\}\frac{\gamma_0(A,Z)}{\gamma_0}\right]=0\ .\]Since $\psi_0$ is the unique solution in $\psi$ of the equation  $P_0\phi_{\eta_\infty,\psi}=0$, we find that $\psi\mapsto \phi_{\eta_\infty,\psi}$ is a proper estimating function for $\psi_0$. The solution $\psi_n^{**}$ of the corresponding cross-fitted estimating equation can then be shown to be consistent for $\psi_0$ using usual estimation equations theory. Alternatively, the explicit form of $\psi_n^{**}$ could be used to study its consistency directly. This argument establishes part (ii) of the theorem.
	
	Finally, we show that $\psi_n^*$ and $\psi_n^{**}$ are asymptotically equivalent under Conditions \ref{ass:pro_conv}--\ref{ass:pro_remainder}, thus completing the proof of part (iii) of the theorem. To do so, we note that the difference between these estimators can be algebraically expressed as
	\[\psi^*-\psi_n^{**}=\frac{1}{K }\sum_{k=1}^K\frac{1}{ n_k}\sum_{i\in\mathcal V_k}(\psi_{k,n}-\psi^{**}_{k,n})\left[1-\frac{1}{\gamma_{k,n}}\frac{1}{n_k}\sum_{i\in\mathcal{V}_k}\left\{\frac{1}{\tilde{S}_{k,n}(W_i\miid A_i,Z_i)}-\phi_{\text{KM},\eta_{k,n}}(\gamma_{k,n,\natural\}})(O_i)\right\}\right].\]
	Conditions~\ref{ass:pro_conv}--\ref{ass:pro_remainder} are sufficient to establish that $\frac{1}{n_k}\sum_{i\in\mathcal{V}_k}\tilde{S}_{k,n}(W_i\miid A_i,Z_i)^{-1}=\gamma_{k,n}+O_P(n_k^{-\frac{1}{2}})$ and $\frac{1}{ n_k}\sum_{i\in\mathcal V_k}\phi_{\text{KM},\eta_{k,n}}(\gamma_{k,n,\natural})(O_i)=O_P(n_k^{-\frac{1}{2}})$. Given that both  $\psi_{k,n}$ and $\psi^{**}_{k,n}$ tend to $\psi_0$ in probability under these conditions, it follows then that $\psi^*_{n} -\psi^{**}_{n} = o_P(n^{-\frac{1}{2}})$, establishing the equivalence between $\psi^*_n$ and $\psi^{**}_n$.
	
	\subsection*{Proof of Theorem~\ref{thm:unif}}
	
	We first show that the stochastic processes $\mathbb B_n := \{n^{\frac{1}{2}}\,[\psi_n(s) - \psi_0(s)]: s\in (0,\tau)\}$ and  $\bar{\mathbb B}_n:=\{n^{-\frac{1}{2}} \sum_i\phi_{s,0}(O_i): s\in (0,\tau)\}$ are asymptotically equivalent.
	Under Condition~\ref{ass:uniform_bounded}, $\bar{\mathbb B}_n$ converges weakly to $\mathbb B_0$ relative to the supremum norm. The limiting distribution $\mathbb B_0 $ is a mean-zero Gaussian process with covariance function $\sigma_0^2(s,t) := \E_0\left\{\phi_{s,0}(O)\phi_{t,0}(O)\right\}$, in view of Theorem 19.3~\citep{van2000asymptotic}. Let $g$ be any function bounded by $M_0$, and Lipschitz with constant $L_0$. We know that $\E_0\left\{g(\bar{\mathbb B}_n)\right\}$ converges to $\E_0\left\{g(\mathbb B_0)\right\}\rvert$ from the convergence of $\bar{\mathbb B}_n$ to $\mathbb B_0$, therefore the inequality
	\begin{align*}
		\lvert \E_0\left\{g(\mathbb B_n)\right\} -\E_0\left\{g(\mathbb B_0)\right\}\rvert 
		\le&\,  \lvert \E_0\left\{g(\mathbb B_n) -g(\bar{\mathbb B}_n)\right\}\rvert + 
		\lvert \E_0\left\{g(\bar{\mathbb B}_n)\right\} -\E_0\left\{g(\mathbb B_0)\right\}\rvert \,
	\end{align*}
	will establish uniform convergence if we can show that $\lvert \E_0\left\{g(\mathbb B_n) -g(\bar{\mathbb B}_n)\right\}\rvert $ tends to zero. We note that condition~\ref{ass:phi_donsker}) and the definition of the function $g$ ensures that 
	\begin{align*}
		\lvert \E_0\left\{g(\mathbb B_n) -g(\bar{\mathbb B}_n)\right\}\rvert
		\le &\, \E_0\left\{\min (L_0 n^{\frac{1}{2}}\lVert r_n\rVert_{\infty}, 2M_0)\right\}\, 
	\end{align*}
	and it straightforward to establish that $\E_0\left\{\min (L_0 n^{\frac{1}{2}}\lVert r_n\rVert_{\infty}, 2M_0)\right\} \rightarrow 0$. Because $\E_0\left\{g(\mathbb B_n)\right\}$ converges in expectation to $\E_0\left\{g(\mathbb B_0)\right\}$ for any bounded and Lipschitz function $g$, we conclude that $\mathbb B_n$ converges weakly to $\mathbb B_0$ relative to the supremum norm.

	Following this result, Condition~\ref{ass:sup_surv} is a modification of Condition~\ref{ass:pro_conv} that ensure the upper bound holds uniformly. We define $\mathbb B_n^*::=\{n^{\frac{1}{2}} \{\psi_n(s) - \psi_0(s)\}/\sigma_0(s): s\in (0,\tau)\}$ and seek to accomplish the task of establishing that $\mathbb B_n^*$ converges weakly to a mean-zero standard process $(\mathbb B_0^*)$ relative to the supremum norm. We define $\sigma_L := \inf_{s\in(0,\tau)}  \sigma_0(s)$, which is strictly positive. By the definition of a Lipschitz function, the following bound holds
	\begin{align*}
		\lvert g(\mathbb B_n^*) - g(\mathbb B_0^*)\rvert
		\le &\, \frac{L_0}{\sigma_L}\lVert \mathbb B_n -\mathbb B_0\rVert_{\infty}\,
	\end{align*}
	which implies that $\mathbb B_n^*$ converges to a standardized Gaussian process defined by $\{\mathbb B_0(s)/\sigma_0(s) : s\in (0,\tau)\}$ when $\mathbb B_n$ converges to $\mathbb B_0$. The above result therefore implies that $\mathbb B_n^*$ converges weakly to a standard mean-zero Gaussian process relative to the supremum norm.

	\subsection*{Proof of Theorem 4}
	For each $(t_0,z_0)\in\mathbb{R}\times\mathcal{Z}$, we denote by $R_n(t_0,z_0):=\tilde{\mathbb{F}}_n(t_0,z_0)-\tilde{\mathbb{F}}_0(t_0,z_0)-\frac{1}{n}\sum_{i=1}^{n}\phi_{t_0,z_0,0}(O_i)$ the remainder from the linearization of $\tilde{\mathbb{F}}_n(t_0,z_0)$ as estimator of $\tilde{\mathbb{F}}_0(t_0,z_0)$. Under the Conditions of Theorem 3, it holds that $\sup_{t_0,z_0}|R_n(t_0,z_0)|=o_P(n^{-\frac{1}{2}})$. We can write that
	\begin{align*}
		\Theta(\mathbb F_n) - \Theta(\mathbb F_0)\ &=\ 
		\partial\Theta\left(\mathbb F_0; \mathbb F_n - \mathbb F_0\right) + o_P(n^{-\frac{1}{2}})\\
		&=\ \partial\Theta\Big{(}\mathbb F_0; (t_0,z_0)\mapsto\frac{1}{n}\sum_{i=1}^n \phi_{t_0,z_0,0}(O_i) + R_n(t,z)\Big{)} + o_P(n^{-\frac{1}{2}})\\
		&=\ \frac{1}{n}\sum_{i=1}^n \partial\Theta\left(\mathbb F_0;(t_0,z_0)\mapsto \phi_{t_0,z_0,0}(O_i) \right) +  \partial\Theta\left(\mathbb{F}_0;(t_0,z_0)\mapsto R_{n}(t_0,z_0)\right) + o_P(n^{-\frac{1}{2}})\\
		&=\ \frac{1}{n}\sum_{i=1}^n \partial\Theta\left(\mathbb F_0;(t_0,z_0)\mapsto \phi_{t_0,z_0,0}(O_i) \right) +  o_P(n^{-\frac{1}{2}})\ ,
	\end{align*}as claimed. The first equality follows from Hadamard differentiability of $\Theta$ at $\tilde{\mathbb{F}}_0$ and Theorem 20.8 of~\cite{van2000asymptotic}. The asymptotic linearity of $\tilde{\mathbb{F}}_n$ is used in the second equality. The third equality follows from the linearity of $h\mapsto \partial\Theta(\tilde{\mathbb{F}}_0;h)$. Finally, the fourth equality follows from the boundedness of $h\mapsto \partial\Theta(\tilde{\mathbb{F}}_0;h)$, which implies that \[|\partial\Theta\left(\mathbb{F}_0;(t_0,z_0)\mapsto R_{n}(t_0,z_0)\right)|=O_P\left(\sup_{t_0,z_0}|R_n(t_0,z_0)|\right)=o_P(n^{-\frac{1}{2}})\ .\]
	
	\section*{Part B: identification}\label{partb}
	
	Before establishing the identification of $\upsilon_0$, we show that components of the ideal data distribution $P_{X,0}$ can be identified in terms of the observed data distribution $P_0$.
	
	Under conditions~\ref{ass:support}--\ref{ass:positivity}, in view of Theorem 11 of~\cite{gill1990survey}, for any $t\in (0,\tau_2(z))$, we can write
	\begin{equation}
		1-F_{X,0}(t\miid a, z)\ =\ S_{X,0}(t\miid a, z)\ =\ \prodi_{u\in [0,
			t)}\left\{1-\Lambda_{X,0}(du\miid a,z)\right\} \label{prodint}
	\end{equation}
	for $\Lambda_{X,0}(t\miid a,z) := \int_0^t\frac{F_{X,0}(du\miid a,z)}{S_{X,0}(u\miid a,z)}$. We first express the observed follow-up time subdistribution function in terms of $P_{X,0}$. Defining $F_0(t\,|\,w,a,z):=P_{X,0}\left(T\leq t\,|\,W=w,A=a,Z=z,T\geq W\right)$ and $G_0(w\,|\,a,z):=P_{X,0}\left(W\leq w\,|\,A=a,Z=z,T\geq W\right)$, we note that \[F_0(dt\,|\,w,a,z)G_0(dw\,|\,a,z)= \frac{ I_{[0,t]}(w)F_{X,0}(dt\,|\,a,z)G_{X,0}(dw\,|\,a,z)}{\int G_{X,0}(u\,|\,a,z)F_{X,0}(du\,|\,a,z)}\]under conditions~\ref{ass:truncation}--\ref{ass:censoring}. So, we have that
	\begin{align*}
		&F_{1,0}(u\miid a,z)\ =\ P_{0}\left(Y\le u, \Delta =1 \miid A=a, Z=z\right)\\
		&\hspace{.15in}=\ P_{X,0}\left(T\le u, T\le C \miid A=a, Z=z, T\ge W\right)\\
		&\hspace{.15in}=\ \iint I_{[0,u]}(t)P_{X,0}\left(C\geq t\,|\,T=t,W=w,A=a,Z=z,T\geq W\right)F_0(dt\,|\,w,a,z)G_0(dw\,|\,a,z)\\
		&\hspace{.15in}=\ \iint I_{[0,u]}(t)Q_0(t\,|\,w,a,z)F_0(dt\,|\,w,a,z)G_0(dw\,|\,a,z)\\
		&\hspace{.15in}=\ \frac{\iint I_{[w,u]}(t)Q_0(t\,|\,w,a,z)F_{X,0}(dt\,|\,a,z)G_{X,0}(dw\,|\,a,z)}{\int G_{X,0}(t\,|\,a,z)F_{X,0}(dt\,|\,a,z)}\\
		&\hspace{.15in}=\ \frac{\int I_{[0,u]}(t)\left\{\int I_{[0,t]}(w) Q_0(t\,|\,w,a,z)G_{X,0}(dw\,|\,a,z)\right\}F_{X,0}(dt\,|\,a,z)}{\int G_{X,0}(t\,|\,a,z)F_{X,0}(dt\,|\,a,z)}\ ,
	\end{align*}which implies that \[F_{1,0}(du\,|\,a,z)=\frac{\int I_{[0,u]}(w)Q_0(u\,|\,w,a,z)G_{X,0}(dw\,|\,a,z)}{\int G_{X,0}(t\,|\,a,z)F_{X,0}(dt\,|\,a,z)}F_{X,0}(du\,|\,a,z)\ .\]
	Next, we can write that 
	\begin{align*}
		&R_0(u\miid a, z)\ =\ P_0(W\le u\le Y\miid A=a, Z=z)\\
		&\hspace{.3in}=\ \int I_{[0,u]}(w)P_{X,0}\left(T\geq u,C\geq u\miid W=w,A=a,Z=z,T\geq W\right)G_0(dw\miid a,z)\\
		&\hspace{.3in}=\ \iint I_{[w,t]}(u)Q_0(u\miid w,a,z)F_0(dt\,|\,w,a,z)G_0(dw\miid a,z)\\
		&\hspace{.3in}=\ \frac{\iint I_{[w,t]}(u)Q_0(u\miid w,a,z) I_{[0,t]}(w)F_{X,0}(dt\miid a,z)G_{X,0}(dw\miid a,z)}{\int G_{X,0}(u\miid a,z)F_{X,0}(du\miid a,z)}\\
		&\hspace{.3in}=\ \frac{\int I_{[0,u]}(w)Q_0(u\,|\,w,a,z)G_{X,0}(dw\,|\,a,z)}{\int G_{X,0}(t\,|\,a,z)F_{X,0}(dt\,|\,a,z)}S_{X,0}(u\miid a,z)\ .
	\end{align*}
	Thus, under conditions~\ref{ass:truncation}--\ref{ass:censoring}, we find that
	\[\tilde\Lambda_0(t\miid a,z)=\int_0^t \frac{F_{0,1}(du\miid a,z)}{R_0(u\miid a,z)}=\int_0^t\frac{F_{X,0}(du\miid a,z)}{S_{X,0}(u\miid a,z)} = \Lambda_{X,0}(t\miid a,z)\ ,\]and so, in view of \eqref{prodint},  $F_{X,0}(t\miid a,z)$ is identified by $\tilde{F}_0(t\miid a,z):=1-\prodi_{u\in(0,t)}\{1-\tilde{\Lambda}(t\miid a,z)\}$. The fact that $F_{X,0}$ is identified directly implies that the target conditional truncation distribution $G_{X,0}$ is itself identified in view of the fact that $G_{X,0}(du\miid a,z) \propto_{a,z}  S_{X,0}(u\miid a,z)^{-1}G_0(du\miid a,z)$, which allows us to write that 
	\begin{align*}
		G_{X,0}(w\miid a,z)=\frac{1}{\gamma_0(a,z)}\int_0^w S_{X,0}(u\miid a,z)^{-1}G_0(du\miid a,z)
	\end{align*} 
	with $\gamma_0(a,z):=\int S_{X,0}(u\miid a,z)^{-1}G_0(du\miid a,z)$ the appropriate normalizing constant. Similarly, identification of $F_{X,0}$ implies identification of the target joint exposure-covariate distribution function $J_{X,0}$ in view of the fact that \[J_{X,0}(da,dz)\propto_{a,z}\frac{J_0(da,dz)}{\int S_{X,0}(u\miid a,z)G_{X,0}(du,a,z)}=\gamma_0(a,z)J_0(da,dz)\]so that $J_{X,0}(da,dz)=\bar{\gamma}_0(a,z)J_0(da,dz)$ with $\bar{\gamma}_0(a,z):=\gamma_0(a,z)/\iint \gamma_0(a,z)J_0(da,dz)$. Of course, the target marginal covariate distribution $H_{X,0}$ is then itself identified from the identification for $J_{X,0}$ through marginalization. Since $\theta_0$ is a functional of $F_{X,0}$ and $H_{X,0}$, the identification of the latter distributions directly implies that of $\theta_0$.

	The observable conditional censoring survival function $Q_{0}$ can also be identified using product-integration as in the well-known context of right censoring without truncation. Defining with some abuse of notation the subdistribution function $F_{0,0}(u\miid w,a,z):=P_0\left(T\leq u,\Delta=0\miid A=a,Z=z\right)$ and at-risk probability function $R_0(u\miid w,a,z):=P_0\left(W\leq u\leq Y\miid W=w,A=a,Z=z\right)$, we have that \begin{align*}
		F_{0,0}(u\miid w,a,z)\ &=\ -\int I(c\leq u)P_0\left(T\geq c\miid W=w,A=a,Z=z,C=c\right)Q_0(dc\miid w,a,z)\\
		&=\ -\int I(c\leq u)P_0\left(T\geq c\miid W=w,A=a,Z=z\right)Q_0(dc\miid w,a,z)\ ,
	\end{align*}which implies that $F_{0,0}(du\miid w,a,z)=-P_0\left(T\geq u\miid W=w,A=a,Z=z\right)Q_0(du\miid w,a,z)$, and furthermore, for $u\geq w$, we have that \begin{align*}
		R_{0}(u\miid w,a,z)\ &=\ P_0\left(Y\geq u\miid W=w,A=a,Z=z\right)\\
		&=\ P_0\left(C\geq u\miid W=w,A=a,Z=z\right)P_0\left(T\geq u\miid W=w,A=a,Z=z\right).
	\end{align*}Thus, we find that $F_{0,0}(du\miid w,a,z)/R_0(u\miid w,a,z)=-Q_0(du\miid w,a,z)/Q_0(u\miid w,a,z)$, which then implies that $Q_0(c\miid w,a,z)$ can be identified by the product-integral \[\Prodi_{c\in[0,u)}\left\{1-\frac{F_{0,0}(du\miid w,a,z)}{R_0(u\miid w,a,z)}\right\}.\]

	\section*{Part C: reparametrization}\label{partc}
	
	We establish the identification, variation independence and reparametrization of the observed data distribution $P_0$ in terms of the chosen nuisance parameters $\eta(P) := (F_{X,0}, Q_{0}, G_0, J_0)$. We first write
	\begin{align*}
		P_0(do) =&\, P_0(dy, d\delta\miid w,a,z)G_0(dw\miid a,z)J_0(da,dz)\,,
	\end{align*}
	which leaves the conditional distribution $P_0(dy, d\delta\miid w,a,z)$ left to identify in terms of the distributions $T\miid A,Z$ and $C\miid W,A,Z,T\ge W$. This is done by noting the following equality
	\begin{align*}
		P_0(Y\le y, \Delta = 1\miid w,a,z) 
		=&\, \int P_0(T\le c, T\le y\miid C=c, W=w, A=a, Z=z)P_0(dc\miid w,a,z)\\
		(A1)=&\, \int P_0(T\le c\wedge y\miid  W=w, A=a, Z=z)P_0(dc\miid w,a,z)\\
		(A2)=&\, \int \frac{P_{X,0}(w\le T\le c\wedge y\miid  A=a, Z=z)}{S_{X,0}(w\miid a,z)}P_0(dc\miid w,a,z)\, ,
	\end{align*}
	which gives the desired result for $\delta =1$. Similarly for $\delta = 0$, we have
	\begin{align*}
		P_0(Y\le y, \Delta = 0\miid w,a,z) 
		=&\, \int P_0(C\le t, C\le y\miid  W=w, A=a, Z=z)F_0(dt\miid w,a,z)\\
		=&\, \int [1-Q_0(c\wedge y\miid w,a,z)]\frac{F_{X,0}(dt\miid a,z)}{S_{X,0}(w\miid a,z)}
	\end{align*}
	Using the results from identification we can write the observed at risk probability as 
	\begin{align*}
		\tilde R_0(u\miid a,z) := S_{X,0}(u\miid a,z)\int_0^uQ_0(u\miid w,a,z)\frac{G_0(dw\miid a,z)}{S_{X,0}(w\miid a,z)}\, ,
	\end{align*}
	which together with the previous results gives a representation of the parameter and influence function as functionals of $\eta(P)$. 
	
\end{document}